\documentclass[11pt, draftclsnofoot, peerreview, onecolumn]{IEEEtran}

%\documentclass{IEEEtran}

%\IEEEoverridecommandlockouts
%
%\IEEEoverridecommandlockouts
%
\usepackage{bbm}
\usepackage{lineno}
\usepackage{amssymb}
\usepackage{amsmath}
\usepackage{amsthm}
\usepackage{epsfig}
\usepackage{graphicx}
\usepackage{graphics}
\usepackage{float}
\usepackage{subfigure}
\usepackage{multirow}
\usepackage{color}

\usepackage{makeidx}
\usepackage{xspace}
\usepackage{wrapfig}
\usepackage{lipsum}
\usepackage{mathtools}
\usepackage{cuted}
\usepackage{cite}
\usepackage{amsfonts}
\usepackage{textcomp}
\usepackage{enumerate}

\usepackage{textcomp}
\usepackage{enumerate}
\usepackage{algpseudocode}
\usepackage{algorithm}

\usepackage[dvipsnames]{xcolor}

%%%%%%%%%%%%%%%%%%%%%
\makeindex
\theoremstyle{plain}

\newtheorem{Theorem}{Theorem}
\newtheorem{Proposition}{Proposition}
\theoremstyle{definition}
\newtheorem{Definition}{Definition}
\newtheorem{corollary}{Corollary}
\newtheorem{Lemma}{Lemma}

\newtheorem{remark}{Remark}
\newtheorem{Proof of Lemma}{Proof of Lemma}
%%%%%%%%%%%%%%% macros by  %%%%%%%%%%%%%

\makeatletter

\renewcommand\@endtheorem{\vvv@endmarker\endtrivlist\@endpefalse}
\newcommand\vvv@endmarker{%
  {\unskip\nobreak\hfil\penalty50
  \hskip2em\vadjust{}\nobreak\hfil\openbox
  \parfillskip=0pt \finalhyphendemerits=0 \par
  \penalty 10000 \parskip=0pt\noindent}\ignorespaces}
\makeatother

\definecolor{darkred}{rgb}{1, 0.1, 0.3}
\definecolor{darkblue}{rgb}{0.1, 0.1, 1}
\definecolor{darkgreen}{rgb}{0,0.6,0.5}

%%%%%%%%%%%%%% def %%%%%%%%%%%%%

\def \T {\mathcal{T}}
\def \X {\mathcal{X}}

\def \W {\mathcal{W}}

\def \I {\mathcal{I}}
\def \J {\mathcal{J}}
\def \F {\mathbb{F}}

\def \C {\mathcal{C}}

\def \x {\mathbf{x}}
\def \y {\mathbf{y}}

\allowdisplaybreaks

\def\BibTeX{{\rm B\kern-.05em{\sc i\kern-.025em b}\kern-.08em
    T\kern-.1667em\lower.7ex\hbox{E}\kern-.125emX}}

\cleardoublepage
\setcounter{page}{1}
\begin{document}
    \title{Multi-User Linearly-Separable\\ Distributed Computing\\
      \thanks{%
     This work was supported by the European Research Council (ERC)
through the EU Horizon 2020 Research and Innovation Program under Grant
725929 (Project DUALITY).}
     \thanks{%
     The authors are with the Communication Systems Department at
EURECOM, 450 Route des Chappes, 06410 Sophia Antipolis, France (email:
ali.khalesi@eurecom.fr; elia@eurecom.fr).}
    }

    \author{\IEEEauthorblockN{Ali Khalesi, Petros Elia}}
    \maketitle

\maketitle
\begin{abstract}
    In this work, we explore the problem of multi-user linearly-separable distributed computation, where $N$ servers help compute the desired functions (jobs) of $K$ users, and where each desired function can be written as a linear combination of up to $L$ (generally non-linear) subtasks (or sub-functions). Each server computes some of the subtasks, communicates a function of its computed outputs to some of the users, and then each user collects its received data to recover its desired function.  We explore the computation and communication relationship between how many servers compute each subtask vs. how much data each user receives.

    For a matrix $\mathbf{F}$ representing the linearly-separable form of the set of requested functions, our problem becomes equivalent to the open problem of sparse matrix factorization $\mathbf{F} = \mathbf{D}\mathbf{E}$ over finite fields, where a sparse decoding matrix $\mathbf{D}$ and encoding matrix $\mathbf{E}$ imply reduced communication and computation costs respectively. This paper establishes a novel relationship between our distributed computing problem, matrix factorization, syndrome decoding and covering codes. To reduce the computation cost, the above $\mathbf{D}$ is drawn from covering codes or from a here-introduced class of so-called `partial covering' codes, whose study here yields computation cost results that we present. To then reduce the communication cost, these coding-theoretic properties are explored in the regime of codes that have low-density parity check matrices. The work reveals --- first for the commonly used one-shot scenario --- that in the limit of large $N$, the optimal normalized computation cost $\gamma  \in (0,1)$ is in the range $\gamma  \in (H_{q}^{-1}(\frac{\log_q(L)}{N}), H_q^{-1}(K/N))$ --- where $H_{q}$ is the $q$-ary entropy function --- and that this can be achieved with normalized communication cost that vanishes as $\sqrt{\log_q(N)}/N$. The above reveals an unbounded coding gain over the uncoded scenario, as well as reveals the role of a certain \emph{functional rate} $\log_q(L)/N$ and \emph{functional capacity} $H_{q}(\gamma)$ of the system. In the end, we also explore the multi-shot scenario, for which we derive bounds on the computation cost.
\end{abstract}

\begin{IEEEkeywords}
\textbf{Distributed computing, Linearly separable
functions, Coding theory, Sparse matrix factorization, Covering codes, Straggler mitigation, Distributed gradient coding.}
\end{IEEEkeywords}

\section{Introduction}
Distributed computing plays an ever-increasing role in speeding up non-linear and computationally-hard computing tasks. As the complexity of these tasks increases, research seeks novel parallel processing techniques to efficiently offload computations to groups of distributed servers, under various frameworks such as MapReduce \cite{dean2008mapreduce} and Spark \cite{zaharia2010spark}. Distributed computing naturally entails several challenges that involve accuracy~\cite{jahani2021codedsketch,wang2021price,ozfatura2021coded}, scalability~\cite{li2017scalable,haddadpour2019trading,yang2020coded,charalambides2021numerically,soleymani2021analog}, privacy and security~\cite{sun2018capacity,soleymani2020distributed,khalesi2021capacity,soleymani2020privacy,soleymani2021list,bitar2022adaptive,hofmeister2021secure,akbari2021secure,jia2021capacity,chen2021gcsa,yang2021coded,yu2020coded,xhemrishi2022distributed}, as well as latency and straggler mitigation~\cite{raviv2020gradient,lee2017speeding,egger2022efficient,kai1,yu2020straggler,yu2017polynomial,jia2021cross,behrouzi2020efficient}.  For a detailed survey of related research works, the reader is referred to~\cite{ng2020survey,CIT-103}. \nocite{8051074,8437333}

This aforementioned effort to efficiently distribute computation load across multiple servers, is intimately intertwined with the concept of communication complexity which refers to the amount of communication required to solve a computation problem when the desired task is distributed among two or more parties~\cite{yao2009communication}. This celebrated computation-vs-communication relationship has been studied in a variety of different forms and scenarios~\cite{verbraeken2020survey,ulukus2022private,wang2018fundamental,li2017fundamental,wan2022cache,yu2017polynomial2,dutta2019optimal,reisizadeh2021codedreduce,woolsey2021new,woolsey2021coded,woolsey2021practical,egger2022efficient,chen2021distributed} for various types of problems.

 \paragraph{Preliminary description of setting}
This same relationship between computation and communication costs, is the topic of interest in our work here for the very broad and practical setting of multi-user, multi-server computation of linearly-separable functions.
Such functions appear in several classes of problems such as for example in training large-scale machine learning algorithms and deep neural networks with massive data~\cite{verbraeken2020survey}, where indeed both computation and communication costs are crucial~\cite{zinkevich2010parallelized,chilimbi2014project}.

In particular, our setting here considers a master node that manages $N$ server nodes that must contribute in a distributed manner to the computation of the desired functions of $K$ different users. Under the linearly-separable assumption (cf.~\cite{kai1}), we consider that user $k \in \{1,2,\hdots, K\}$ demands a function $F_{k}(D_1,D_2,\hdots,D_L)$ that takes as input $L$ datasets $D_1,D_2,\hdots,D_L$, and that each such requested function takes the basic form
\begin{align} \label{linearlySep1}
    F_{k}(D_1,\hdots,D_L) = \sum^{L}_{\ell=1} f_{k,\ell} f_{\ell}(D_\ell) = \sum^{L}_{\ell=1} f_{k,\ell} W_\ell
\end{align}
where in the above, $W_\ell = f_\ell(D_\ell)$ denotes the computed output of a subfunction when the input is $D_\ell$, and where $f_{k,\ell}$ are the combining coefficients which belong, together with the entries of $W_\ell$, in some finite field\footnote{The setting nicely includes the case where each $F_k$ itself is a linear combination of some linearly separable functions, i.e., where $F_k$ can itself be written as $ F_{k}(D_1,\hdots,D_L) = \sum^{L}_{\ell=1}\sum^{M}_{i=1} f_{k,\ell,i} f_{\ell,i}(D_\ell) = \sum^{L}_{\ell=1}\sum^{M}_{i=1} f_{k,\ell,i}  W_{\ell,i}$, corresponding to some set of basis subfunctions $f_{\ell,i}(D_\ell)$. For simplicity we will henceforth refer to the model in~\eqref{linearlySep1}.}.
Upon notification of the users' requests --- where these requests are jointly described by the $K\times L$ matrix $\mathbf{F}$ that contains the different coefficients $f_{k,\ell}$ --- the master instructs the servers to compute some of the subfunctions $f_\ell(D_\ell)$. Each server may naturally compute a different number of functions. Upon completing its computations, each server communicates linear combinations of its locally computed outputs (files) to carefully selected subsets of users. Each user can then only linearly combine what it receives from the servers that have transmitted to it, and the goal is for each user to recover its desired function. The problem is completed when every user $k$ retrieves its desired $F_{k}(D_1,\hdots,D_L)$.

We note that there is a clear differentiation between the server nodes that are asked to compute hard (generally non-linear) component functions (subfunctions), and the users that can only linearly combine their received outputs. Generating the so-called output files $W_\ell =  f_\ell(D_\ell), \ell \in \{1,2,\dots,L\}$, can be the result of a computationally intensive task that may for example relate to training a deep learning model on a dataset, or it can relate to the distributed gradient coding problem \cite{tandon2017gradient, ye2018communication,raviv2020gradient,halbawi2018improving}, the distributed linear-transform computation problem~\cite{dutta2016short,wang2018fundamental}, or even the distributed matrix multiplication and
the distributed multivariate polynomial computation problems~\cite{ramamoorthy2019universally,das2019distributed,haddadpour2018codes,lee2017speeding,yu2017polynomial,wang2018coded,yu2020straggler,dutta2019optimal,ramamoorthy2020straggler,jia2021cross}.

\paragraph{Brief summary of the basic ingredients of the problem}

Our setting brings to the fore the following crucial questions.
\begin{itemize}
    \item How many and which servers must compute each subfunction $f_\ell(D_\ell)$?
    \begin{itemize}
    \item
    This decision defines the computation cost: the more the servers that compute a subfunction, the higher the computation cost. The extreme centralized scenario where each active server would compute all $L$ sub-functions, would imply a maximal computation cost, but a minimal communication cost, equal to (as we can see) one transmission received per user. The other extreme scenario (for the case of $L=N$) would imply a minimal computation cost of $1$ subfunction per server, but a maximal communication cost of $N$ shots received per user.
    \end{itemize}
    \item What linear combinations of its computed outputs must each server generate?
        \begin{itemize}
            \item These linear combination coefficients in question, define an $N\times L$ matrix $\mathbf{E}$ that describes which servers compute each subfunction, and how each server combines its computed outputs in order to transmit them. This matrix must be designed in consideration of the requested functions, which are themselves described by the aforementioned $K\times L$ matrix $\mathbf{F}$.
            \item The number of non-zero elements in $\mathbf{E}$ reflects the computation cost on the collective of servers.
        \end{itemize}
    \item What fraction of the servers must each user get data from, and from which servers?
    \begin{itemize}
        \item This defines the communication cost. The more data each user gets, the higher the cost.
    \end{itemize}
    \item How must each user combine (linearly decode) the computed outputs arriving from the servers?
    \begin{itemize}
        \item This step is determined by a $K\times N$ decoding matrix $\mathbf{D}$ that must be carefully designed. The number of non-zero elements of $\mathbf{D}$ reflects our communication cost. Having a non-sparse $\mathbf{D}$, implies the need to activate a substantial fraction of the existing communication links.
    \end{itemize}
    \item How sparse can $\mathbf{D}$ and $\mathbf{E}$ be so that each user recovers their desired function?
    \begin{itemize}
        \item This defines the overall costs in computation and communication. As one might expect, the larger the number $L$ of possible subtasks/datasets, the higher the worst-case costs. Having a larger $L$ allows the computing service to provide more refined computations, conceivably though at a higher cost.
    \end{itemize}
\end{itemize}

To answer these questions, we take a novel approach that employs coding theory. The general idea behind our approach is described as follows.
\paragraph{Brief summary of the new connection to sparse matrix factorization and covering codes}
\begin{itemize}
\item \emph{Connection to the problem of matrix factorization into sparse components:} First, when exploring our distributed computing problem, one can see that the feasibility conditions that ensure that each user recovers its desired function, constitute in fact a (preferably sparse) matrix factorization problem of the form
 \begin{align}
     \mathbf{D} \mathbf{E} = \mathbf{F}\label{main-intro}
 \end{align}
 where the problem is over some $q$-sized finite field $\mathbb{F}$, and where any potential sparsity of $\mathbf{D}$ and $\mathbf{E}$ translates to savings in communication and computation costs respectively.

\item \emph{Connection to coding theory and syndrome decoding:} To then resolve this problem in a manner that yields non-trivial sparse factors, we notice that --- if for example, we were to fix the above matrix $\mathbf{D}$, and associate this to the parity-check matrix of some linear code --- then for each column $\mathbf{E}_\ell$ of $\mathbf{E}$ and associated column $\mathbf{F}_\ell$ of $\mathbf{F}$, the corresponding equation $\mathbf{D}\cdot \mathbf{E}_\ell=\mathbf{F}_\ell$ would tells us that the desired sparse $\mathbf{E}_\ell$ can be the lowest-weight coset leader whose syndrome is equal to $\mathbf{F}_\ell$. Hence, under this analogy, the columns of $\mathbf{E}$ are associated to error vectors, the columns of $\mathbf{F}$ to the corresponding syndromes, and $\mathbf{D}$ is assigned the role of a parity check matrix, and the question is of which code?
\item \emph{Connection to covering codes and the new class of partial covering codes:} The above connection with syndromes, in turn brings about the concept of covering codes that refer to codes with good covering properties, which in turn entail low weight $\mathbf{E}_\ell$, which is what we need. In (error-control) coding theory though --- which generally considers that any error vector is possible --- such covering codes consider a full space of possible syndromes, i.e., consider the case where any appropriately-dimensioned vector can indeed be a syndrome. To account for the fact  that $\mathbf{F}$ corresponds to a \emph{restricted} set of syndromes (only those that correspond to the columns of our $\mathbf{F}$), we here consider a new class of \emph{partial covering codes}, the analysis of which is part of this work.
\item \emph{Connection with codes having low-density parity-check matrices:} The above effort yields a sparse $\mathbf{E}$. Our effort is concluded when the aforementioned exploration of covering codes and partial covering codes (which yielded a sparse $\mathbf{E}$), is extended to involve analysis of codes with a sparse $\mathbf{D}$ as well.
\item \emph{Extending the one-shot scenario:} Our framework allows us to address but also extend the one-shot scenario which is the scenario of choice in various works (see for example~\cite{kai1}) and which, in our case, asks that each server can send only one linear combination to one set of users. We extend this model to the practical and realistic scenario where, for a fixed subset of subfunctions/files $\{f(D_\ell)\}$ computed locally at each server, the server can
communicate linear combinations to various sets of users.
\end{itemize}

 \paragraph{Highlights of contributions}
Our focus is on establishing the normalized computation\footnote{Both communication and computation costs will be defined in more detail later on. Also, in the following, $\omega(\cdot )$ represents the well known Hamming weight of the argument vector or matrix.} cost $\gamma = \frac{1}{N}\underset{l \in \{1,\dots,L\}}{\max} \:\omega(\mathbf{E}(:,l))$, and the normalized communication cost $\delta = \omega(\mathbf{D})/KN$. In our setting, $\gamma \in(0,1]$ represents the maximum fraction of all servers that must compute any one subfunction, while $\delta \in(0,1]$ represents the average fraction of servers that each user gets data from, which in turn simply implies an average number of $\Delta = \delta N$  `symbols' received by each user.

We first consider the one-shot case. We proceed to highlight some of the derived results, whose exact statement can be found in the following sections.

\begin{itemize}
\item Theorem~\ref{Bridge} makes the connection between coding theory and our distributed computing problem, by showing that {\color{black}a $(\gamma,\delta)$-feasible distributed computing scheme exists if and only if the decoding matrix $\mathbf{D}$ has degree of sparsity $\delta$ and is the parity check matrix of an $N$-length code $\mathcal{C}\subset \F^{N}$ over a field $\F$ where this code has minimum normalized distance from each vector $\{\mathbf{x} \in \F^{N}| \mathbf{D} \mathbf{x} = \mathbf{F}(:,\ell), \ell \in \{1,\dots,L\}\}$ that is at most $\gamma N$. }This brings to the fore the concept of covering and partial covering codes, where covering codes are codes that guarantee a minimum distance to each vector of the entire vector space, while partial covering codes must guarantee a minimum distance to only a specific subset of the entire space. Establishing the properties of such codes is key to our problem.
     \item Theorem~\ref{Achievability} shows that in the limit of large $N$, the optimal computation cost per server is in the range $\gamma \in (H_{q}^{-1}(\frac{\log_q(L)}{N}), H_q^{-1}(K/N))$, where $H_{q}$ is the entropy function over our field of size $q$. This theorem reveals the role of what one might refer to as the \emph{functional rate} $R_f  = \log_q(L)/N$. The higher this rate, the more `involved' is the space of functions we can compute. In this sense --- given that, from the above, $\frac{log_q(L)}{N} \leq H_{q}(\gamma)$ --- the expression $H_{q}(\gamma)$ plays the role of an upper bound on what one might call the \emph{functional capacity} of the system.
    \item {\color{black} Extending the famous covering codes theorem of Blinovskii from}~\cite{blinovskii1987lower}, we established  our bounds on partial covering codes to the setting of codes with low density parity check matrices, revealing that any aforementioned achievable computation cost $\gamma$, can be achieved with normalized communication cost that vanishes\footnote{We will henceforth use $\doteq$ to denote asymptotic optimality. This will be clarified later on.} as $\delta \doteq \sqrt{\log_q(N)}/N$. This latter cost will be unboundedly lower than in the uncoded approach of resource-sharing between the two extreme regimes discussed previously in the introduction (See Figure~\ref{Comparison-fig} in Section~\ref{Discussion}). As a consequence, we can talk of an unbounded coding gain in our distributed computing problem.
     \item We also consider the multi-shot scenario where, for the same fixed subset of subtasks/files $\{f_\ell(D_\ell)\}$ computed locally at each server, now the server can communicate different linear combinations to different sets of users. This ability offers a certain degree of refinement that the single-shot scenario may lack. This is exploited, and Theorem~\ref{Achievability-multi} reveals a range of parameters for which the multi-shot approach provides computation savings over the single-shot scenario. Interestingly, these computational savings are shown to be unbounded.
 \end{itemize}

\subsection{Paper Organization}
  The rest of the paper is organized as follows. Section~\ref{System-Model} introduces the model for multi-user distributed computing of linearly separable functions. Section~\ref{Formulating} formulates our problem, focusing on the single-shot scenario, for which Section~\ref{Single-Shot} presents the main results. This latter section first makes the connection to coding theory, and then presents the converse and achievability for the computation cost, as well as the achievable bound on the communication cost.  Section~\ref{Discussion} offers some insights including a discussion on the gains due to coding. Subsequently, in Section~\ref{Multi-Shot}, we present our proposed achievable multi-shot scheme and the corresponding results, and finally we conclude in Section~\ref{conclusion}. The appendices are in the subsequent sections.

\noindent{\bf Notations:}
We define $[n] \triangleq \{1,2,\hdots , n\}$.
For matrices $\mathbf{A}$ and $\mathbf{B}$, $[\mathbf{A},\mathbf{B}]$ indicates the horizontal concatenation of the two matrices.
For any matrix $\mathbf{X} \in \F^{m \times n}$, then $\mathbf{X}(i,j),\: i \in [m],\: j \in [n]$, represents the entry in the $i$th row and $j$th column, while $\mathbf{X}(i,:),\: i \in [m]$, represents the $i$th row, and $\mathbf{X}(:,j),\: j \in [n]$ represents the $j$th  column of $\mathbf{X}$. For two index sets $\I\subset [m], \J\in[n]$, then $\mathbf{X}(\I,\J)$ represents the sub-matrix comprised of the rows in $\I$ and columns in $\J$. We will use $\omega(\mathbf{X})$ to represent the number of nonzero elements of some matrix (or vector) $\mathbf{X}$. We denote the finite field $\mathbf{GF}{(q)}$ as $\mathbb{F}$. For any code $\mathcal{C} \subseteq \F^{n}$ and any vector $\mathbf{x} \in \F^{n}$, we use $d(\mathbf{x},\mathcal{C})$ to represent the Hamming distance of $\mathbf{x}$ to the nearest codeword in $\mathcal{C}$. We will dedicate the use of the letter $\rho$ when referring to normalized covering radii, and we will often use $\rho(\mathcal{C})$ to indicate that this is the normalized covering radius of a specific code $\mathcal{C} \subset \F^{n}$. We will often use the notation $\mathcal{C}_{\mathbf{H}}$ to refer to a code whose parity check matrix is $\mathbf{H}$, and similarly, we will use $\mathbf{H}_{\mathcal{C}}$ to refer to a matrix that serves as the parity-check matrix of a specific linear code $\mathcal{C}$. For some $k\leq n,\: k,n \in \mathbb{N}$, we will also often use the notation $\mathcal{C}(k,n)$ to emphasise that a linear code has message length $k$ and codeword length $n$.
For any two codes $\mathcal{C}_1$  and $\mathcal{C}_2$, we will use $[\mathcal{C}_1, \mathcal{C}_2]$ to represent the code resulting from their direct product. For some vector $\mathbf{x} \in \F^{n}$, we will use $\mathcal{C}_2 = <\mathbf{x},\mathcal{C}_1>$ to represents a code whose basis is the union of $\mathbf{x}$ with the basis of $\mathcal{C}_1$. Furthermore $V_q(n,\rho)$ will represent the volume of a Hamming ball in $\mathbb{F}^{n}$ of radius $\rho n$. For $0\leq x\leq 1 -\frac{1}{q}, x \in \mathbb{R}$, we will use $H_q(x) \triangleq  x \log_q(q-1) - x \log_q(x) - (1-x) \log_q(1-x)$ to represent the $q$-ary entropy function, while when $q=2$ we will use the simplified notation $H(x)$. We will use $\sup(\mathbf{x}^{\intercal})$ to represent the support of some vector $\mathbf{x}^{\intercal} \in \F^{n}$, describing the set of indices of non-zero elements. We will also use the notation $\epsilon(n)$ to represent an expression which, in the limit of large $n$, vanishes to zero.

\section{System Model}\label{System-Model}
We consider the multi-user linearly-separable distributed computation setting (cf.~Fig.~\ref{Fig: System Model}), which consists of $K$ users/clients, $N$ active (non-idle) servers, and a master node that coordinates servers and users.  A main characteristics of this setting is that the tasks performed at the servers, substantially outweigh in computation cost the linear operations performed at the different users. Another defining characteristic is that the cost of having the servers communicate to the users is indeed non-trivial.
We consider the setting where each server can use ${T}$ consecutive time slots to communicate different messages to different subsets of users, where in particular, during time-slot (shot) $t\in[T]$, server $n$ communicates to some arbitrary user-set $\T_{n,t} \subset [K]$, via a dedicated broadcast channel.

In our setting, each user asks for a (generally non-linear) function from a space of linearly separable functions, where each such function takes several datasets as input. Each desired function can be decomposed into a different linear combination of individual (again generally non-linear, and computationally hard) sub-functions $f_{\ell}(D_\ell)$ that each take a single dataset $D_\ell$ as input.
Consequently the demanded function $F_k(D_1,\dots,D_L)$ of each user $k \in[K]$, is a function of $L$ independent datasets $D_\ell,\: \ell \in[L]$, and it takes the general linearly-separable form
\begin{align}
    F_k(D_1,D_2,\hdots,D_L)&\triangleq f_{k,1}f_{1}(D_1) + f_{k,2}f_{2}(D_2) +\hdots +f_{k,L}f_{L}(D_\ell),\:\: k \in [K]\\
    &=f_{k,1}W_1 + f_{k,2}W_2 +\hdots +f_{k,L}W_L,\:\: k \in [K]\label{DefinitionOfLSFunctions}
\end{align}
where, as previously discussed, $W_\ell = f_{\ell}(D_\ell) \in \F ,\: \ell \in [L]$ is a so-called `file' output, and  $f_{k,\ell} \in \F ,\: k \in [K], \ell \in [L]$ are the linear combination coefficients.  As also mentioned before, $F_k$ itself can be a linear combination of some linearly separable functions. %, thus taking the form $F_k = \sum^{M}_{i=1}F_k^{(i)}$ for $F^{(i)}_k$ being themselves linearly separable.

\begin{subsection}{Phases}
The model involves three phases, with the first being the \emph{demand phase}, then the \emph{assignment and computation phase} and then the \emph{transmission and decoding phase}. In the {demand phase},  each user $k \in [K]$ sends the information of its desired function $F_k(.)$ to the master node, who then deduces the linearly-separable decomposition of this function according to~\eqref{DefinitionOfLSFunctions}.
Then based on these $K$ desired functions, during the assignment and computation phase, the master assigns some of the subfunctions to each server, who then proceeds to compute these and produce the corresponding files $W_\ell = f_{\ell}(D_\ell)$. In particular, each subfunction $f_{\ell}(D_\ell)$ will be assigned to the servers belonging to some carefully chosen server-set $\W_\ell\subset [N]$.

During the transmission phase, each server $n \in [N]$ broadcasts during time slots $t=1,2,\dots,T$, different linear combinations of the locally computed output files, to different subsets of users $\T_{n,t}$. In particular, during time slot $t$, each server $n$ transmits
  \begin{align}
  z_{n,t}\triangleq \sum_{\ell \in [L]} e_{n,\ell,t} W_\ell,\:\: n\in [N] , t \in  [T] \label{EncodedFiles}
  \end{align}
  where the so-called encoding coefficients $e_{n,\ell,t}\in \F$ are determined by the master.
 Finally during the decoding part, each user $k$ linearly combines the received signals as follows
\begin{align}
    F'_{k} \triangleq \sum_{n \in [N], t \in [T]} d_{k,n,t} z_{n,t} \label{DecedFiles}
\end{align}
 for some decoding coefficients $d_{k,n,t} \in \F, n\in [N], t\in [T]$, determined again by the master node. Naturally $d_{k,n,t} =0,\forall k \notin \mathcal{T}_{n,t}$.  Decoding is successful when $F'_k = F_k$ for all $k\in[K]$.

  \begin{figure}
      \centering
      \includegraphics[scale=0.7]{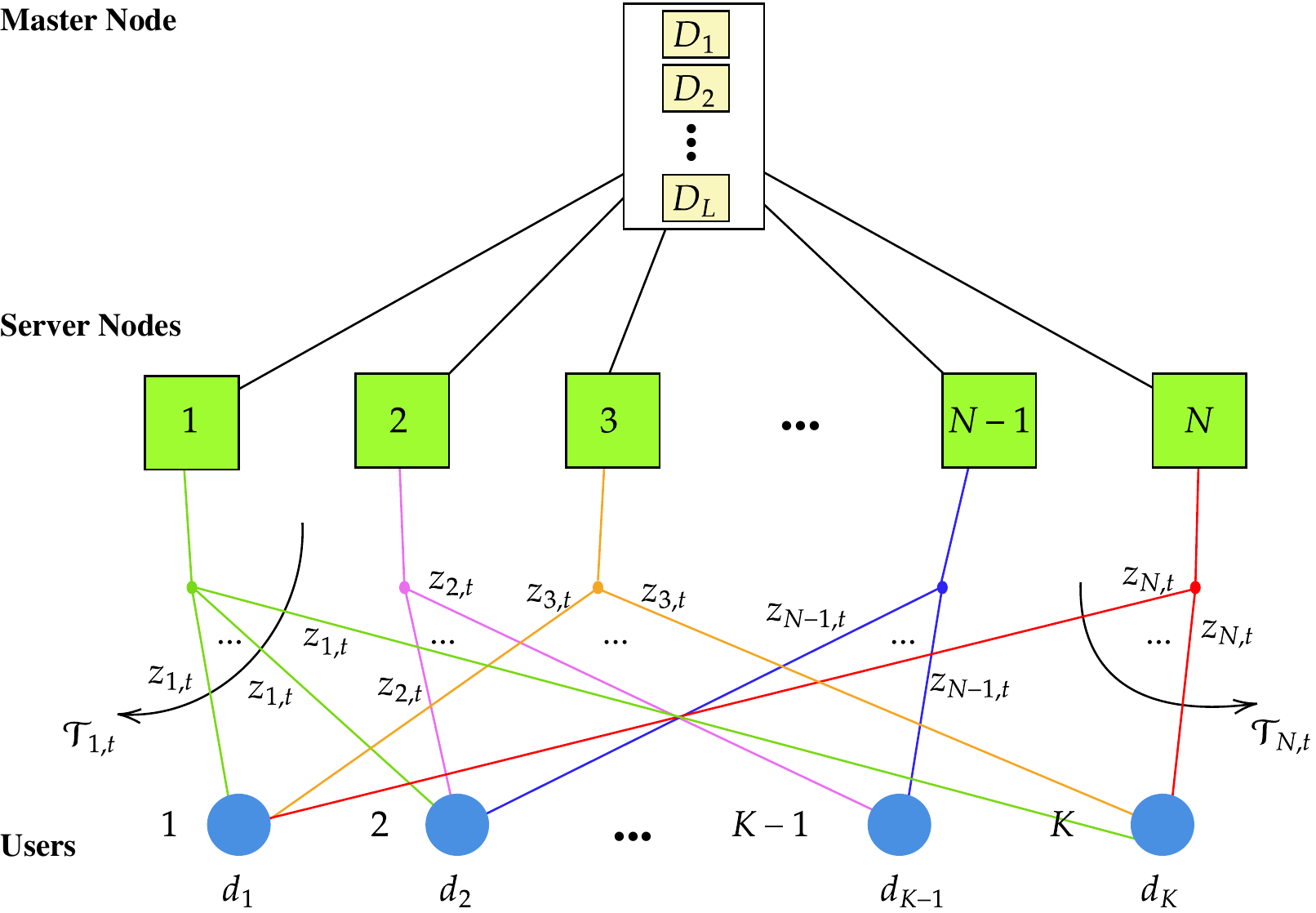}
      \caption{The $K$-user, $N$-server, linearly separable computation setting. After each user informs the master of its desired function $F_k(.)$, each component subfunction $W_\ell = f_{\ell}(D_\ell)$ is computed at each server in $\W_\ell\subset [N]$. During slot $t$, each server $n$ broadcasts a linear combination ${z}_{n,t}$ (of the locally available computed files) to all users in $\T_{n,t}$. This combination is defined by the coefficients $e_{n,\ell,t}$. Finally, to decode, each user $k \in [K]$ linearly combines (based on decoding vectors $\mathbf{d}_k$) all the received signals from all the slots and servers it has received from. Decoding must produce for each user its desired function $F_k(D_1,\dots,D_L)$.}
      \label{Fig: System Model}
  \end{figure}

  \end{subsection}

 \begin{subsection}{Computation and Communication Costs}

Remembering that $|\W_{\ell}|$ indicates the number of servers that compute a subfunction $W_{\ell} = f_\ell(D_{\ell}),\: \ell \in [L]$, our \emph{normalized computation cost} metric takes the form
  \begin{align}
      \gamma \triangleq \frac{\underset{\ell \in [L]}{\max} |\W_{\ell}|}{N}\label{ComputationCost}
  \end{align}
and represents the maximum fraction of all servers that must compute any subfunction.

We also formally define the \emph{normalized communication cost} as
  \begin{align}
     \delta  \triangleq \frac{\sum^{T}_{t=1}\sum^{N}_{n=1}|\mathcal{T}_{n,t}|}{KN}\label{comunication-cost2}
  \end{align}
to represent the average fraction of servers that each user gets data from\footnote{We here clarify that our setting implies that any link can be exploited, and our metric simply captures how many of these links are engaged when communicating. Reducing the communication cost implies activating fewer of these links, leaving the rest to be used for other responsibilities of the computing network.},\footnote{The observant reader may notice the computational cost being a worst-case cost, unlike the communication cost which refers to the average case. This choice is essential in making the connection to coding theory. This same choice though has an advantage; it allows us to better capture the effect of having some subfunctions that are much harder to compute than others.}. Hence in our setting,
  \begin{align}
     \Delta \triangleq \delta N\label{comunication-cost}
  \end{align}
represents the average number of transmitted `symbols' received by each user.
   \end{subsection}
   We wish to provide schemes that correctly compute the desired functions, at reduced computation and communication costs.

\section{Problem Formulation: One-Shot Setting}\label{Formulating}
In this single-shot setting of $T=1$, we will remove the use of the index $t$. Thus the transmitted value from~\eqref{EncodedFiles} will take the form \begin{align}
z_{n} = \sum_{\ell \in [L]} e_{n,\ell} W_\ell,\:\: n\in [N] \label{transmissionSingleSHOT}
\end{align}  where $e_{n,\ell}\in \F$ will denote the corresponding encoding coefficients, and where each such transmitted value at server $n$ will now be destined for the users in set $\T_n$. Similarly, the decoding value at each user $k$ (cf.~\eqref{DecedFiles}) will take the form $F'_{k} \triangleq \sum_{n \in [N]} d_{k,n} z_{n}$, where now $d_{k,n}, n\in [N]$, are the decoding coefficients. The desired functions $F_k(.)$ (cf.~\eqref{DefinitionOfLSFunctions}), their linear decomposition coefficients $f_{k,\ell}$ (cf.~\eqref{DefinitionOfLSFunctions}), and the decoded functions $F'_k(.)$ in~\eqref{DecedFiles}, remain as previously described.
With the above in place, we will use
\begin{align}
      \mathbf{f}&\triangleq [F_1,F_2,\hdots,F_K]^{\intercal} \\
      \mathbf{f}_k &\triangleq [f_{k,1},f_{k,2},\hdots,f_{k,L}]^{\intercal},\: k \in [K]\label{function-vectors-1}\\
    \mathbf{w}&\triangleq [W_{1},W_{2},\hdots,W_{L}]^{\intercal}\label{message-vectors-1}
\end{align}
where $\mathbf{f}$ represents the vector of the output demanded functions (cf.~\eqref{DefinitionOfLSFunctions}), $\mathbf{f}_k$ the vector of function coefficients for user $k$ (cf.~\eqref{DefinitionOfLSFunctions}), and $\mathbf{w}$ the vector of output files.
We also have
\begin{align}
\mathbf{e}_{n} &\triangleq [e_{n,1},e_{n,2},\hdots, e_{n,L}]^\intercal,\: n \in [N] \label{encoding-vectors-per-shot}\\
     \mathbf{z}& \triangleq [z_{1}, z_{2},\hdots, z_{N}]^ \intercal\label{linear-combinations}
\end{align}
respectively representing the encoding vector at server $n$, and the overall transmitted vector across all the servers (cf.~\eqref{transmissionSingleSHOT}).
Furthermore, we have
\begin{align}
     \mathbf{d}_{k} &\triangleq [d_{k,1},d_{k,2},\hdots, d_{k,N}]^ \intercal,\: k \in [K] \label{decoding-vectors-per-shot-1}\\
      \mathbf{f}'&\triangleq [F'_1,F'_2,\hdots,F'_K]^{\intercal}
\end{align}
respectively representing the decoding vector at user $k$, and the vector of the decoded functions across all the users.
In addition, we have
\begin{align}
      \mathbf{F}& \triangleq [\mathbf{f}_1,\mathbf{f}_2,\hdots,\mathbf{f}_K]^{\intercal} \in \F^{K \times L} \label{Demand-Matrix-1}\\
        \mathbf{E} &\triangleq [\mathbf{e}_{1},\mathbf{e}_{2},\hdots, \mathbf{e}_{N}]^\intercal \in \F^{N \times L}\label{encoding-vectors-per-shot-1}\\
    \mathbf{D} &\triangleq [\mathbf{d}_1,\mathbf{d}_2, \hdots , \mathbf{d}_K]^{\intercal} \in \F^{K \times N} \label{DecodingMatrix-1}
\end{align}
where $\mathbf{F}$ represents the $K\times L$ matrix of all function coefficients across all the users, where $\mathbf{E}$ represents the $N\times L$ \emph{encoding matrix} across all the servers, and where $\mathbf{D}$ represents the $K\times N$ \emph{decoding matrix} across all the users.

Directly from~\eqref{DefinitionOfLSFunctions}, we have that
\begin{align}
    \mathbf{f} =[\mathbf{f}_1,\mathbf{f}_2,\hdots,\mathbf{f}_K]^{\intercal} \mathbf{w}\label{Functions-one}
\end{align}
and from \eqref{EncodedFiles} we have the overall transmitted vector taking the form
\begin{align}
    \mathbf{z} =[\mathbf{e}_{1}, \mathbf{e}_{2}, \hdots,\mathbf{e}_{N}]^\intercal \mathbf{w} = \mathbf{E} \mathbf{w} . \label{EncodedCashedData-1}
\end{align}
Furthermore, directly from~\eqref{DecedFiles} we have that
\begin{align}
    F'_k= \mathbf{d}_{k}^{T} \mathbf{z}
\end{align}
and thus we have
\begin{align}
    \mathbf{f}' = [\mathbf{d}_1,\mathbf{d}_2,\hdots,\mathbf{d}_K]^{\intercal} \mathbf{z} = \mathbf{D}\mathbf{z}.\label{DecodedData-one}
\end{align}
Recall that we must guarantee that
\begin{align}
    \mathbf{f}'=\mathbf{f}\label{feasibility-one}.
\end{align}
After substituting \eqref{Functions-one}, \eqref{EncodedCashedData-1} and \eqref{DecodedData-one} into \eqref{feasibility-one}, we see that the above feasibility condition in~\eqref{feasibility-one} is satisfied iff
\begin{align}
    \mathbf{D}\mathbf{E}\mathbf{w} = \mathbf{F}\mathbf{w}.\label{MainEquationWithW}
\end{align}
For this to hold for any $\mathbf{w}$, we must thus guarantee
\begin{align}
    \mathbf{D}\mathbf{E} = \mathbf{F}.\label{MainEquation}
\end{align}
At this point, since $\W_{\ell} = \sup(\mathbf{E}(:, \{\ell\})^{\intercal})$, and since $|\W_{\ell}| = \omega(\mathbf{E}(:, \{\ell\}))$, we have that
\begin{align}
    \underset{\ell \in [L]}{\max}\: \omega(\mathbf{E}(:,\ell)) = \max_{\ell \in [L]} |\W_{\ell}| \label{bound-on-the-commp-cost-one}
\end{align}
which simply tells us that our computation cost $\gamma$ from \eqref{ComputationCost} takes the form
\begin{align}
    \gamma = \frac{1}{N} \underset{\ell \in [L]}{\max}\: \omega(\mathbf{E}(:,\ell)).\label{CommpCost-E}
\end{align}
Similarly, directly from~\eqref{DecedFiles} and \eqref{comunication-cost}, we see that
  \begin{align}
     \delta  = \frac{\omega(\mathbf{D})}{KN}\label{Com-Cost-relation-2}
\end{align}
which simply says (cf.~\eqref{comunication-cost}) that
\begin{align}
     \Delta  = \frac{\omega(\mathbf{D})}{K}\label{Com-Cost-relation-1}.
\end{align}
It is now clear that decomposing $\mathbf{F}$ into the product of two relatively sparse matrices $\mathbf{D}$ and $\mathbf{E}$, implies reduced communication and computation costs respectively.

We here provide a simple example to help clarify the setting and the notation.

\subsection{Simple Example}
\label{Motivating-Example}
As illustrated in Figure~\ref{example-1}, we consider the example of a system with a master node, $N=8$ servers, $K=4$ users, $L=6$ subfunctions/datasets, and a field of size $q=7$.
  \begin{figure}
      \centering
      \includegraphics[scale=1]{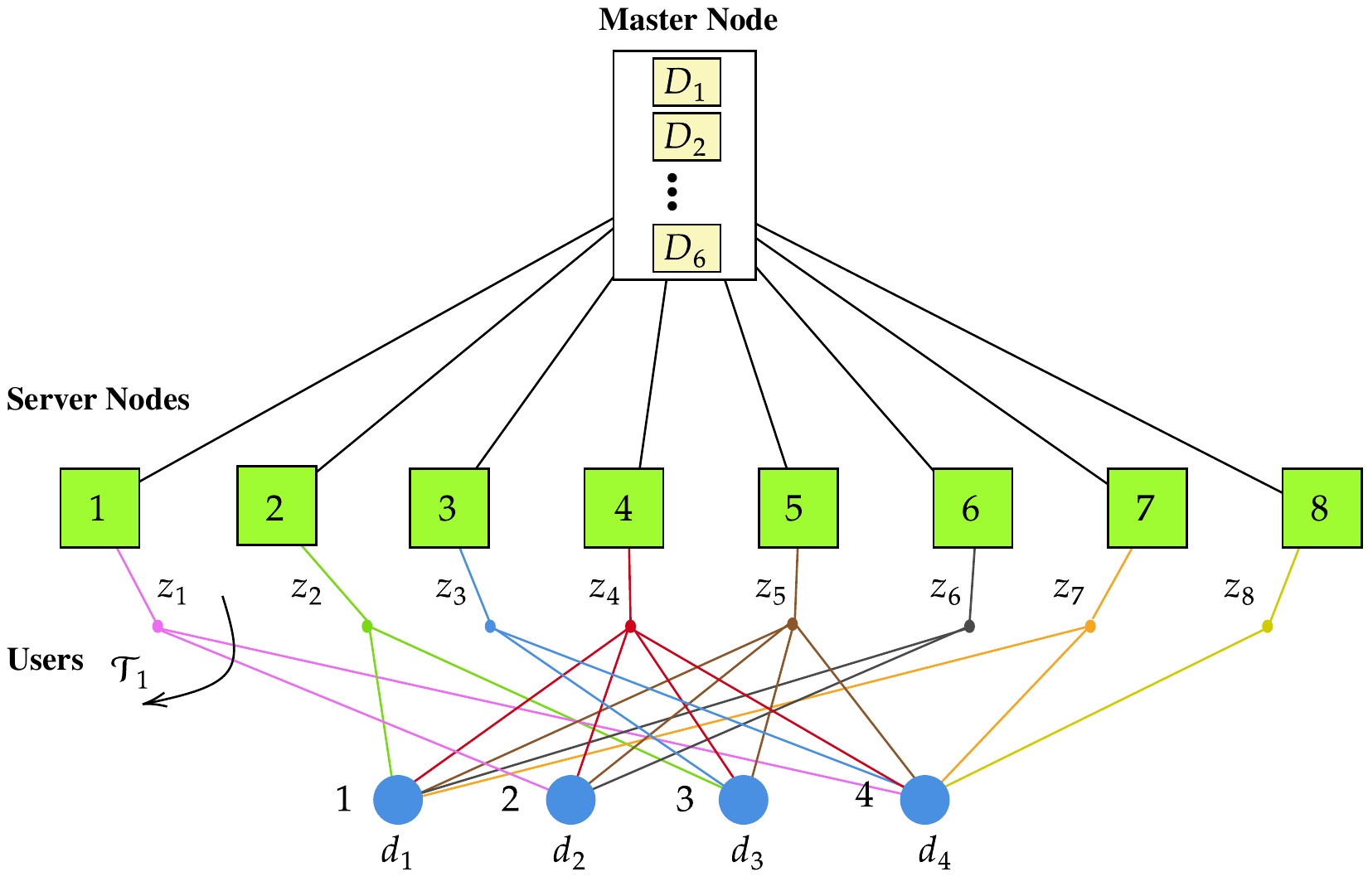}
      \caption{Multi-user distributed computing setting with 8 servers, 4 users, and 6 datasets/subfunctions.}
      \label{example-1}
  \end{figure}

Let us assume that the users ask for the following functions:
\begin{align}
    F_1 &= 2f_1(D_1) + 4 f_2(D_2) + 4 f_3(D_3) + 5 f_4(D_4) + 5 f_5(D_5) = \mathbf{f}_1^{\intercal} \mathbf{w},\\
        F_2 &= 3f_1(D_1) + 4 f_2(D_2) + 5 f_3(D_3) + 2 f_4(D_4) + 6 f_5(D_5) + 6 f_6(D_6) = \mathbf{f}_2^{\intercal} \mathbf{w},\\
        F_3 &= 2 f_1(D_1) + 4 f_2(D_2) +  6f_3(D_3) + 5 f_4(D_4) + 2f_5(D_5) = \mathbf{f}_3^{\intercal} \mathbf{w},\\
        F_4 &= 3 f_1(D_1) + 5 f_2(D_2) + 2f_4(D_4) + 3 f_5(D_5) + f_6(D_6) = \mathbf{f}_4^{\intercal} \mathbf{w}
\end{align}
where $F_k,\mathbf{f}_k,\: \: k \in [4]$, and $\mathbf{w}$, are respectively defined in~\eqref{DefinitionOfLSFunctions}, \eqref{message-vectors-1} and \eqref{function-vectors-1}. Consequently from~\eqref{Demand-Matrix-1}, our function matrix takes the form
\begin{align}
    \mathbf{F} =
\left[ \begin{array}{cccccc}
 2& 4 & 4 & 5 & 5 & 0 \\[-5pt]
3 & 4 & 5 & 2 & 6 & 6\\[-5pt]
  2&  4& 6 &  5 & 2 & 0\\[-5pt]
 3&5 & 0  & 2 & 3 & 1
\end{array}\right].
\end{align}

In the assignment phase, the master allocates the computation of $f_1(D_1),f_2(D_2),\hdots,f_6(D_6)$ to the $8$ servers according to
\begin{align}
    \mathcal{W}_1 &=\{ 1,2,3,5,8\},\: \mathcal{W}_2 = \{1,2,3,4,6,7\},\: \mathcal{W}_3 = \{1,2,3\},\: \mathcal{W}_4 = \{1,4,5,7\}\\
    \mathcal{W}_5 &= \{1,2,4,5,6,8\}, \: \mathcal{W}_6 =\{3,4,5,6,7,8\}
\end{align}
so that for example subfunction $f_3(D_3)$ is assigned to servers $\{1,2,3\}$, while we can also see that for example server 2 has to compute $W_1 = f(D_1),W_2 = f(D_2),W_3 = f(D_3)$, and $W_5 = f(D_5)$.
A quick inspection shows that the normalized  computation cost (cf.~\eqref{ComputationCost}) is equal to
\begin{align}
    \gamma = \frac{\underset{\ell \in [6]}{\max} |\W_{\ell}|}{8}= 6/8  \label{Computed-CC}.
\end{align}
After computing their designated output files, each server $n$ transmits $z_n$ as follows
\begin{align}
    z_{1} &= 2 W_1 + 6 W_2 + 3 W_3 + W_4 + 2 W_5, \ \ \   z_{2} = 4 W_1 + 5 W_2 + 2 W_3 + 3 W_5,\\ z_{3} &= W_1 + 2 W_2 + W_3 + 2 W_6, \ \ \     z_{4} = W_2 + 2 W_4 + 4 W_5 + W_6, \\ z_{5} &= 2W_1 + W_4 + 3 W_5 +2 W_6, \ \ \    z_{6} = 2 W_2 + 5 W_5 + 3 W_6 \\
       z_{7} &= W_2 + 2 W_4 + 4 W_6,    \ \ \   z_{8} = 2 W_1 + 4 W_5 + 5 W_6
\end{align}
corresponding to an encoding matrix (cf.~\eqref{EncodedCashedData-1}) of the form
\begin{align}
    \mathbf{E} =
    \left[ \begin{array}{cccccc}
  2& 6 & 3 &1 & 2& 0\\[-7pt]
 4& 5 & 2 &0 & 3& 0\\[-7pt]
 1& 2 & 1 & 0 & 0 & 2\\[-7pt]
 0& 1 & 0 & 2 & 4 & 1\\[-7pt]
 2& 0 & 0 & 1 & 3 & 2\\[-7pt]
0& 2 & 0 & 0 & 5 & 3 \\[-7pt]
0 & 1 & 0 & 2 & 0 & 4\\[-7pt]
2 & 0 & 0 &0 & 4 & 5
\end{array}\right].
\end{align}
We can quickly verify (cf.~\eqref{Computed-CC}) that indeed $\underset{\ell \in [6]}{\max}\: \omega(\mathbf{E}(:,\ell))/8= 6/8=\gamma$.

Subsequently, the master asks each server $n$ to send its generated $z_{n}$ to its designated receiving users, where for each server, these user-sets are:
\begin{align}
    \T_{1} = \{2,4\}, \: \T_{2} = \{1,3\},\: \T_{3} = \{3\}, \: \T_{4} =\{1,2,3,4\},\:\\ \T_{5} =\{1,2,3,4\},\: \T_{6} = \{1,2\}, \T_{7} = \{1,4\}, \T_{8} = \{ 4\}
\end{align}
so now, for example, server 2 will broadcast $z_{2}$ to users $1$ and $3$. A quick inspection also shows that users $1$ and $4$ will receive $5$ different symbols, whereas users $2$ and $3$ will receive $4$ symbols each. The above corresponds to a normalized communication cost (cf.~\eqref{comunication-cost}) equal to
\begin{align}
    \delta = \frac{\sum^{8}_{n=1} |\T_n|}{4\cdot 8} = (5+4+4+6)/32 = 19/32
\end{align}
corresponding to an average of $\Delta = \frac{19}{4}$ symbols received per user.

To decode, each user $k \in [4]$ computes the linear combination $F_k'$ as
\begin{align}
\begin{array}{cc}
    F_1' = 2 z_{2} + 3  z_{4} + 4 z_{5} + 2 z_{6} +z_{7}, &  F_2' = 4 z_{1} + 2  z_{4} +  z_{5} + 3 z_{6},\\
    F_3' = 4 z_{2} + 5 z_{3} + 2 z_{4} + z_{5}, & F_4' = 4 z_{1} + 2 z_{3} + z_{4} + 2 z_{5} +4 z_{7} + 5 z_{8}
\end{array}
\end{align}
adhering to a decoding matrix of the form
\begin{align}
    \mathbf{D}  =
    \left[ \begin{array}{cccccccccc}
  0& 2 & 0 &3 & 4& 2 & 1 & 0\\
4 & 0 & 0 & 2 & 1 & 3 & 0 & 0 \\
0 & 4 & 5 & 2 & 1& 0 & 0 &0 \\
4 & 0 & 2 & 1 & 2 & 0 & 4 &5
\end{array}\right].
\end{align}
A quick verification\footnote{Let us recall that each decoded symbol takes the form $F_k' = \mathbf{d}^{\intercal}_k \mathbf{z}$ where $\mathbf{d}^{\intercal}_k$ is the $k$th row of $\mathbf{D}$, and where $\mathbf{z} = [z_1 \ z_2 \ \cdots \ z_N]^T$.} reveals the correctness of decoding, and that indeed $F_k' = F_k$ for all $k=1,2,3,4$. For example, for the first user, we see that
$F_1' = 2 z_2 + 3  z_4 + 4 z_5 + 2 z_6 + z_7= 2 (4 W_1 + 5 W_2 + 2 W_3 + 3 W_5 ) + 3 (W_2 + 2 W_4 + 4 W_5 + W_6) + 4 ( 2W_1 + W_4 + 3 W_5 +2 W_6) + 2 (2 W_2 + 5 W_5 + 3 W_6) +  (W_2 + 2 W_4 + 4 W_6)=2 W_1 + 4 W_2 + 4 W_3 + 5 W_4 + 5 W_5 + 0 W_6$ which indeed matches $F_1$. In this example, each user recovers their desired function, with a corresponding normalized computation cost $\gamma = 3/4$ and a normalized communication cost $\delta = 19/32$. This has just been an example to illustrate the setting. The effort to find a solution with reduced computation and communication costs, follows in the subsequent section.

\section{Computation and Communication Costs for the Single-Shot Setting} \label{Single-Shot}
In this section we present the results for the one-shot setting. We first rigorously establish the bridge between our problem, coding theory, covering and partial covering codes. The main results --- focusing first on the computational aspects --- are presented in Section~\ref{computationally bounded} which derives bounds on the optimal computation cost in the large $N$ setting. With these results in place, the subsequent Section~\ref{Comuunication-Computation-Optimal} extends our consideration to the communication cost as well. Finally, Section~\ref{Discussion} offers some intuition on the results of this current section.

We briefly recall (cf.~\cite{cohen1997covering}) that an $n$-length code $\mathcal{C} \subset \F^{n} $ is called a $\rho$-covering code if it satisfies
\begin{align}
    d(\mathbf{x},\mathcal{C}) \leq \rho n,\:\: \forall \mathbf{x} \in \F^{n} \label{Covering-Codes-Definition}
\end{align}
for some $\rho\in(0,1)$ which is referred to as the normalized covering radius.

\subsection{Establishing a Relationship to Covering Codes and Partial Covering Codes}
\label{Relationship-Coding}
We will first seek to decompose $\mathbf{F}$ into $\mathbf{F} = \mathbf{D}\mathbf{E}$ under a constrained computation cost $\gamma$ which will generally imply a sparsity constraint on $\mathbf{E}$. For $\mathbf{E}_\ell \triangleq  \mathbf{E}(:,\ell)$ and $\mathbf{F}_\ell \triangleq  \mathbf{F}(:,\ell)$ denoting the $\ell$th column of $\mathbf{E}$ and $\mathbf{F}$ respectively, we can rewrite our decomposition as
\begin{align}
    \mathbf{D}\mathbf{E}_\ell=\mathbf{F}_\ell, \  \ \forall \ell \in [L].\label{Eq:non-Agnostic}
\end{align}
As suggested before, if we viewed $\mathbf{D}\in \F^{K\times N}$ as a parity check matrix $\mathbf{H}_{\mathcal{C}} = \mathbf{D}$ of a code $\mathcal{C}\subset \F^N$, then we could view $\mathbf{E}_\ell\in \F^N$ as an arbitrary error pattern, and $\mathbf{F}_\ell\in \F^K$ as the corresponding syndrome. Since we wish to sparsify $\mathbf{E}_\ell$, we are interested in having $\mathbf{E}_\ell$ be the minimum-weight coset leader whose syndrome is $\mathbf{F}_\ell$. This is simply the output of the minimum-distance syndrome decoder\footnote{Naturally our viewing $\mathbf{D}$ as a parity check matrix, does not limit the scope of options in choosing $\mathbf{D}$. Similarly, associating $\mathbf{E}_\ell$ the role of an error pattern, or a minimum-weight coset leader, is again not a limiting association.}. To get a first handle on the weights of $\mathbf{E}_\ell$, we can refer to the theory of covering codes which bounds the weights of coset leaders, where these weights are bounded by the code's covering radius $\rho(\mathcal{C}) N$, for some normalized radius $\rho(\mathcal{C})\in(0,1)$. Since the covering radius $\rho  N$ upper bounds the weights of the coset leaders\footnote{Let us recall (cf.~\cite{roth2006introduction}) that the preferred coset leaders are the minimum-weight vectors in each row of the standard array.}, it upper bounds our computation cost. A covering radius $\gamma N$ would reflect our computation constraint $\gamma$.

To capture some of the coding-theoretic properties, we will transition to the traditional coding-theoretic notation which speaks of an $n$-length code $\mathcal{C}\subset \F^{n} $ of rate $k/n$, where for us $n=N$ and $k = N-K$. The parity check matrix $\mathbf{H}_{\mathcal{C}}\in \F^{(n-k)\times n}$ will generally be associated to our decoding matrix $\mathbf{D} \in \F^{K\times N}$, the received (or error) vectors $\mathbf{x} \in \F^{n}$ will be associated to the encoding vectors $\mathbf{E}_\ell\in \F^{N}$, and its syndrome $\mathbf{s}_{\mathbf{x}} \in \F^{n-k}$ (or just $\mathbf{s}$, depending on the occasion) will be associated to $\mathbf{F}_\ell\in \F^{K}$. Please recall that when we write $\mathcal{C}_{\mathbf{D}}$ (or $\mathcal{C}_{\mathbf{H}}$), we will refer to the code whose parity check matrix is $\mathbf{D}$ (or $\mathbf{H}$).

As a first step, we extend the concept of covering codes to the following class.
\begin{Definition}\label{Partial-Covering-Code}
For some $ \rho\in (0, 1]$, we say that a set $\mathcal{X} \subseteq \F^{n}$ is $\rho$-covered by a code $\mathcal{C}\subseteq \F^{n}$ iff
\begin{align}
    d(\mathbf{x},\mathcal{C}) \leq \rho n,\:\: \forall \mathbf{x} \in \mathcal{X}
\end{align}
in which case we say that $\mathcal{C}$ is a $(\rho,\mathcal{X})$-partial covering code.
\end{Definition}
Naturally when $\mathcal{X} = \F^{n}$, such a $(\rho,\mathcal{X})$-partial covering code is simply the traditional covering code.
We are now able to link partial covering codes to our distributed computing problem.

\begin{Theorem}\label{Bridge}
For the setting of distributed-computing with $K$ users, $N$ servers and $L$ subfunctions, a solution to the linearly separable function computation problem $\mathbf{D}\mathbf{E} = \mathbf{F}$ with normalized computation cost $\gamma$
exists if and only if $\mathbf{D}$ is the parity check matrix to a $(\gamma ,\mathcal{X})$-partial covering code $\mathcal{C}_{\mathbf{D}}$ for some existing set
\begin{align}
    \mathcal{X}\supset \mathcal{X}_{\mathbf{F},\mathbf{D}}\triangleq  \{\mathbf{x} \in \F^{N}| \mathbf{D} \mathbf{x}  =\mathbf{F}(:,\ell), \ \text{for some} \ \ell \in [L]\}. \label{x_F_definiton}
\end{align}
 With such $\mathbf{D}$ in place, each $\mathbf{E}(:,\ell)$ is the output of the minimum-distance syndrome decoder of $\mathcal{C}_{\mathbf{D}}$ for syndrome $\mathbf{F}(:,\ell)$.
\end{Theorem}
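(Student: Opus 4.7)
The plan is to prove both directions of the equivalence by exploiting the coset decomposition of $\F^N$ induced by viewing $\mathbf{D}$ as the parity-check matrix of $\mathcal{C}_{\mathbf{D}}$. The key observation driving everything is that if $\mathbf{D}\mathbf{x}_1 = \mathbf{D}\mathbf{x}_2 = \mathbf{F}(:,\ell)$, then $\mathbf{x}_1 - \mathbf{x}_2 \in \mathcal{C}_{\mathbf{D}}$, so the set of solutions to $\mathbf{D}\mathbf{y} = \mathbf{F}(:,\ell)$ is precisely a coset of $\mathcal{C}_{\mathbf{D}}$, and the minimum Hamming weight in that coset equals the distance $d(\mathbf{x}, \mathcal{C}_{\mathbf{D}})$ for any representative $\mathbf{x}$.

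For the necessity direction, I would start by assuming that a feasible $(\mathbf{D}, \mathbf{E})$ pair exists with normalized computation cost $\gamma$. By \eqref{CommpCost-E}, each column $\mathbf{E}(:,\ell)$ satisfies $\mathbf{D}\mathbf{E}(:,\ell) = \mathbf{F}(:,\ell)$ and $\omega(\mathbf{E}(:,\ell)) \leq \gamma N$; in particular, $\mathbf{E}(:,\ell) \in \mathcal{X}_{\mathbf{F},\mathbf{D}}$, so this set contains a preimage for every $\ell$. For an arbitrary $\mathbf{x} \in \mathcal{X}_{\mathbf{F},\mathbf{D}}$, choose the index $\ell$ with $\mathbf{D}\mathbf{x} = \mathbf{F}(:,\ell)$. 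Then $\mathbf{D}(\mathbf{x} - \mathbf{E}(:,\ell)) = \mathbf{0}$ places $\mathbf{x} - \mathbf{E}(:,\ell)$ in $\mathcal{C}_{\mathbf{D}}$, which immediately gives
\begin{align}
d(\mathbf{x}, \mathcal{C}_{\mathbf{D}}) \leq \omega\bigl(\mathbf{x} - (\mathbf{x} - \mathbf{E}(:,\ell))\bigr) = \omega(\mathbf{E}(:,\ell)) \leq \gamma N,
\end{align}
so $\mathcal{C}_{\mathbf{D}}$ is a $(\gamma, \mathcal{X}_{\mathbf{F},\mathbf{D}})$-partial covering code. One may then take $\mathcal{X} = \mathcal{X}_{\mathbf{F},\mathbf{D}}$ (or any superset also covered by $\mathcal{C}_{\mathbf{D}}$ at radius $\gamma N$) to match the statement.

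For the sufficiency direction, I would assume that $\mathcal{C}_{\mathbf{D}}$ is a $(\gamma, \mathcal{X})$-partial covering code for some $\mathcal{X} \supseteq \mathcal{X}_{\mathbf{F},\mathbf{D}}$, where $\mathcal{X}_{\mathbf{F},\mathbf{D}}$ contains a representative for every $\ell \in [L]$. For each $\ell$, pick any $\mathbf{x}_\ell \in \mathcal{X}_{\mathbf{F},\mathbf{D}}$ with $\mathbf{D}\mathbf{x}_\ell = \mathbf{F}(:,\ell)$, invoke the covering property of $\mathcal{C}_{\mathbf{D}}$ on $\mathcal{X}$ to obtain a codeword $\mathbf{c}_\ell \in \mathcal{C}_{\mathbf{D}}$ with $d(\mathbf{x}_\ell, \mathbf{c}_\ell) \leq \gamma N$, and set $\mathbf{E}(:,\ell) \triangleq \mathbf{x}_\ell - \mathbf{c}_\ell$. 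Then $\mathbf{D}\mathbf{E}(:,\ell) = \mathbf{F}(:,\ell) - \mathbf{0} = \mathbf{F}(:,\ell)$ and $\omega(\mathbf{E}(:,\ell)) = d(\mathbf{x}_\ell, \mathbf{c}_\ell) \leq \gamma N$, producing a factorization of cost at most $\gamma$. To obtain the final assertion about minimum-distance syndrome decoding, I would simply note that minimizing $\omega(\mathbf{E}(:,\ell))$ over all solutions of $\mathbf{D}\mathbf{E}(:,\ell) = \mathbf{F}(:,\ell)$ is by definition a search over the coset with syndrome $\mathbf{F}(:,\ell)$, whose minimizer is the coset leader returned by minimum-distance syndrome decoding.

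The main obstacle---really the only subtle point---lies not in the algebra but in the bookkeeping around $\mathcal{X}_{\mathbf{F},\mathbf{D}}$: one must make clear that a solution $\mathbf{E}$ can exist only if each $\mathbf{F}(:,\ell)$ lies in the column span of $\mathbf{D}$, and that the partial covering hypothesis alone does not produce such preimages---it only controls their distance to $\mathcal{C}_{\mathbf{D}}$ once they exist. Stating the consistency requirement explicitly (which in the necessity direction is automatic from the existence of $\mathbf{E}(:,\ell)$) closes the equivalence cleanly. Everything else reduces to the standard coset-to-syndrome dictionary for linear codes.
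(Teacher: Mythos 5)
Your proposal is correct and follows essentially the same argument as the paper: both directions rest on the coset--syndrome dictionary (columns of $\mathbf{E}$ as coset leaders, columns of $\mathbf{F}$ as syndromes, $\mathbf{D}$ as parity-check matrix), with the sufficiency direction constructing $\mathbf{E}(:,\ell) = \mathbf{x}_\ell - \mathbf{c}_\ell$ exactly as in the paper. The only cosmetic difference is that you prove necessity directly (showing $d(\mathbf{x},\mathcal{C}_{\mathbf{D}}) \leq \gamma N$) while the paper argues by contraposition, and you make explicit the solvability of $\mathbf{D}\mathbf{x} = \mathbf{F}(:,\ell)$, which the paper leaves implicit in the nonemptiness of $\mathcal{X}_{\mathbf{F},\mathbf{D}}$; both are sound.
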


\begin{proof}
To first prove that the computation constraint $\gamma=\rho$ indeed requires $\mathbf{D}$ to correspond to a partial covering code that covers $\mathcal{X}$, let us assume that $\mathbf{D}$ does not have this property, and that there exists an $\mathbf{x} \in \mathcal{X}_{}$ such that $d(\mathbf{x},\mathcal{C}_{\mathbf{D}}) > \rho n$. Let $\mathbf{c}_{\min}$ be the closest codeword to $\mathbf{x}$ in the sense that $d(\mathbf{x},\mathbf{c}_{\min}) = d(\mathbf{x},\mathcal{C}_{\mathbf{D}})$. Now let $\mathbf{e}_{\min} = \mathbf{x} - \mathbf{c}_{\min}$, and note, directly from the above assumption, that $ \omega(\mathbf{e}_{\min}) > \rho n$. Naturally $\mathbf{D} \mathbf{x} = \mathbf{D} (\mathbf{e}_{\min} + \mathbf{c}_{min}) = \mathbf{D} \mathbf{e}_{\min}$ by  virtue of the fact that $\mathbf{D}$ is the parity check matrix of $\mathcal{C}_{\mathbf{D}}$. Since $\mathbf{x} \in \mathcal{X}$, we know that $\exists \: \ell \in [L]$ such that $\mathbf{D} \mathbf{x} = \mathbf{F}(:,\ell)$, which directly means that $\exists \: \ell \in [L]$ such that $\mathbf{D} \mathbf{e}_{\min} = \mathbf{F}(:,\ell).$ This $\mathbf{e}_{\min}$ is the coset leader associated to syndrome $\mathbf{F}(:,\ell)$.

Since though $\mathbf{D}\mathbf{E} = \mathbf{F}$, we also have that $\mathbf{D} \mathbf{E}(:,\ell ) = \mathbf{F}(:,\ell)$. Since $\mathbf{E}(:,\ell )$ and $\mathbf{e}_{\min}$ are in the same coset (of the same syndrome $\mathbf{F}(:,\ell)$), and since $\mathbf{e}_{\min}$ is the minimum-weight coset leader, we can conclude that $\omega(\mathbf{E}(:,\ell)) \geq  \mathbf{e}_{\min}$. Thus the assumption that $ \omega(\mathbf{e}_{\min}) > \rho n$ implies that $\omega(\mathbf{E}(:,\ell ))  > \rho n$ which contradicts the computation-cost requirement that $\omega(\mathbf{E}(:,\ell)) \leq \rho n$ from~\eqref{CommpCost-E}. Thus if $\mathbf{D}$ does not correspond to a partial covering code (with $\rho = \gamma$) that covers $\mathcal{X}_{\mathbf{F}, \mathbf{D}}$, the complexity constraint is violated.

On the other hand, recalling that $\mathcal{C}_\mathbf{D}$ is a partial covering code for $\mathcal{X}$, we get that for any $\mathbf{x} \in \mathcal{X}$ then $d(\mathbf{x},\mathcal{C}_{\mathbf{D}}) \leq \rho n$. For the same $\mathbf{x} \in \mathcal{X}$, let $\mathbf{c}_{\min}$ be again its closest codeword, and let $\mathbf{e}_{\min} = \mathbf{x} - \mathbf{c}_{\min}$, where again by definition of the partial covering code, $ \omega(\mathbf{e}_{\min}) \leq \rho n$. Since, like before, $\mathbf{D} \mathbf{e}_{\min} = \mathbf{F}(:,\ell)$ for some $\ell\in[L]$, then we simply set $\mathbf{E}(:,\ell) = \mathbf{e}_{\min}$ whose weight is indeed sufficiently low to guarantee the computation constraint. We recall that for each $\mathbf{F}(:,\ell)$, this coset leader $\mathbf{E}(:,\ell) = \mathbf{e}_{\min}$ can be found by using the minimum-distance syndrome decoder.
\end{proof}
Intuitively, a smaller $\mathcal{X}$ could potentially --- depending on $\mathcal{X}$ and the code --- be covered in the presence of a smaller covering radius.  Now that we have established the connection with partial covering codes, we proceed to present computation bounds. The following result, as well as all subsequent results, assumes large $N$.
\subsection{Bounds on the Optimal Computation Cost}
\label{computationally bounded}
The following theorem bounds the optimal computation cost of the multi-user linearly-separable computation setting.

\begin{Theorem}\label{Achievability}
For the setting of distributed-computing of linearly-separable functions, with $K$ users, $N$ servers and any number of $L$ subfunctions, the optimal computation cost is bounded as
  \begin{align}
      \gamma &\in (H_q^{-1}(\frac{\log_q(L)}{N}), H_q^{-1}(\frac{K}{N} )).\label{Achievable-interval}
  \end{align}

\end{Theorem}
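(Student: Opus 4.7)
The plan is to prove each side of the interval $\gamma \in (H_q^{-1}(\log_q(L)/N), H_q^{-1}(K/N))$ separately, leaning in both cases on the bridge to (partial) covering codes established in Theorem~\ref{Bridge}. That theorem reduces the search for a feasible scheme at cost $\gamma$ to the choice of a matrix $\mathbf{D}$ whose induced code $\mathcal{C}_{\mathbf{D}}$ covers the set $\mathcal{X}_{\mathbf{F},\mathbf{D}}$ of preimages under $\mathbf{D}$ of the columns of $\mathbf{F}$ at normalized radius $\gamma$.

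For the converse $\gamma > H_q^{-1}(\log_q(L)/N)$, I would consider a worst-case $\mathbf{F}$ whose $L$ columns are pairwise distinct elements of $\F^{K}$ (possible whenever $L \leq q^{K}$; the bound is otherwise vacuous since the left endpoint would exceed the right one). Theorem~\ref{Bridge} then forces each $\mathbf{E}(:,\ell)$ to be a minimum-weight coset leader of $\mathcal{C}_{\mathbf{D}}$ with syndrome $\mathbf{F}(:,\ell)$; since distinct syndromes lie in distinct cosets, these $L$ leaders are pairwise distinct vectors in $\F^{N}$, each of Hamming weight at most $\gamma N$. They therefore all fit inside the $q$-ary Hamming ball of normalized radius $\gamma$, whose volume obeys the standard estimate $V_q(N,\gamma) \leq q^{N H_q(\gamma)}$ in the regime $\gamma \leq 1 - 1/q$. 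Pigeonholing gives $L \leq q^{N H_q(\gamma)}$, and inverting $H_q$ yields $\gamma \geq H_q^{-1}(\log_q(L)/N)$, with the strict inequality in the statement absorbing the lower-order $\epsilon(N)$ terms arising from the volume estimate.

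For the achievability $\gamma < H_q^{-1}(K/N)$, I would exhibit a single $\mathbf{D} \in \F^{K \times N}$ whose code $\mathcal{C}_{\mathbf{D}}$ is even a \emph{full} covering code of $\F^{N}$ with normalized covering radius arbitrarily close to $H_q^{-1}(K/N)$, independent of the particular $\mathbf{F}$. Since a full covering trivially covers any subset $\mathcal{X}_{\mathbf{F},\mathbf{D}} \subseteq \F^{N}$, Theorem~\ref{Bridge} then yields a valid scheme by setting each $\mathbf{E}(:,\ell)$ to be the minimum-weight coset leader of syndrome $\mathbf{F}(:,\ell)$. Existence of such a covering code is a classical random-coding fact: a uniformly random $K \times N$ parity-check matrix produces, with probability tending to one as $N$ grows, a code whose normalized covering radius is at most $H_q^{-1}(K/N) + \epsilon(N)$, matching the sphere-covering lower bound $V_q(N,\rho) \geq q^{K}$.

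The main obstacle will be the achievability side, where the random-coding covering argument must be assembled cleanly with explicit handling of the field size $q$ and of the $\epsilon(N)$-bookkeeping, especially since the subsequent refinement of this paper toward sparse $\mathbf{D}$ (i.e.\ codes with low-density parity-check matrices) will demand extending this existence statement in a non-trivial way. The converse is essentially routine, being a volumetric pigeonhole via Theorem~\ref{Bridge} and the standard $q$-ary ball-volume estimate, and should take only a few lines.
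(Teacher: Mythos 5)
Your proposal is correct and follows essentially the same route as the paper: the converse is the same volume/pigeonhole bound (you count the $L$ distinct coset representatives packed in $B(\mathbf{0},\gamma N)$, the paper counts points covered by the $q^{N-K}$ codewords of the partial covering code — dual phrasings of the same inequality $L \leq V_q(N,\gamma N)$), and the achievability instantiates Theorem~\ref{Bridge} with $\mathcal{X}=\F^N$, i.e.\ a full covering code of normalized radius approaching $H_q^{-1}(K/N)$. The only cosmetic difference is your citation of random-coding (Blinovskii-style) for covering-code existence, whereas the paper's Appendix~\ref{B} uses the greedy construction bound of Cohen (1985), $n-k \geq \log_q V_q(n,\rho) - 2\log_2 n + \log_q n - O(1)$; both give the same asymptotic rate.
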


\begin{proof}
The proof of the converse (lower bound in~\eqref{Achievable-interval}) employs sphere-covering arguments, and can be found in Appendix \ref{A_0}.
The proof of achievability follows from covering- and partial covering-code arguments, and can be found in Appendix~\ref{B}.
\end{proof}
\begin{remark}
%The theorem reveals that in the limit of large $N$, the optimal worst-case computation load, in units of subfunctions computed per server, is lower-bounded by $LH_q^{-1}(\log_q(L)/N)$ and upper-bounded by $L H_q^{-1}(K/N)$.
The two bounds meet when $L = q^K$.
\end{remark}
Theorem \ref{Achievability} suggests a range of computation costs. In the next corollary, we will describe the conditions under which a reduced normalized computation cost, strictly inside this range, can be achieved. This reduced cost will relate to (our ability to choose) a set $\mathcal{X}\subset \F^N$. As we will see, a smaller $\mathcal{X}$ will imply a smaller $\gamma$. To understand the connection between our problem and this set $\mathcal{X}$, and thus to better understand the following theorem whose proof will be presented in Appendix~\ref{A_1}, we provide the following sketch of some crucial elements in the proof of Theorem~\ref{Achievability}. In particular, we will here sketch an algorithm that iterates in order to converge to the aforementioned $\mathcal{X}$, and then to the corresponding decoding matrix $\mathbf{D}$, that will eventually provide reduced normalized complexity $\gamma$.
Before describing the algorithm, it is worth noting that a crucial ingredient can be found in Lemma~\ref{CohenLemma} (see Appendix \ref{A_2}), which modifies the approach in \cite{cohen1985good} in order for us to design --- for any set $\mathcal{X}' \in \F^{N}$ --- a $(\rho,\mathcal{X}')$-partial covering code for some $\rho = H_q^{-1}(\frac{K}{N} - (1 - \frac{ \log_q(|\X'|)}{N})$.

With this in place, the algorithm starts by picking an initial set $\mathcal{X}_{0} \in \F^{N},|\mathcal{X}_{0}| = L q^{N-K}$, and then applies Lemma \ref{CohenLemma} to construct a $(\rho_0,\mathcal{X}_{0})$-partial covering code, $\mathcal{C}_{0}$, where $\rho_0 = H_q^{-1}(\frac{K}{N} - (1 - \frac{ \log_q(|\X_{0}|)}{N})$. With this code $\mathcal{C}_{0}$ in place, we create --- as a function of $\mathcal{C}_{0}$ --- the set $\mathcal{X}_{\mathbf{F},\mathbf{D},0}$ as defined in~\eqref{x_F_definiton} where $\mathbf{D}=\mathbf{H}_{{\C}_{0}}$, and then we check if $\mathcal{X}_{0} \supseteq \mathcal{X}_{\mathbf{F},\mathbf{D},0}$. If so, then the algorithm terminates, else it goes to the next iteration which starts by picking a new larger set $\mathcal{X}_{1} \in \F^{N},|\mathcal{X}_1| = L q^{N-K}+1$, then uses Lemma \ref{CohenLemma} to create a new $(\rho_1,\mathcal{X}_{1})$-partial covering code for
$\rho_1 = H_q^{-1}(\frac{K}{N} - (1 - \frac{ \log_q(|\X_{1}|)}{N})$, and then compares if $\mathcal{X}_{1} \supseteq \mathcal{X}_{\mathbf{F},\mathbf{D},1}$. This procedure terminates during some round $m$ where this terminating round is the first round for which the chosen set $\mathcal{X}_m$ (now of cardinality $|\mathcal{X}_m| = L q^{N-K}+m$) and the corresponding $(\rho_m,\mathcal{X}_{m})$-partial covering code with $\rho_m = H_q^{-1}(\frac{K}{N} - (1 - \frac{ \log_q(|\X_{m}|)}{N})$, yield $\mathcal{X}_{m} \supseteq \mathcal{X}_{\mathbf{F},\mathbf{D},m}$.

In the following corollary, the mentioned $\mathcal{X}$ refers to the terminating\footnote{Note that in the worst case this termination will happen when $\mathcal{X}_m  = \F^{N},$ in which case the output code will be a covering code.} $\mathcal{X}_m$, and the decoding matrix $\mathbf{D}$ will be the parity-check matrix of the aforementioned $(\rho_m,\mathcal{X}_{m})$-partial covering code that covers the terminating $\mathcal{X}=\mathcal{X}_{m}$, while the normalized computation cost in the theorem will take the form $\gamma = \rho = \rho_m$.

With the above in place, the following speaks of a set $\mathcal{X}$ that is $\rho N$-covered by a code $\mathcal{C}_{\mathbf{D}}$ that generates --- as described in~\eqref{x_F_definiton} --- its set $\mathcal{X}_{\mathbf{F},\mathbf{D}}$.

\begin{corollary}\label{Better-Achievability}
In the multi-user linearly separable computing problem $\mathbf{D}\mathbf{E} = \mathbf{F},$
if there exists a set
\begin{align*}
    \mathcal{X}\supset \mathcal{X}_{\mathbf{F},\mathbf{D}}\triangleq  \{\mathbf{x} \in \mathbb{F}^{N}| \mathbf{D} \mathbf{x}  =\mathbf{F}(:,\ell), \ \text{for some} \ \ell \in [L]\} \label{x_F_definiton}
\end{align*}
that is $\rho N$-covered by a code $\mathcal{C}_{\mathbf{D}}$ for $ \rho =  H_q^{-1}(\frac{K}{N} - (1 - \frac{ \log_q(|\mathcal{X}|)}{N}))$, then the computation cost \[ \gamma =  H_q^{-1}(\frac{K}{N} - (1 - \frac{ \log_q(|\mathcal{X}|)}{N}))\] is achievable. If $\mathcal{X} = \mathcal{X}_{\mathbf{F}, \mathbf{D}}$, then $\gamma = H_q^{-1}(\frac{\log_q(L)}{N})$ is achievable and optimal.
\end{corollary}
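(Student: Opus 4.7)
The plan is to combine Theorem~\ref{Bridge} with Lemma~\ref{CohenLemma} to obtain the claimed computation cost, and then to specialize the choice of $\mathcal{X}$ so that the resulting $\gamma$ matches the converse of Theorem~\ref{Achievability}. First I would invoke Theorem~\ref{Bridge}: achieving computation cost $\gamma$ is equivalent to producing a decoding matrix $\mathbf{D}$ such that the code $\mathcal{C}_{\mathbf{D}}$ is $(\gamma,\mathcal{X}_{\mathbf{F},\mathbf{D}})$-partial covering. Given such a $\mathbf{D}$, the corresponding encoding matrix $\mathbf{E}$ is constructed column-by-column, each $\mathbf{E}(:,\ell)$ being the minimum-weight coset leader whose syndrome is $\mathbf{F}(:,\ell)$; this is exactly what the minimum-distance syndrome decoder returns.

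Next I would apply Lemma~\ref{CohenLemma} to the set $\mathcal{X}$ named in the hypothesis. The lemma guarantees the existence of a code $\mathcal{C}_{\mathbf{D}}$ (equivalently, a parity-check matrix $\mathbf{D}$) that $\rho N$-covers $\mathcal{X}$ for $\rho = H_q^{-1}\!\left(\tfrac{K}{N} - 1 + \tfrac{\log_q(|\mathcal{X}|)}{N}\right)$. Since the hypothesis states $\mathcal{X}\supset\mathcal{X}_{\mathbf{F},\mathbf{D}}$, the same $\rho N$-covering property restricts to $\mathcal{X}_{\mathbf{F},\mathbf{D}}$, so Theorem~\ref{Bridge} produces a feasible scheme with $\gamma = \rho$, which proves the first assertion of the corollary.

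For the second assertion, I would specialize to $\mathcal{X}=\mathcal{X}_{\mathbf{F},\mathbf{D}}$, covering no more than is strictly necessary. Assuming $\mathbf{D}$ has full row rank $K$ (a free assumption, since the columns of $\mathbf{F}$ live in $\mathbb{F}^{K}$), each syndrome $\mathbf{F}(:,\ell)$ admits exactly $q^{N-K}$ preimages under $\mathbf{D}$, so $|\mathcal{X}_{\mathbf{F},\mathbf{D}}| \leq L\,q^{N-K}$ and $\tfrac{\log_q(|\mathcal{X}_{\mathbf{F},\mathbf{D}}|)}{N} \leq \tfrac{\log_q(L)}{N} + 1 - \tfrac{K}{N}$. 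Substituting into the expression for $\rho$ and using monotonicity of $H_q^{-1}$ gives
\[
\gamma \;=\; H_q^{-1}\!\left(\tfrac{K}{N} - 1 + \tfrac{\log_q(|\mathcal{X}_{\mathbf{F},\mathbf{D}}|)}{N}\right) \;\leq\; H_q^{-1}\!\left(\tfrac{\log_q(L)}{N}\right),
\]
and the converse half of Theorem~\ref{Achievability} pins this down as an equality, simultaneously yielding achievability and optimality.

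The main obstacle is the apparent circularity between $\mathbf{D}$ and $\mathcal{X}_{\mathbf{F},\mathbf{D}}$: Lemma~\ref{CohenLemma} expects $\mathcal{X}$ as input, but $\mathcal{X}_{\mathbf{F},\mathbf{D}}$ is determined only after $\mathbf{D}$ is chosen. I would resolve this using the iterative procedure sketched just before the corollary statement: run Lemma~\ref{CohenLemma} on a candidate set of the prescribed cardinality, then test a posteriori whether the set of preimages $\mathcal{X}_{\mathbf{F},\mathbf{D}}$ produced by the resulting $\mathbf{D}$ is contained in the candidate; if containment fails, enlarge the candidate by one element and repeat. Because enlarging $\mathcal{X}$ can only weaken the covering requirement, and because the procedure terminates at worst when $\mathcal{X}=\mathbb{F}^{N}$ (the classical covering-code case), a consistent $(\mathbf{D},\mathcal{X})$ pair is guaranteed to exist, and the implicit construction in the corollary is well defined.
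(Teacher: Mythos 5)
Your proposal is correct and follows essentially the same route as the paper's proof: invoke Theorem~\ref{Bridge} to reduce achievability of cost $\gamma=\rho$ to the existence of a $(\rho,\mathcal{X})$-partial covering code with $\mathcal{X}\supseteq\mathcal{X}_{\mathbf{F},\mathbf{D}}$, use Lemma~\ref{CohenLemma} to quantify the achievable $\rho$ in terms of $|\mathcal{X}|$, and break the circular dependence between $\mathbf{D}$ and $\mathcal{X}_{\mathbf{F},\mathbf{D}}$ via the iterative enlarge-and-check procedure that the paper runs in Appendix~\ref{A_1}. Two small presentation differences worth noting, neither a gap: you make explicit the preimage count $|\mathcal{X}_{\mathbf{F},\mathbf{D}}|\leq Lq^{N-K}$ (treating the case of repeated columns of $\mathbf{F}$ with an inequality) and then invoke the converse of Theorem~\ref{Achievability} to pin down equality, whereas the paper simply substitutes $|\mathcal{X}|=Lq^{N-K}$; and the paper restricts the iterative search to candidate sets $\mathcal{X}$ that enclose $\mathcal{B}_q(0,\rho)$ so that the hypothesis of Lemma~\ref{CohenLemma} is met, a precondition you did not state but which is easy to fold into the enlargement procedure.
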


\begin{proof}
The proof can be found in Appendix \ref{A_1}.
\end{proof}
As suggested before, the above reflects that covering a smaller $\mathcal{X}$ could entail a smaller covering radius and thus a smaller computation cost.

\subsection{Jointly Considering Computation and Communication Costs} \label{Comuunication-Computation-Optimal}
The following theorem combines computation and communication considerations. Theorem~\ref{Sparse-Theorem} builds on Theorem~\ref{Bridge}, where now we recall that any chosen decoding matrix $\mathbf{D}$ will automatically yield a normalized communication cost $\delta = \frac{\omega{(\mathbf{D}})}{KN}$ corresponding to $\Delta = \delta N = \frac{\omega{(\mathbf{D}})}{K}$. The following bounds this communication cost.

\begin{Theorem}\label{Sparse-Theorem}
For the setting of distributed-computing of linearly-separable functions, with $K$ users, $N$ servers and $L$ subfunctions, the optimal computation cost is bounded as
  \begin{align}
      \gamma &\in (H_q^{-1}(\frac{\log_q(L)}{N}), H_q^{-1}(\frac{K}{N} ))\label{}
  \end{align}
and for any achievable computation cost $\gamma \leq \min\{\frac{\sqrt{5}-1}{2}, 1 -\frac{1}{q}\}$, then the corresponding achievable communication cost takes the form
\begin{align}
    \delta \doteq \frac{\sqrt{\log_q(N)}}{N}.
\end{align}

\end{Theorem}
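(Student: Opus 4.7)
The first claim, namely that the optimal computation cost lies in the stated range, is already established by Theorem~\ref{Achievability}, so the novel content is the achievability of the communication cost $\delta \doteq \sqrt{\log_q(N)}/N$ simultaneously with any achievable $\gamma$ in the range $\gamma \leq \min\{\frac{\sqrt{5}-1}{2},\, 1-\tfrac{1}{q}\}$. My plan is to invoke Theorem~\ref{Bridge} to reduce the problem to showing the \emph{existence} of a $(\gamma,\mathcal{X})$-partial covering code (with $\mathcal{X}\supset \mathcal{X}_{\mathbf{F},\mathbf{D}}$) whose parity-check matrix $\mathbf{D}$ is sparse with $\omega(\mathbf{D})/(KN) \doteq \sqrt{\log_q(N)}/N$. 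Thus the task is to build a partial covering code whose parity-check matrix behaves like a low-density parity-check matrix, thereby extending the covering-code construction underlying Theorem~\ref{Achievability} (due to Blinovskii~\cite{blinovskii1987lower}) to the sparse regime.

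The construction I would use is probabilistic. Rather than sampling a fully random $K\times N$ matrix over $\mathbb{F}_q$ as in the classical covering-code arguments, I would sample $\mathbf{D}$ from an ensemble in which each column (equivalently, each row) is independently nonzero in exactly $\Theta(\sqrt{\log_q(N)})$ positions, with nonzero entries uniform over $\mathbb{F}_q^*$. The parameter $s \triangleq \sqrt{\log_q(N)}$ governs both the column weight and, up to normalization, the target $\delta$. I would then, for a fixed target radius $\rho N$ with $\rho = \gamma$, bound the probability that some specific vector $\mathbf{x}\in\mathcal{X}$ is \emph{not} $\rho N$-covered by the code $\mathcal{C}_{\mathbf{D}}$, and apply a union bound over $|\mathcal{X}|\leq q^N$ vectors. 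The heart of the argument is estimating $\Pr[d(\mathbf{x},\mathcal{C}_{\mathbf{D}})>\rho N]$ under the sparse ensemble; this reduces to showing that for a random sparse $\mathbf{D}$, the syndrome map restricted to Hamming balls of radius $\rho N$ is sufficiently surjective onto $\mathbb{F}_q^K$. I would carry this out by computing the expected number of low-weight preimages of a given syndrome under $\mathbf{D}$, using that for an error vector $\mathbf{e}$ of weight $w\leq \rho N$ the inner product $\mathbf{D}\mathbf{e}$ is close to uniform whenever the supports of the sparse columns of $\mathbf{D}$ interact sufficiently with $\mathrm{sup}(\mathbf{e})$, and a second-moment computation to show concentration.

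In the second-moment step, two competing terms appear: the covering deficit, which forces $s$ to be at least of order $\sqrt{\log_q(N)}$ to absorb the $|\mathcal{X}|\leq q^N$ union-bound factor, and the inflation in the variance coming from pairs of error vectors whose supports share many indices with the sparse support pattern of $\mathbf{D}$. Balancing these two effects is where the constant $\frac{\sqrt{5}-1}{2}$ enters: it emerges as the positive root of a quadratic of the form $\gamma^2+\gamma-1\leq 0$ coming from comparing the leading exponent $H_q(\gamma)$ of the covering radius against a quadratic-in-$\gamma$ loss factor induced by the sparsity of $\mathbf{D}$ in the second-moment estimate. Provided $\gamma$ lies below this threshold (and below $1-1/q$ so that $H_q$ is well-defined and monotone), the second moment remains of the same order as the square of the first, and a standard Paley--Zygmund-type inequality shows that with high probability over the ensemble, every vector in $\mathcal{X}$ is $\rho N$-covered.

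Given such a probabilistic existence, one can then choose $\mathcal{X}$ exactly as in the iterative argument sketched before Corollary~\ref{Better-Achievability}, so that $\mathbf{D}$ simultaneously provides the required partial covering (hence the prescribed $\gamma$ via Theorem~\ref{Bridge}) and has column weight $\Theta(\sqrt{\log_q(N)})$. The total weight $\omega(\mathbf{D})$ is then $\Theta(N\sqrt{\log_q(N)})$, and in particular $\delta = \omega(\mathbf{D})/(KN) \doteq \sqrt{\log_q(N)}/N$, as claimed. The main obstacle I anticipate is precisely the second-moment estimate under the sparse ensemble: unlike in Blinovskii's fully-random setting where covering probabilities factor cleanly, here the sparse supports introduce correlations between the syndromes of overlapping error patterns, and carefully isolating the dominant contribution (from pairs with small support overlap) is what forces the quadratic condition yielding the $(\sqrt{5}-1)/2$ threshold. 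Once this estimate is in hand, the rest of the proof follows by a union bound and a careful derandomization to a single good $\mathbf{D}$.
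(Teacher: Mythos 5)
There is a genuine gap in your proposal, on two counts.

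\textbf{The construction is different and your version has an unresolved core step.} The paper does \emph{not} sample a single $K\times N$ matrix $\mathbf{D}$ from an i.i.d.\ sparse-column (LDPC-like) ensemble. Instead it builds $\mathbf{D}$ as a \emph{block-diagonal concatenation}: it first proves (Theorem~\ref{partial-covering-all}, an extension of Blinovskii) that almost all \emph{fully dense} random $[n,k_n]$ linear codes are $(\rho,\tau)$-partial covering at rate $k_n/n \leq 1-\tau-H_q(\rho)+O(n^{-1}\log_q n)$, then stacks $m_n$ parity-check matrices $\mathbf{H}_{\mathcal{C}_1},\ldots,\mathbf{H}_{\mathcal{C}_{m_n}}$ of such good codes on the diagonal of $\mathbf{D}_n$ (with $K=m_n(n-k_n)$, $N=m_n n$). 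Sparsity of $\mathbf{D}_n$ is an emergent property of the block structure, not of any one column: the nonzero fraction is $\leq n/N$, and the relation $m_n\doteq g(n)q^{k_n n}$ forces $n^2(1-R)\leq \log_q N$, hence $n\leq\sqrt{\log_q N/(1-R)}$ and $\Delta\leq\sqrt{\log_q N/(1-R)}$. Each block is dense, so the covering analysis of each block is just the Blinovskii/Cohen argument (via Lemmas~\ref{uniformity}--\ref{Bounding_l_lemma}), and the covering radius of the concatenation is the average of the blocks' radii. Your proposal, by contrast, requires showing that an LDPC-like ensemble with column weight $s=\Theta(\sqrt{\log_q N})$ is a good partial covering code directly. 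This is a much harder question: for such ensembles the codewords are not marginally uniform over $\mathbb{F}_q^n$ (the analogue of Lemma~\ref{uniformity} fails), syndromes of different error patterns are correlated through shared sparse column supports, and no second-moment argument of the Paley--Zygmund type is known to close for the sphere-covering radius in this regime. You correctly anticipate that "isolating the dominant contribution" is the crux, but this is not a routine estimate --- it is precisely the difficulty the paper's concatenation trick is designed to bypass, and you offer no concrete lemma that would replace it.

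\textbf{The constant $(\sqrt5-1)/2$ does not arise where you say it does.} You attribute it to "a quadratic-in-$\gamma$ loss factor induced by the sparsity of $\mathbf{D}$ in the second-moment estimate." In the paper, sparsity plays no role in this constant. It comes from Lemma~\ref{lower-bound-lemma} (proved in Appendix~\ref{M}), a purely combinatorial estimate showing $|\mathcal{B}(\mathbf{x},\rho)\setminus\mathcal{B}(\mathbf{0},\rho)|\geq q^{nH_q(\rho)-o(n)}$, which is invoked inside the dense-ensemble proof of Theorem~\ref{partial-covering-all} (needed to lower-bound the conditional mean $\mathbb{E}(\eta_{\mathbf{x}}\,|\,\mathbf{x}\in\mathcal{X}'_{\mathcal C})$). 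The quadratic $\rho^2+\rho-1\leq 0$ appears because the Stirling-bound parametrization in Appendix~\ref{M} chooses $j=n\rho(1-\rho)$ for $\rho>1/2$ and requires $2\rho n - n \leq j$, i.e.\ $2\rho-1\leq\rho(1-\rho)$, whence $\rho\leq(\sqrt5-1)/2$. The same threshold would appear even if $\mathbf{D}$ were fully dense; it is an artifact of bounding a Hamming-ball set difference, not of a sparsity--variance tradeoff. So your mechanism for the threshold is incorrect, even as a heuristic.

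In short, your reduction to Theorem~\ref{Bridge} and the high-level plan to produce a sparse partial-covering parity-check matrix are both right, but the concrete route you propose --- a direct sparse random ensemble plus second moment --- is not the paper's route and leaves the central estimate unproven, while the explanation you give for the $(\sqrt5-1)/2$ condition is not the correct one.
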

\begin{proof}
The proof can be found in Appendix~\ref{E_1}.
\end{proof}

We here offer a quick sketch of the proof of the above theorem. The proof first employs a modified version of the famous result by Blinovskii in~\cite{blinovskii1987lower} which proved that, as $n$ goes to infinity, almost all random linear codes $\mathcal{C}(k,n)$ are covering codes, as long as the normalized covering radius satisfies $\rho  \geq H_q^{-1}(\frac{n-k}{n})$. This modification of Blinovskii's theorem is presented in Theorem \ref{partial-covering-all}, whose proof is found in Appendix~\ref{E_1}. With this modification in place, we prove that almost all $(k,n)$ random linear codes with
\begin{align}
    \rho=H_q^{-1}(\frac{\log_q(|\mathcal{X}|) - k}{n}) \label{radii-relation}
\end{align}
are $(\rho,\mathcal{X})$-partial covering codes, each for their own set $\mathcal{X} \in \F^{n}$. This is again in Theorem~\ref{partial-covering-all}. With this theorem in place, we then employ a concatenation argument (which can be found in the proof of Theorem~\ref{Sparse-Theorem} in Appendix~\ref{E_1}), to build a sparse parity-check matrix $\mathbf{H}$ of a partial covering code, which --- by virtue of the connection made in Theorem~\ref{Bridge} --- allows us to complete the proof of Theorem~\ref{Sparse-Theorem}.

To show that sparse parity check codes can indeed offer reduced computation costs, we had to show that sparse codes can indeed offer good partial covering properties. To do that, we followed some of the steps described below.
In particular, we designed an algorithm that begins with constructing a sparse parity check code that can cover, for a given radius $\rho_0$, a minimum necessary cardinality set $\mathcal{X}_0$, where this minimum cardinality of  $|\mathcal{X}_{0}| = L q^{N-K}$ is imposed on us by $\mathbf{F}$. The parity-check matrix of this first code is $\mathbf{H}_{0}$. Then following the steps in the proof of Theorem~\ref{Bridge}, we set $\mathbf{D}=\mathbf{H}_0$ and check if $\mathcal{X}_0 \supseteq \mathcal{X}_\mathbf{\mathbf{F},\mathbf{D}}$ holds. If it indeed holds, the algorithm outputs $\mathbf{D}$ and $\mathcal{X}_{0}$, and the corresponding cost is $\gamma = \rho_0$, where this $\rho$ value is derived from \eqref{radii-relation} by setting $\mathcal{X}= \mathcal{X}_0$. Otherwise the algorithm constructs another sparse partial covering code with a new parity check matrix $\mathbf{H}_{1}$, now covering a set $\mathcal{X}_{1}$ with cardinality $|\mathcal{X}_1| = L q^{N-K} +1$, and then checks again the same inclusion condition as above. The procedure continues until it terminates, with some covered set $\mathcal{X}_{m}$ of cardinality $|\mathcal{X}_{m}| = Lq^{N-K}+m$. As before, reaching $\mathcal{X}_{m}= \F^{N}$ will terminate the algorithm (if it has not terminated before that).
In the proposition below, the set $\mathcal{X}$ is exactly our terminating set $\mathcal{X}_{m}$ we referred to above.

\begin{Proposition}\label{Better-Achievability-Sparse}
After adopting the achievable scheme proposed in Theorem~\ref{Sparse-Theorem} together with adopting the corresponding conditions on $\rho,\delta$ and its corresponding $\mathbf{D}$ that was designed as a function of $\mathbf{F}$, then if there exists a subset $\mathcal{X} \supseteq \mathcal{X}_{\mathbf{F}, \mathbf{D}}, \mathcal{X} \subseteq \F^{N}$, that is $\rho N $-covered by $\mathcal{C}_{\mathbf{D}}$ for some $ \rho = H_q^{-1}(\frac{K}{N} - (1 - \frac{ \log_q(|\X|)}{N}))$, we can conclude that the computation cost $ \gamma = H_q^{-1}(\frac{K}{N} - (1 - \frac{ \log_q(|\X|)}{N}))$ is achievable. If $\mathcal{X} = \mathcal{X}_{\mathbf{F}, \mathbf{D}}$, then the computation cost converges to the optimal $H_q^{-1}(\frac{\log_q(L)}{N})$. The above remains in place for any $\mathbf{D}$ which yields communication cost no less than $ \Delta = O(\sqrt{\log_q(N)})$.
\end{Proposition}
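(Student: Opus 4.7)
The plan is to adapt the iterative set-selection algorithm that drives Corollary~\ref{Better-Achievability}, replacing each intermediate \emph{dense} random partial covering code by a \emph{sparse} one produced by the Blinovskii-type extension (Theorem~\ref{partial-covering-all}) together with the concatenation step used in the proof of Theorem~\ref{Sparse-Theorem}. Concretely, at round $m$ I would fix a candidate target set $\mathcal{X}_m \subseteq \mathbb{F}^N$ of cardinality $Lq^{N-K}+m$ and apply Theorem~\ref{partial-covering-all} to produce a random linear code $\mathcal{C}_m$ of dimension $N-K$ that is $(\rho_m,\mathcal{X}_m)$-partial covering, where $\rho_m$ is obtained from $H_q(\rho_m) = (\log_q(|\mathcal{X}_m|) - (N-K))/N$, which rearranges to the expression in the proposition. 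Letting $\mathbf{D}_m$ be the parity-check matrix of $\mathcal{C}_m$, I would then test whether $\mathcal{X}_m \supseteq \mathcal{X}_{\mathbf{F},\mathbf{D}_m}$; if so the algorithm terminates, otherwise $m$ is incremented by one. Termination is automatic, since when $\mathcal{X}_m = \mathbb{F}^N$ the code $\mathcal{C}_m$ is an ordinary covering code and the inclusion holds trivially.

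Once the algorithm halts with a pair $(\mathcal{X},\mathbf{D})$, I would invoke Theorem~\ref{Bridge}: for each $\ell \in [L]$ one sets $\mathbf{E}(:,\ell)$ to the minimum-weight coset leader with syndrome $\mathbf{F}(:,\ell)$, which the partial-covering property constrains to have weight at most $\rho N$. This yields the desired computation cost $\gamma \leq H_q^{-1}(K/N - (1 - \log_q(|\mathcal{X}|)/N))$. For the distinguished case $\mathcal{X} = \mathcal{X}_{\mathbf{F},\mathbf{D}}$, the identity $|\mathcal{X}_{\mathbf{F},\mathbf{D}}| = Lq^{N-K}$ (obtained by counting the $q^{N-K}$ pre-images of each of the $L$ syndromes under $\mathbf{D}$) collapses the expression to $\gamma = H_q^{-1}(\log_q(L)/N)$, which matches the converse in Theorem~\ref{Achievability} and is therefore optimal. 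The communication-cost statement $\Delta = O(\sqrt{\log_q(N)})$ is inherited directly from Theorem~\ref{Sparse-Theorem}, since the concatenation machinery invoked at each round yields parity-check matrices whose row and column sparsity is controlled uniformly in the target set, as long as $\rho \leq \min\{(\sqrt{5}-1)/2,\, 1-1/q\}$; this latter condition is already imposed by adopting the sparse scheme of Theorem~\ref{Sparse-Theorem}.

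The hard part will be verifying that the sparse partial-covering guarantee of Theorem~\ref{partial-covering-all} can be applied \emph{uniformly} across the whole family of intermediate target sets $\{\mathcal{X}_m\}$, and that the row/column-sparsity bound from Theorem~\ref{Sparse-Theorem} survives every round of the iteration rather than only the terminal one. I would address this via a union-bound argument on top of Theorem~\ref{partial-covering-all}: its probabilistic statement about random sparse linear codes enjoys a failure probability that decays exponentially in $N$, which comfortably absorbs the at most $q^{N-K}(q^K - L) + 1$ possible rounds, so that at each step at least one sparse $\mathbf{H}_m$ with the required $(\rho_m,\mathcal{X}_m)$-partial-covering property exists. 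The threshold $\rho \leq (\sqrt{5}-1)/2$ is precisely the regime where these probabilistic bounds remain non-degenerate, which is why the proposition implicitly inherits that restriction through its reliance on the scheme of Theorem~\ref{Sparse-Theorem}.
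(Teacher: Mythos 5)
Your high-level plan is the right one --- piggyback on the iterative algorithm and block-diagonal concatenation from the proof of Theorem~\ref{Sparse-Theorem}, then invoke Theorem~\ref{Bridge} --- and this is indeed what the paper does. But you gloss over the single technical step that the paper's proof is actually about. In the sparse scheme, the covered set $\mathcal{X}$ is \emph{not} freely choosable: because $\mathbf{D}_n$ is block-diagonal, the set that $\mathcal{C}_{\mathbf{D}_n}$ actually $\rho N$-covers has the Cartesian-product form of Eq.~\eqref{C_D_n_covered_set}, namely $\mathcal{X} = \{[\mathbf{x}_1,\ldots,\mathbf{x}_{m_n}] : \mathbf{x}_i \in \mathcal{X}_i\}$ where each $\mathcal{X}_i$ is the set $\rho n$-covered by the $i$-th block code $\mathcal{C}_i$. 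You cannot hand the construction an arbitrary target set $\mathcal{X}_m$ of cardinality $Lq^{N-K}+m$ as you would in Corollary~\ref{Better-Achievability}; you only control $\mathcal{X}$ indirectly through the per-block parameter $\tau$. The paper's entire proof consists of establishing the cardinality bound $|\mathcal{X}| = \prod_{i} |\mathcal{X}_i| \geq q^{n m_n (1-\tau)} = q^{N(1-\tau)}$ (Eq.~\eqref{bounding-cardinality}, using the per-block lower bound~\eqref{Lower_Bound}), and then combining it with the rate identity $K/N = H_q(\rho) + \tau - \epsilon(N)$ (Eq.~\eqref{simple-result}) to get $\rho \leq H_q^{-1}\bigl(\tfrac{K}{N} - (1 - \tfrac{\log_q|\mathcal{X}|}{N}) + \epsilon(N)\bigr)$. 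Your stated relation $H_q(\rho_m) = (\log_q|\mathcal{X}_m| - (N-K))/N$ is the single-block (dense) identity of Lemma~\ref{CohenLemma} and only holds with an \emph{inequality} in the concatenated setting; without the product-cardinality bound, the computation-cost expression in the proposition does not follow.

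Separately, your union-bound justification for running Theorem~\ref{partial-covering-all} over all rounds is both unnecessary and incorrectly reasoned. The fraction of good codes in~\eqref{good-proportion} tends to $1$ roughly polylogarithmically in $n$, not exponentially in $N$, so a union bound over the (exponentially many) candidate rounds you mention would not close. The paper does not need one: each round is a fresh existence claim --- one constructs a new block-diagonal parity-check matrix from codes that Theorem~\ref{partial-covering-all} guarantees exist in abundance for the current $\tau$, and the only thing shared across rounds is the fixed $\mathbf{F}$, not the random codes. Finally, the step from $m=L$ to the optimal $\gamma = H_q^{-1}(\log_q(L)/N)$ should be read off the exact per-block rate relation~\eqref{Sparse-Computational-Cost} with $\tau = (K-\log_q(L))/N$, rather than from $|\mathcal{X}_{\mathbf{F},\mathbf{D}}| = Lq^{N-K}$ plugged into an equality that is only an inequality.
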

\begin{proof}
The proof can be found in Appendix~\ref{F}.
\end{proof}

\begin{subsection}{Discussing the Results of the Current Section} \label{Discussion}

Theorem~\ref{Sparse-Theorem} reveals that the optimal computation cost lies in the region $\gamma \in (H_q^{-1}(\frac{\log_q(L)}{N}), H_q^{-1}(\frac{K}{N} ))$, and that this cost can be achieved with communication cost that vanishes as $\delta \doteq \sqrt{\log_q(N)}/N$. To get a better sense of the improvements that come from our coded approach, let us compare this to the uncoded case. Looking at Figure~\ref{Trivial_Example_1}, this uncoded performance is described by (the line connecting) point 1 and point 2.
Point 1, located at ($\gamma = 1/N,\delta = 1$), corresponds to the fully decentralized scenario where each server must compute just one subfunction\footnote{Due to the single-shot assumption, this corresponds to having $N(q-1)=L$. This matches the converse --- in our large $N$ setting --- because after writing $L =  N (q-1) = \binom{N}{1}(q-1) \simeq q^{N H_q^{}(1/N)} = q^{N H_q^{}(\gamma)}$, we see that $H_q^{-1}(\frac{\log_q(L)}{N}) = \gamma = \frac{1}{N}$.}, but which in turn implies that each server must communicate to all $K$ users. This scenario corresponds to the decomposition $\mathbf{D} \mathbf{I} = \mathbf{F}$ where we maximally\footnote{Note that the stated $\delta = 1$ accounts for the worst-case scenario where $\mathbf{F}$ contains no zero elements.} sparsify $\mathbf{E}$ by setting it equal to $\mathbf{E} = \mathbf{I}_{N \times N}$.

On the other hand, point 2, located at ($\gamma = 1,\delta = 1/N$), corresponds to the fully centralized scenario where each of the $K$ activated servers\footnote{This number of activated users is again a consequence of the single-shot assumption.} is asked to compute all $L$ subfunctions, but where now each server need only transmit to a single user.
Point 5 is a trivial converse.
%The line connecting the two points (by employing time-sharing) describes the optimal performance under the one-shot uncoded assumption.

\begin{figure}
      \centering
      \[\includegraphics[scale=0.7]{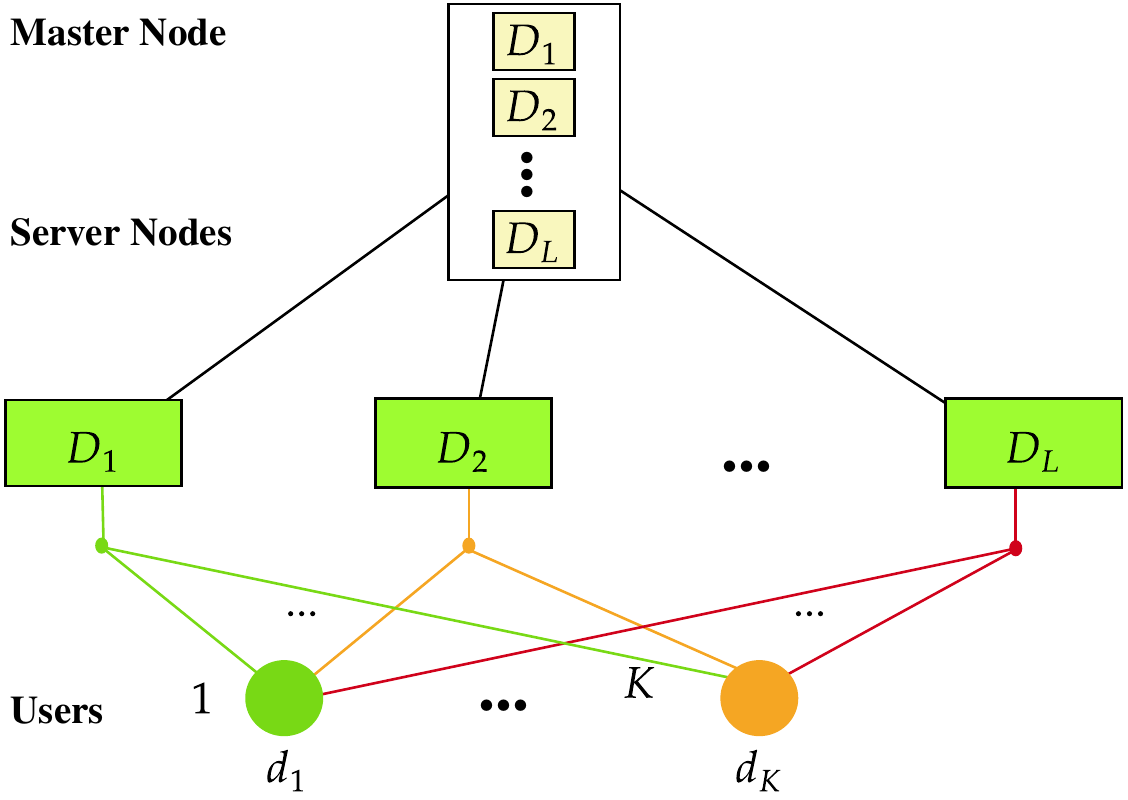}  \ \ \ \  \includegraphics[scale=0.7]{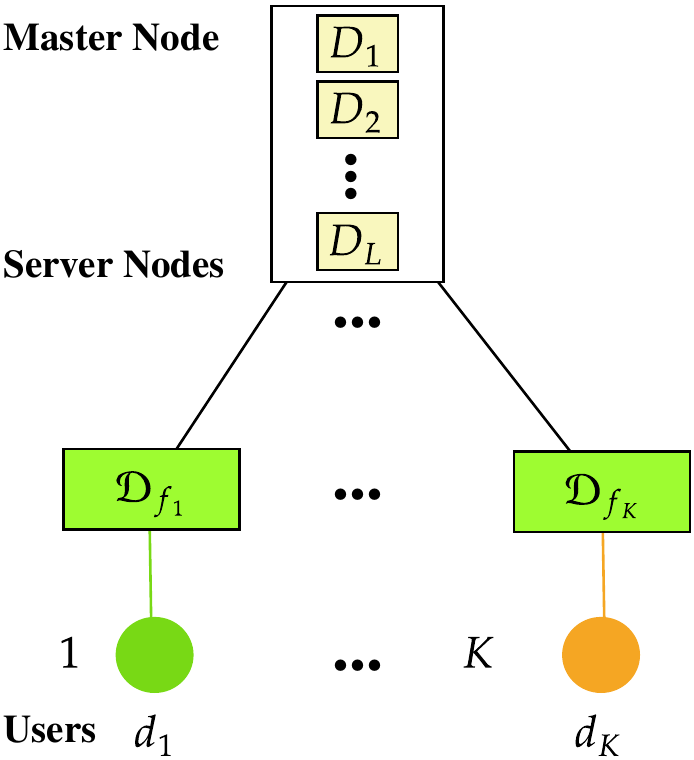}\]
      \caption{(Left. Fully decentralized): Uncoded scheme for point 1 corresponding to ($\gamma = 1/N,\delta = 1$). Each of the $N(q-1)=L$ servers, computes one subfunction, but must send to all $K$ users. (Right. Fully centralized):  Uncoded scheme for point 2 corresponding to ($\gamma = 1,\delta = 1/K$). $K$ activated servers, each computing $L$ subfunctions, and each transmitting to a single user.
    }
      \label{Trivial_Example_1}
  \end{figure}

\begin{figure}
      \centering
      \includegraphics[scale=0.8]{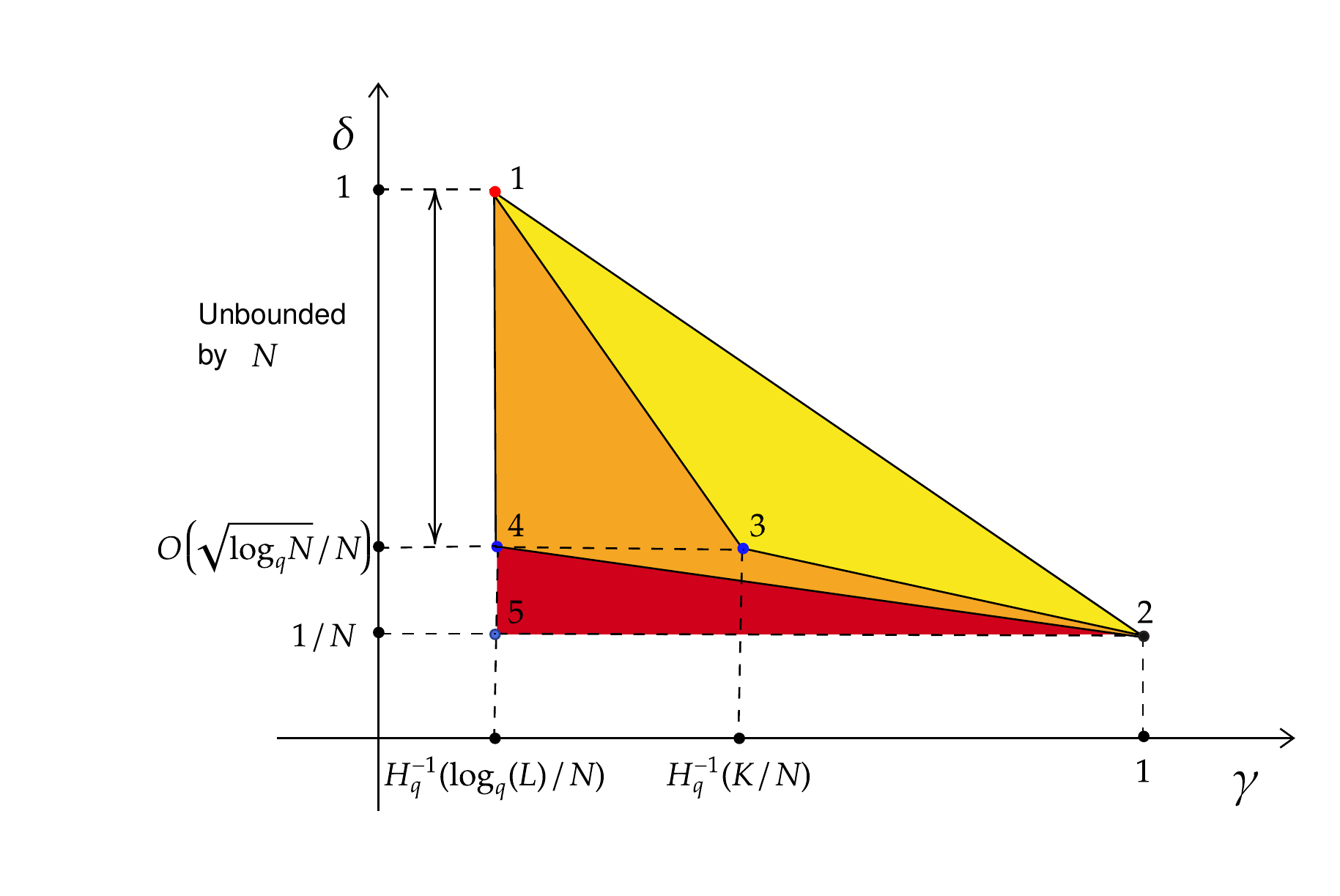}
      \caption{The figure summarizes the results of Theorem~\ref{Sparse-Theorem}. Recall that while $N$ is asymptotically large, both $K/N$ and $\log_q(L)/N$ are fixed.}
      \label{Comparison-fig}
  \end{figure}

From Theorem~\ref{Sparse-Theorem}, we now know that point $3$ at $(\gamma = H_q^{-1}(\frac{K}{N} ), \delta \doteq\sqrt{\log_q(N)})/N)$ is a guaranteed achievable point, and so is any point inside the triangle defined by points $1,2,3$. Any point inside the region defined by points $1,4,2,3$, is conditionally achievable in accordance to Theorem~\ref{Better-Achievability-Sparse}, and in particular in accordance to Corollary~\ref{Better-Achievability}. The converse also tells us that no point to the left of point $4$, i.e., no point with $\gamma < H^{-1}_q(\frac{\log_q(L)}{N})$, can be achieved.  Finally, the points inside the triangle defined by corner points $5,2,4$ could conceivably be achievable under additional techniques --- not covered in this paper --- that manage to further increase the sparsity of $\mathbf{D}$.

\end{subsection}

\section{Distributed Computing of Linearly-Separable Functions with Multi-Shot Communications ($T>1$)}\label{Multi-Shot}
In this section we present our results for the multi-shot setting where each server is able to broadcast $T$ consecutive transmissions to $T$ potentially different subsets of users. This is mainly motivated by the fact that having $T>1$, naturally allows us to employ fewer servers, but it is also motivated --- as we will discuss later on --- by an additional coding flexibility and refinement that multiple transmissions can provide. We briefly note that we assume as before that $K$ and $N$ are sufficiently large. 

\subsection{Problem Formulation}\label{Formulating-Multi}
The notation of the parameters that characterize the system will now generally follow directly from Section \ref{Formulating}, sometimes after clarifying the corresponding time-slot $t$ of interest.
For example, as before we will have 
\begin{align}
\mathbf{f}&\triangleq [F_1,F_2,\hdots,F_K]^{\intercal}, \\
\mathbf{f}_k &\triangleq [f_{k,1},f_{k,2},\hdots,f_{k,L}]^{\intercal},\: k \in [K],\label{function-vectors}\\
    \mathbf{w}&\triangleq [W_{1},W_{2},\hdots,W_{L}]^{\intercal},\label{message-vectors}\\
          \mathbf{f}'&\triangleq [F'_1,F'_2,\hdots,F'_K]^{\intercal},\\
             \mathbf{F}& \triangleq [\mathbf{f}_1,\mathbf{f}_2,\hdots,\mathbf{f}_K]^{\intercal}. \label{Demand-Matrix}    \end{align}
On the other hand, the notation for the encoding coefficients and the corresponding transmitted symbols during slot $t$, will now take the slightly modified form
    \begin{align}
    \mathbf{e}_{n,t} &\triangleq [e_{n,1,t},e_{n,2,t},\hdots, e_{n,L,t}]^\intercal,\: n \in [N],\: t \in [T], \label{encoding-vectors-per-shot-T}\\
       \mathbf{z}_{t} &\triangleq [z_{1,t},z_{2,t},\hdots, z_{N,t}]^ \intercal,\: t \in [T],\\
      \mathbf{z} &\triangleq [\mathbf{z}_{1}^{\intercal},\mathbf{z}^{\intercal}_{2},\hdots, \mathbf{z}^{\intercal}_{T}]^ \intercal\:
      \end{align}
with the corresponding modified
    \begin{align}
        \mathbf{E}_{t} &\triangleq [\mathbf{e}_{1,t},\mathbf{e}_{2,t},\hdots, \mathbf{e}_{N,t}]^\intercal,\: t \in [T] \label{encoding-vectors-per-shot-T-1}
\end{align}
while the corresponding decoding coefficients will now take the form
     \begin{align}
       \mathbf{d}_{k,t} &\triangleq [d_{k,1,t},d_{k,2,t},\hdots, d_{k,N,t}]^ \intercal,\: k \in [K], t \in [T], \label{decoding-vectors-per-shot}\\
        \mathbf{d}_k &\triangleq [\mathbf{d}_{k,1}^{\intercal},\mathbf{d}_{k,2}^{\intercal},\hdots, \mathbf{d}_{k,T}^{\intercal}]^ \intercal,\: k \in [K].\label{decoding-vectors}
\end{align}
We note that the decoding coefficients are decided as a function of all received signals throughout all $T$ transmissions.

As before, (cf.~\eqref{DefinitionOfLSFunctions}), we have that
\begin{align}
    \mathbf{f} =[\mathbf{f}_1,\mathbf{f}_2,\hdots,\mathbf{f}_K]^{\intercal} \mathbf{w}\label{Functions}
\end{align}
and now we use
\begin{align}
    \mathbf{z}_t = \mathbf{E}_t \mathbf{w} =[\mathbf{e}_{1,t}, \mathbf{e}_{2,t}, \hdots,\mathbf{e}_{N,t}]^\intercal \mathbf{w} \label{EncodedCashedData}
\end{align}
to denote the $t$-th slot transmission vector across all servers. The set of all transmissions now takes the form
\begin{align}
    \mathbf{z} = \mathbf{E} \mathbf{w}
\end{align}
where now the encoding matrix takes the form
\begin{align}
    \mathbf{E} \triangleq[\mathbf{E}^{\intercal}_1,\mathbf{E}^{\intercal}_2, \hdots, \mathbf{E}^{\intercal}_{T}]^{\intercal} \in \F^{NT \times L}. \label{EncodingMatrix}
\end{align}
Upon decoding, each user $k$ generates
\begin{align}
    F'_k= \mathbf{d}_{k}^{T} \mathbf{z}
\end{align}
and the cumulative set of all decoded elements across the users takes the form
\begin{align}
    \mathbf{f}' = [\mathbf{d}_1,\mathbf{d}_2,\hdots,\mathbf{d}_K]^{\intercal} \mathbf{z}.\label{DecodedData}
\end{align}
Now we note that our decoding matrix
\begin{align}
    \mathbf{D} \triangleq [\mathbf{d}_1,\mathbf{d}_2, \hdots , \mathbf{d}_K]^{\intercal} \in \F^{K \times NT} \label{DecodingMatrix}
\end{align}
is of dimension ${K \times NT}$. Naturally, correct decoding requires
\begin{align}
    \mathbf{f}=\mathbf{f}'\label{feasibility}
\end{align}
and after substituting \eqref{Functions}, \eqref{EncodedCashedData}, \eqref{DecodedData} into \eqref{feasibility}, we can conclude as before that computing succeeds if and only if
\begin{align}
    \mathbf{D}\mathbf{E} = \mathbf{F}.\label{MainEquation-Multi}
\end{align}
The problem remains similar to the one in the single-shot scenario, except that now our decoding matrix $\mathbf{D} \in \F ^ {K \times NT}$ and encoding matrix $\mathbf{E} \in \F^{NT \times L}$ are bigger\footnote{The size of $\mathbf{F} \in \F^{K \times L}$ remains the same, and thus again we have $L \leq q^{K}$.} and can have a certain restrictive structure.

Again similar to before, each server $n \in [N]$ is asked to compute all the subfunctions in $\cup^{T}_{t=1} \sup(\mathbf{e}_{n,t})$, and thus equivalently the set of servers $\W_{\ell}$ that must compute subfunction $f_\ell(D_\ell)$, takes the form
\begin{align}
   \W_{\ell} = \cup^{T}_{t=1} \sup(\mathbf{E}([(t-1)N+1 : tN], \{\ell \})^{\intercal}), \forall \ell \in [L],\: \forall t \in [T].\label{New-Computaton-Cost-2}
\end{align}

The following theorem provides an achievable upper bound on the computation cost $\gamma$ of our distributed computing setting for the multi-shot scenario.

\begin{Theorem}\label{Achievability-multi}
For the setting of distributed-computing of linearly-separable functions, with $K$ users, $N$ servers, $L$ sub-functions and $T$ shots, the optimal computation cost $\gamma$ is upper bounded by
  \begin{align}
      \gamma &\leq T H_q^{-1}(\frac{K}{NT} ).\label{Achievable-interval-T}
  \end{align}
\end{Theorem}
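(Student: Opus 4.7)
The plan is to reduce the multi-shot setting to the single-shot setting with $NT$ virtual servers, and then invoke the achievability half of Theorem~\ref{Achievability}. Observe from \eqref{MainEquation-Multi} that the feasibility condition $\mathbf{D}\mathbf{E} = \mathbf{F}$ with $\mathbf{D} \in \F^{K \times NT}$ and $\mathbf{E} \in \F^{NT \times L}$ has exactly the same matrix-factorization structure as in the single-shot case, only with $N$ replaced by $N' = NT$. Moreover, \eqref{DecodingMatrix} imposes no block-structural constraint on $\mathbf{D}$, and (cf.~\eqref{decoding-vectors-per-shot}--\eqref{decoding-vectors}) each user may employ arbitrary decoding coefficients across the $T$ slots, so any valid single-shot decomposition for an $NT$-server system lifts to a valid multi-shot decomposition for our $(N,T)$-server system.

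Applying the achievability direction of Theorem~\ref{Achievability} with $N' = NT$ (via the partial-covering-code construction supplied by Theorem~\ref{Bridge} together with Lemma~\ref{CohenLemma}), I obtain a factorization $\mathbf{D}\mathbf{E} = \mathbf{F}$ such that
\begin{align}
\max_{\ell \in [L]} \omega(\mathbf{E}(:,\ell)) \;\leq\; NT \cdot H_q^{-1}\!\left(\frac{K}{NT}\right).
\end{align}
The next step is to translate this column-weight bound on $\mathbf{E}$ into a bound on the physical normalized computation cost $\gamma$. From~\eqref{New-Computaton-Cost-2}, the set of physical servers that must compute $f_\ell(D_\ell)$ is a union over $T$ time-slot supports, so by subadditivity of cardinality under union,
\begin{align}
|\W_\ell| \;\leq\; \sum_{t=1}^{T} \omega\bigl(\mathbf{E}([(t-1)N+1 : tN], \ell)\bigr) \;=\; \omega(\mathbf{E}(:,\ell)).
\end{align}
Combining the two displays and normalizing by $N$ yields $\gamma = \max_{\ell} |\W_\ell|/N \;\leq\; T\, H_q^{-1}(K/(NT))$, which is exactly the claimed bound~\eqref{Achievable-interval-T}.

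The only step that warrants careful argument, rather than being taken for granted, is the reduction itself: one has to verify that no hidden per-shot structural constraint on $\mathbf{E}$ or $\mathbf{D}$ is overridden when collapsing the $T$ time slots into a single virtual shot with $NT$ servers. Inspecting~\eqref{encoding-vectors-per-shot-T}--\eqref{EncodingMatrix} and~\eqref{decoding-vectors-per-shot}--\eqref{DecodingMatrix} confirms that all entries of $\mathbf{E}$ and $\mathbf{D}$ can be chosen freely in $\F$, so the reduction is clean. I do not anticipate any substantive obstacle beyond making this virtual-server reduction explicit and, if desired, carrying along the $\epsilon(N)$ vanishing terms inherited from the single-shot achievability proof of Theorem~\ref{Achievability}.
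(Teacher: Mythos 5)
Your proposal is correct and follows essentially the same route as the paper: apply the single-shot covering-code achievability to a code of length $NT$ (virtual servers), obtain $\max_{\ell}\omega(\mathbf{E}(:,\ell)) \le NT\,H_q^{-1}(K/(NT))$, and then use the union bound $|\W_\ell|\le\omega(\mathbf{E}(:,\ell))$ from~\eqref{New-Computaton-Cost-2} to translate this into the physical computation cost normalized by $N$, giving $\gamma \le T H_q^{-1}(K/(NT))$. A very minor misattribution: the achievability of Theorem~\ref{Achievability} that you invoke rests directly on the Cohen--Frankl covering-code bound (cf.\ Appendix~\ref{B}), not on Lemma~\ref{CohenLemma} which is used for Corollary~\ref{Better-Achievability}; this has no bearing on the validity of the argument.
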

\begin{proof}
We first note that, directly from~\eqref{New-Computaton-Cost-2} and the union bound, we have that \begin{align}
    \underset{\ell \in [L]}{\max}\: \omega(\mathbf{E}(:,\ell)) \geq \max_{\ell \in [L]} |\W_{\ell}| \label{bound-on-the-commp-cost}
\end{align}
and thus our normalized computation cost will be upper bounded as \[\gamma \leq {  \underset{\ell \in [L]}{\max}\: \omega(\mathbf{E}(:,\ell))}/{N}.\]
To bound $\underset{\ell \in [L]}{\max}\: \omega(\mathbf{E}(:,\ell))$, we apply covering code arguments as in the single-shot case, after though accounting for the dimensionality change from having larger matrices. In particular, this means that now the corresponding covering code $\mathcal{C}(n,k)$ will have $n=NT$ and again $K =n-k$ (now we only ask that $NT \geq K$). To account for this increase in $n$, we note that while the computation cost must still be normalized by the same number of servers $N$, when considering our covering code\footnote{Let us quickly recall that in the previous single-shot scenario, a covering code with covering radius $\rho n = \rho N$ implied a computation cost of $\gamma N= \rho N$ and thus a normalized computation cost of $\gamma= \rho$.}, we must consider a radius $\frac{\gamma}{T} n = \frac{\gamma}{T} NT  = \gamma N$ to guarantee our computation constraint. In other words, the $\rho$-covering codes that will guarantee the computation constraint, will be for $\rho = \gamma /T$. Consequently, combined with the aforementioned union bound, we now see that $\rho T$ serves as an achievable upper bound on $\gamma$. The rest follows directly from the proof of the corresponding theorem in the single-shot scenario.
\end{proof}

The following two propositions help us make sense of the computational effect of having $T>1$.
\begin{Proposition}\label{Proposition-T-1}
\label{T-inf}
In the distributed computing setting of interest in the limit of large $T$, the normalized computation cost $\gamma$ vanishes to zero.
\end{Proposition}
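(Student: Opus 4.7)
The plan is to invoke the achievable upper bound from Theorem~\ref{Achievability-multi}, namely $\gamma \leq T H_q^{-1}(K/(NT))$, and to show that the right-hand side vanishes as $T \to \infty$ with $K$ and $N$ held fixed (or, more generally, with $K/N$ bounded). The whole statement is therefore a one-variable asymptotic claim about $T\, H_q^{-1}(K/(NT))$, and the work will be in controlling the $q$-ary entropy inverse near zero.

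First I would make the change of variables $x \triangleq K/(NT)$, so that $T = K/(Nx)$ and $x \to 0^{+}$ as $T \to \infty$. Under this substitution the upper bound rewrites as
\begin{align}
    T\, H_q^{-1}\!\left(\frac{K}{NT}\right) \;=\; \frac{K}{N}\cdot\frac{H_q^{-1}(x)}{x}.
\end{align}
Hence the proposition reduces to the one-line asymptotic claim that $H_q^{-1}(x)/x \to 0$ as $x \to 0^{+}$.

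Next I would verify that one-line claim by analyzing $H_q$ near the origin. Writing $y = H_q^{-1}(x)$ and using the definition from the Notations,
\begin{align}
    x \;=\; y\log_q(q-1) \;-\; y\log_q(y) \;-\; (1-y)\log_q(1-y).
\end{align}
For small $y$ the term $-y\log_q(y)$ dominates, while $y\log_q(q-1)$ is linear in $y$ and $-(1-y)\log_q(1-y) = O(y)$ by a Taylor expansion of $\log_q(1-y)$. Dividing through by $y$, I get $x/y = \log_q(1/y)\,(1+o(1))$, which diverges to $+\infty$ as $y \to 0^{+}$; equivalently, $y/x = H_q^{-1}(x)/x \to 0$ as $x \to 0^{+}$. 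Plugging this into the displayed equality yields $T\,H_q^{-1}(K/(NT)) \to 0$, and combined with Theorem~\ref{Achievability-multi} this proves $\gamma \to 0$.

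The only delicate point is the inversion of $H_q$ near zero, and I would make it fully rigorous by exhibiting explicit two-sided bounds on $H_q^{-1}$ on a small neighborhood of $0$ (using monotonicity of $H_q$ on $[0, 1-1/q]$ together with the straightforward inequalities $-y\log_q(y) \le H_q(y) \le -y\log_q(y) + Cy$ valid for $y$ small enough and some constant $C = C(q)$), rather than relying on $o(\cdot)$ notation. Beyond this, the argument is a direct asymptotic calculation; no coding-theoretic machinery beyond the statement of Theorem~\ref{Achievability-multi} is required.
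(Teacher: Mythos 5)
Your reduction and conclusion are correct, and the argument is a genuinely different (and more elementary) route than the paper's. Both proofs rest on the same analytic fact --- that $H_q(y)$ is dominated by the $-y\log_q(y)$ term near the origin --- but they exploit it differently. You change variables to $x = K/(NT)$ so that the whole claim collapses to $H_q^{-1}(x)/x \to 0$ as $x\to 0^{+}$, and then you verify this directly by dividing the defining identity $x = H_q(y)$ by $y$ and observing that $x/y = -\log_q(y) + O(1) \to \infty$. This is self-contained, requires only monotonicity of $H_q$ on $[0,1-1/q]$ and an elementary Taylor bound, and avoids any special-function machinery. The paper's Appendix~N instead bounds $H_q(y)$ from below by $h(y) = -y\log_q(y)$, inverts $h$ in closed form via the Lambert $W$ function ($h^{-1}(x) = e^{W(-\ln(q)\,x)}$), and then drives the limit $\lim_{T\to\infty} T e^{W(-c/T)} = 0$ by L'H\^{o}pital's rule. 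The paper's route gives an explicit inverse and a clean closed form at each step, at the cost of invoking $W$ and L'H\^{o}pital; your route skips the inversion entirely and is arguably shorter and easier to audit. Your plan to replace the $o(\cdot)$ heuristic with explicit two-sided bounds on $H_q$ near $0$ is exactly what is needed to make this fully rigorous, and the bound $-y\log_q(y) \le H_q(y) \le -y\log_q(y) + Cy$ you propose (for small $y$ and some $C=C(q)$) is correct and suffices: it gives $H_q^{-1}(x)/x \le 1/\log_q\bigl(1/H_q^{-1}(x)\bigr) \to 0$.
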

\begin{proof}
The proof is direct once we prove that for any fixed $c$, then
\begin{align}
  \lim_{T \rightarrow{ \infty}}
  T H_q^{-1}(c/T) =0.
\end{align}
This property will be proved in Appendix~\ref{N}.
\end{proof}

In a system with an unchanged number of users and servers, the above reveals the notable (unbounded) computational advantage of allowing a large number $T$ of distinct transmissions per server. This advantage of the multi-shot approach must be seen in light of the fact that in the single-shot approach, the computation cost $\gamma$ was always bounded below by a fixed $\gamma \geq H_q^{-1}(\frac{\log_q(L)}{N})$, irrespective of the communication cost. Consequently we can deduce that the computational gains that we see in the regime of larger $T$, are --- at least partly --- a result of the increased refinement in transmission that a larger $T$ allows, and it should not be solely attributed to an increased communication cost.

The following proposition discusses the non-asymptotic computational effect of increasing $T$ beyond $1$. Recall that our results hold for sufficiently large $K$ and $N$.
\begin{Proposition}\label{Derivative}
For $q=2$, then $\gamma$ monotonically decreases in $T$, while for $q>2$ then $\gamma$ monotonically decreases in $T$ after any $T\geq \lceil\frac{K}{NH^{-1}_q(1/q)}\rceil$.
\end{Proposition}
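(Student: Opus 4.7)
The plan is to treat the upper bound $h(T) \triangleq T H_q^{-1}(K/(NT))$ furnished by Theorem~\ref{Achievability-multi} as a smooth function of a real variable $T \geq K/N$, show $h'(T) \leq 0$ via a single algebraic identity involving the $q$-ary entropy function, and then read off the two cases at the end. Monotonic non-increase at integer $T$ follows automatically from monotonic non-increase on the continuous interval.

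First, I would substitute $u \triangleq K/(NT)$ and $v \triangleq H_q^{-1}(u) \in [0, 1-1/q]$, so that $u = H_q(v)$ and, by the inverse function theorem, $(H_q^{-1})'(u) = 1/H_q'(v)$. A chain-rule computation then gives
\[
h'(T) \;=\; H_q^{-1}(u) \;+\; T\,(H_q^{-1})'(u)\left(-\frac{K}{NT^{2}}\right) \;=\; v - \frac{H_q(v)}{H_q'(v)},
\]
so on the interior of the domain, where $H_q'(v) > 0$, the sign of $h'(T)$ coincides with the sign of $v H_q'(v) - H_q(v)$.

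Second, I would expand using
\[
H_q(v) = v\log_q(q-1) - v\log_q v - (1-v)\log_q(1-v), \qquad H_q'(v) = \log_q\!\left(\tfrac{(q-1)(1-v)}{v}\right),
\]
upon which the terms $v\log_q(q-1)$ and $v\log_q v$ cancel between $v H_q'(v)$ and $H_q(v)$, leaving the clean identity
\[
v H_q'(v) - H_q(v) \;=\; v\log_q(1-v) + (1-v)\log_q(1-v) \;=\; \log_q(1-v) \;\leq\; 0
\]
for every $v \in [0,1)$. This establishes $h'(T) \leq 0$ throughout the admissible range, and consequently $h(T+1) \leq h(T)$ at every integer $T \geq K/N$.

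The two cases of the proposition are then extracted as follows. For $q = 2$, the calculation above gives monotonic non-increase over the entire admissible range, which is exactly what the proposition asserts. For $q > 2$, the threshold $T \geq \lceil K/(N H_q^{-1}(1/q))\rceil$ is equivalent to $u = K/(NT) \leq H_q^{-1}(1/q)$, which keeps $v$ and hence $v/H_q'(v)$ in a regime safely bounded away from the endpoint $1-1/q$, where the derivative bound translates most cleanly to the integer-step finite difference without having to control the singularity of $(H_q^{-1})'$ near $u=1$. I expect the main technical step to be verifying the cancellation that collapses $v H_q'(v) - H_q(v)$ into the single term $\log_q(1-v)$; once this identity is in hand the rest is routine, with the only residual subtlety being the interpretation of the $q>2$ threshold as the range in which the continuous estimate most cleanly yields the stated discrete monotonicity.
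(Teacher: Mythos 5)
Your derivative computation is correct, and the identity $vH_q'(v) - H_q(v) = \log_q(1-v)$ is a genuinely cleaner route to the sign of $h'(T)$ than what the paper does. The paper also differentiates $c/T = H_q(f/T)$ implicitly, but its intermediate formula (equation~\eqref{derivated} / \eqref{the-t-emtropy-function}) carries the factor $\log_q\!\left(\tfrac{f/T}{1-f/T}(q-1)\right)$, whereas the correct derivative of the entropy gives $-H_q'(v) = \log_q\!\left(\tfrac{v}{(q-1)(1-v)}\right)$ --- i.e., the $(q-1)$ should sit in the denominator of the argument, not the numerator. These agree only at $q=2$ (where $\log_q(q-1)=0$). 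With the corrected factor, one arrives exactly at your expression $h'(T) = v - H_q(v)/H_q'(v)$, and your identity shows the numerator $vH_q'(v)-H_q(v) = \log_q(1-v) \leq 0$ while $H_q'(v) > 0$ for every $v\in[0,1-1/q)$. So $h'(T)\leq 0$ on the entire admissible range, for every $q$, with no threshold at all.

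Where your write-up goes soft is the final paragraph. The $q>2$ threshold $T\geq\lceil K/(NH_q^{-1}(1/q))\rceil$ is \emph{not} a regularity device needed to pass from the continuous derivative to the integer finite difference --- monotone non-increase of a differentiable function on an interval already yields $h(T+1)\leq h(T)$ at every integer in that interval, with no further control required. The threshold is an artifact of the paper's $(q-1)$ slip: with $\log_q\!\left(\tfrac{v(q-1)}{1-v}\right)$ in the denominator, the sign of that log flips at $v=1/q$, which is why the paper declares nonpositivity only for $v\leq 1/q$ and then translates that into a condition on $T$. Your calculation shows the conclusion actually holds without restriction. Rather than inventing an ad hoc rationale for the threshold, you should have noted explicitly that your argument establishes the stronger statement --- monotone decrease for all admissible $T$ and all $q\geq 2$ --- which trivially subsumes both cases in the proposition as written.
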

\begin{proof}
The proof is based on the fact that the derivative of $f=T H_q^{-1}(c/T), 0\leq c/T \leq 1-1/q$, with respect to $T$, satisfies
\begin{align}
{\frac{\partial f}{\partial T}} = \frac{H_q(f/T)}{\log_q\left( \frac{f/T}{1-f/T}  (q-1)\right)} + f/T.\label{the-t-emtropy-function}
\end{align}
This is proved in Appendix~\ref{O}. From the above, and after observing that ${\frac{\partial f}{\partial T}} \leq 0$ where $0\leq H_q^{-1}(K/NT) = f/T\leq 1/q$, we can conclude that since $0\leq H_q^{-1}(K/NT) = f/T \leq 1/2=1-1/q$, then for $q=2$, increasing $T$ always strictly reduces $\gamma$.  On the other hand, when $q>2$, this reduction happens --- as we see above --- when $T\geq \lceil T_0 \rceil $ for some real $T_0$ for which $H_q(K/N{T_{0}}) = 1/q$.
\end{proof}

\section{Conclusions}\label{conclusion}
In this work we have introduced a new multi-user distributed-computation setting for computing from the broad class of linearly-separable functions.

Our work revealed the link between distributed computing and the problem of factorizing a `functions' matrix $\mathbf{F}$ into a product of two preferably sparse matrices, these being the encoding matrix $\mathbf{E}$ and the decoding matrix $\mathbf{D}$. The work then made the new connection to the area of covering codes, revealing for the first time the importance of these codes in distributed computing problems, as well as in sparse matrix factorization over finite fields. Furthermore, this work here brought to the fore the concept of partial covering codes, and the need for codes that cover well smaller subsets of the ambient vector space. For this new class of codes --- which constitute a generalization of covering codes --- we have provided some extensions and generalizations of well-studied results in the literature.

Our two metrics --- $\gamma$, representing the maximum fraction of all servers that must compute any subfunction, and $\delta$, representing the average fraction of servers that each user gets data from --- capture the computation and communication costs, which are often at the very core of distributed computing problems. The observant reader might notice that the creation of $\mathbf{E}$ entails a complexity equal to that of syndrome decoding. Our results hold unchanged when we consider --- as suggested before --- that the computational cost of evaluating the various subfunctions, far exceeds all other costs. What the results reveal is that in the large $N$ regime, the optimal computation cost lies in the region $\gamma \in (H_q^{-1}(\frac{\log_q(L)}{N}), H_q^{-1}(\frac{K}{N} ))$, and that this entails the use of a vanishingly small fraction $\delta \doteq \sqrt{\log_q(N)}/N$ of all communication resources. What we show is that our coded approach yields unbounded gains over the uncoded scenario, in the sense that the ratio $\frac{\delta_{un}(\gamma)}{\delta(\gamma)}$ between the uncoded and coded communication costs, is unbounded. 

We have also studied the multi-shot setting, where we have explored the gains over the single-shot approach. What we now know is that the gains from increasing $T$, are unbounded (and strictly increasing) in the regime of large $T$, whereas in the regime of finite $T$, the gains are strictly increasing after some threshold value of $T$. We are thus able to conclude, as suggested before, that computation reductions due to larger $T$, are --- at least partly --- a result of the increased refinement in transmission that a larger $T$ allows, and that these gains should not be interpreted as being purely the result of an increased communication load.

Our work naturally relates partly to the recent results in~\cite{kai1} that considered the single-user linearly-separable distributed computing scenario, where a single user may request multiple linearly-separable functions. In this setting in~\cite{kai1}, as well as in the extended works in~\cite{kai2} and \cite{kai3}, a key ingredient is the presence of struggling servers, while another key ingredient is that the subfunction-assignment is fixed and oblivious of the actual functions requested by the user.  In this context, the coefficients of the functions are assumed to be distributed uniformly and \emph{i.i.d}, and the decodability is probabilistic. There is also an interesting connection (cf.~\cite{das2013finite,dimakisCompressed}) between compressed sensing and coding theory. Naturally this connection entails no link to covering codes, as the problem of compressed sensing relates to decodability and is very different from the existence problem that we are faced with. 

As suggested above, our setting can apply to a broad range of `well-behaved' functions, and thus can enjoy several use cases, some of which are suggested in our introduction (see also~\cite{kai1} for additional motivation of the linearly separable function computation problem). When considering problems over the real numbers, we may consider a very large $q$. 
An additional new scenario that our work can extend is the so-called hierarchical or tree-like scenario introduced in  \cite{reisizadeh2021codedreduce,reisizadeh2019tree} whose purpose is to ameliorate bandwidth limitations and straggler effects in distributed gradient coding~\cite{lee2017speeding}. In this hierarchical setting, each user\footnote{In \cite{reisizadeh2021codedreduce}, these users are referred to as master nodes.} is connected to a group of servers in a hierarchical manner\footnote{In particular, each user computes a linearly separable function based on its locally available data, and then sends this to the `Aggregator' that finally computes the gradient.} that allows for a hierarchical aggregation of the sub-gradients. Our approach can extend the hierarchical model by allowing the users to connect to any subset of servers, as well as by allowing them to deviate from the single-shot assumption. Finally as one might expect, our analysis also applies to the transposed computing problem corresponding to $\mathbf{E}^{\intercal }\mathbf{D}^{\intercal} = \mathbf{F}^{\intercal}$. Additional considerations that involve stragglers, channel unevenness or computational heterogeneity, are all interesting research directions.

\newpage
\appendices

 \begin{figure}
      \centering
      \includegraphics[scale=0.7]{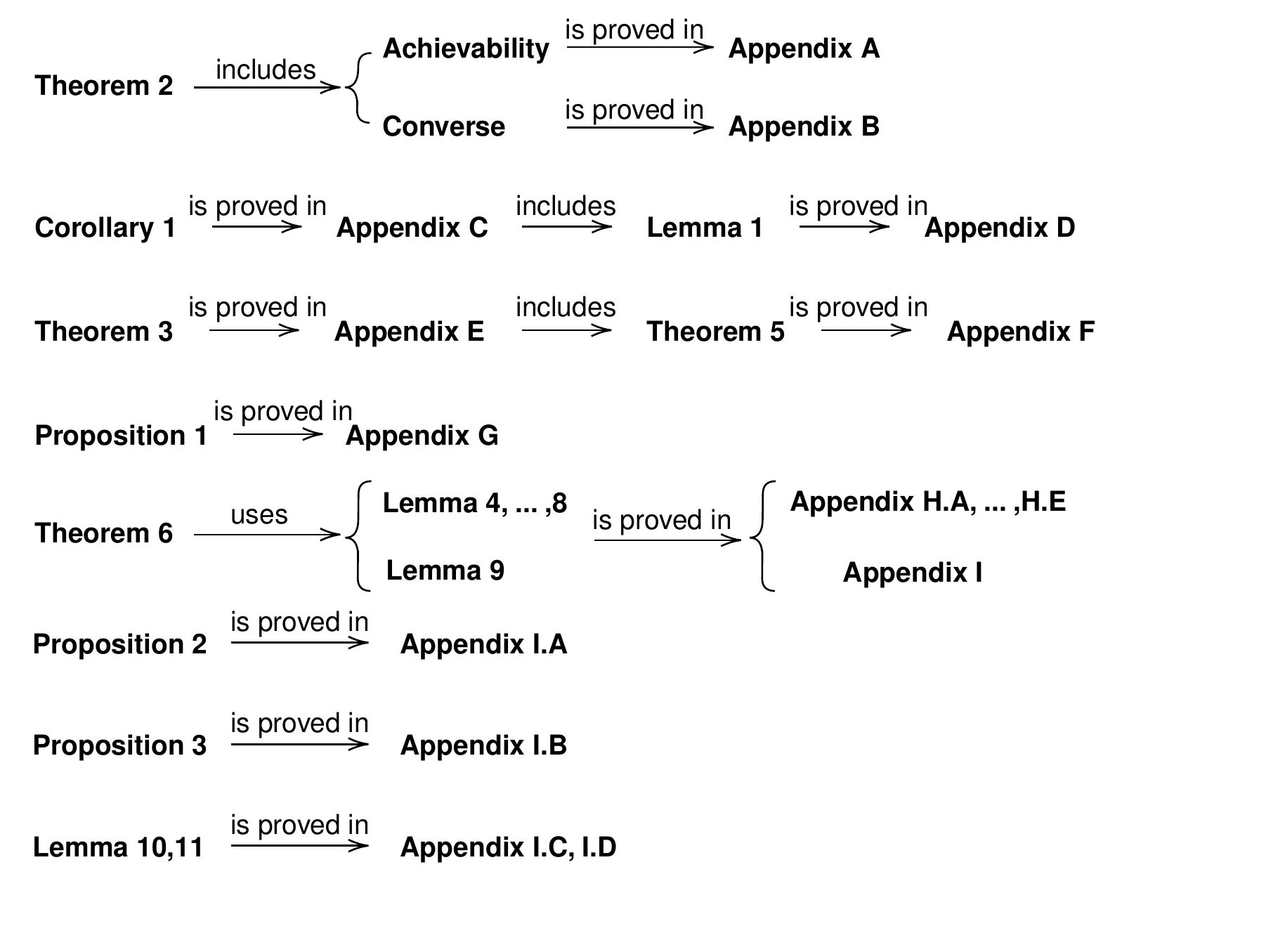}
      \caption{Map of lemmas, theorems and appendices.}
      \label{Tree-3}
  \end{figure}

\section{Proof of Converse in Theorem~\ref{Achievability}}\label{A_0}
To prove the converse in~\eqref{Achievable-interval}, we modify the sphere-covering bound for the case of partial covering codes. We wish to show that for a set $\mathcal{X}$ that satisfies $ \mathcal{X} \subseteq\F^n_q,\:|\mathcal{X}| = q^{k}L,\: k \in \mathbb{N}$,  a $(\rho,\mathcal{X})$-partial covering code $\mathcal{C}(k,n)$ has to  satisfy
\begin{align}
   \log_q(L) \leq \log_q(V_q(n,\rho)). \label{Partial-Covering-code-converse}
\end{align}
This is easy to show because having $q^{k}$ codewords directly means that the maximum number of points they can jointly $\rho n$-cover is equal to $q^{k} V_q(n,\rho)$. This in turn implies that
\begin{align}
    L q^{k} \leq V_q(n,\rho) q^{k}
\end{align}
which yields \eqref{Partial-Covering-code-converse} after taking the logarithm on both sides of the inequality.

Now letting the above $\mathcal{X}$ be the $\mathcal{X}$ found in Theorem~\ref{Bridge}, we note that if $|\mathcal{X}|=Lq^{k}$ then $\mathcal{X}=\mathcal{X}_F$. Then by substituting $N=n,K=n-k$, we see that $\log_q(L) \leq \log_q(V_q(N,\rho))$. Since $q^{NH_q(\rho)-o(N)} \leq V_q(N,\rho) \leq q^{NH_q(\rho)}$, we can conclude that $\log_q(L) \leq {NH_q(\rho)}$ and thus that $ H_q^{-1}(\frac{\log_q(L)}{N}) \leq \rho$, which concludes the proof.
\qed

\section{Proof of Achievability in Theorem \ref{Achievability}}\label{B}
Directly from~\cite{cohen1985good}, we know that there exists at least one $\rho$-covering code $\mathcal{C}_{\mathcal{X}}(k,n)$ that satisfies
\begin{align}
    n-k \geq \log_q(V_q(n,\rho)) -  2 \log_2(n)+ \log_q(n) - O(1).\label{First-Achievability}
\end{align}
Then applying Theorem~\ref{Bridge} with $\mathbf{D}=\mathbf{H}_{\mathcal{C}}, N=n,K=n-k$ and $\mathcal{X}=\F^{n}$, allows us to conclude that there exists a feasible scheme for the distributed computing problem, with computation cost $\gamma=\rho$, that satisfies
\begin{align}
    K/N \geq \log_q(V_q(N,\rho))/N - 2 \log_2(N)/N+\log_q(N)/N -O(1)/N.
\end{align}
Combining this with the fact that $q^{NH_q(\rho)-o(N)} \leq V_q(N,\rho) \leq q^{NH_q(\rho)}$, yields
\begin{align}
    K/N \geq H_q(\rho) -\epsilon(N)
\end{align}
which tells us that $\rho \leq H_q^{-1}(K/N+\epsilon(N))$, which in turn proves the result in the limit of large $N$. \qed

\section{Proof of Corollary \ref{Better-Achievability}}\label{A_1}
We first start with the following lemma which proves the existence of a $(\rho,\mathcal{X})$-partial covering linear code $\mathcal{C}$, for a properly-sized set $\mathcal{X} \subseteq \F^n$ that encloses $\mathcal{B}_q(0,\rho)$. Before proceeding with the lemma, we note that the lemma is an outcome of involving a linear greedy algorithm. Let us also briefly recall from Theorem~\ref{Achievability} and its proof in Appendix~\ref{A_0}, that $\log_q(L) \leq \log_q(V_q(n,\rho))$.

\begin{Lemma} \label{CohenLemma}
Let $\mathcal{X} \subseteq\F^n_q$ be a set of size $|\mathcal{X}| = L'q^{k}$ that satisfies $\mathcal{X} \supseteq \mathcal{B}_q(0,\rho)$. Then as long as
\begin{align}
   \log_q(L') \geq  \log_q(V_q(n,\rho)) - 2 \log_2(n) +\log_q(n) - O(1)\label{Partial-Covering-code}
\end{align}
there exists a $(\rho,\mathcal{X})$-partial covering code.
\end{Lemma}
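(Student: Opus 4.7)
The plan is to adapt the greedy linear-code construction of \cite{cohen1985good} to the partial-covering setting, with the hypothesis $\mathcal{X} \supseteq \mathcal{B}_q(0,\rho)$ providing a nontrivial base case that removes some of the logarithmic slack from the usual argument. Specifically, I would build a nested chain $\{\mathbf{0}\} = \mathcal{C}_0 \subsetneq \mathcal{C}_1 \subsetneq \cdots \subsetneq \mathcal{C}_k$ of linear subcodes, each $\mathcal{C}_{i+1} = \langle \mathbf{v}_{i+1}, \mathcal{C}_i \rangle$, choosing $\mathbf{v}_{i+1}$ greedily to maximize the portion of $\mathcal{X}$ freshly covered. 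Writing $U_i \triangleq \mathcal{X} \setminus \bigcup_{\mathbf{c} \in \mathcal{C}_i} \mathcal{B}_q(\mathbf{c},\rho n)$ for the residual uncovered set, the hypothesis $\mathcal{X} \supseteq \mathcal{B}_q(0,\rho)$ already guarantees $|U_0| \leq |\mathcal{X}| - V_q(n,\rho)$, so the procedure starts from a nontrivial level of coverage rather than from scratch.

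The key averaging step is the following. For $\mathbf{v}$ drawn uniformly from $\F^n$ and a fixed $\mathbf{x} \in U_i$, two-term Bonferroni applied to the $(q-1)q^i$ new codewords $\{\alpha \mathbf{v} + \mathbf{c} : \alpha \in \F_q^*,\, \mathbf{c} \in \mathcal{C}_i\}$ gives
\begin{align*}
\Pr[\mathbf{x}\text{ newly covered}] \;\geq\; (q-1)q^i \,\frac{V_q(n,\rho)}{q^n} \;-\; \binom{(q-1)q^i}{2} \frac{V_q(n,\rho)^2}{q^{2n}}.
\end{align*}
Averaging over $\mathbf{v}$ and then picking the best $\mathbf{v}_{i+1}$ shows that $|U_{i+1}| \leq |U_i|(1 - \varphi_i)$, with $\varphi_i$ equal to the displayed right-hand side.

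I would then analyze $|U_i|$ in two regimes. In the \emph{slow} phase, while $q^i V_q(n,\rho)/q^n \ll 1$, each step shrinks $|U_i|$ only by a tiny multiplicative factor, but the improvements compound, giving decay roughly like $\exp\!\bigl(-(q-1)\sum_i q^i V_q(n,\rho)/q^n\bigr)$. In the \emph{fast} phase, once $q^i V_q(n,\rho)/q^n$ exceeds a fixed constant, the ratio $|U_{i+1}|/|U_i|$ is bounded away from $1$, yielding geometric decay. Combining the two, $|U_k|<1$ after a number of extensions bounded by $\log_q(|\mathcal{X}|/V_q(n,\rho)) + O(\log n)$, i.e.\ $q^k\, V_q(n,\rho) \geq |\mathcal{X}|/\mathrm{poly}(n)$. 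Substituting $|\mathcal{X}| = L'\, q^k$ and rearranging produces exactly the stated inequality
\begin{align*}
\log_q(L') \;\geq\; \log_q(V_q(n,\rho)) \;-\; 2\log_2(n) \;+\; \log_q(n) \;-\; O(1).
\end{align*}

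The hard part will be the transition between the slow and fast phases. Once $q^i V_q(n,\rho)/q^n$ crosses the $\Theta(1)$ threshold, the two-term Bonferroni lower bound goes negative and a sharper estimate is needed --- for instance a Paley--Zygmund-type second-moment bound, or a higher-order Bonferroni truncation combined with the fact that the new cosets $\alpha\mathbf{v} + \mathcal{C}_i$ are only mildly dependent --- in order to keep the aggregate logarithmic overhead matched to the precise $-2\log_2(n)+\log_q(n)$ deficit stated in the lemma. The remaining ingredients (the greedy extension, the averaging, and the reduction from ``dimension $k$'' to ``$L'$'' via $|\mathcal{X}| = L' q^k$) are essentially bookkeeping on top of this phase-transition analysis.
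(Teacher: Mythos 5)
Your overall strategy---greedy linear extension of $\mathcal{C}_0=\{\mathbf{0}\}$, averaging over the new basis vector, exploiting $\mathcal{B}_q(0,\rho)\subseteq\mathcal{X}$ for the base case, and converting the stopping dimension $k$ into a bound on $L'$---matches the paper's proof in outline. The difference, and the gap, is in the per-step estimate. You bound the probability that a fixed $\mathbf{x}\in U_i$ is newly covered by a two-term Bonferroni over the $(q-1)q^i$ new codewords $\alpha\mathbf{v}+\mathbf{c}$, writing the pairwise term as $\binom{(q-1)q^i}{2}V_q(n,\rho)^2/q^{2n}$. This is not justified: all those events are driven by the \emph{same} random $\mathbf{v}$, so the pairwise probabilities $\Pr[A_{\alpha,\mathbf{c}}\cap A_{\alpha',\mathbf{c}'}]$ are not $(V_q(n,\rho)/q^n)^2$---for pairs satisfying $(\alpha'-\alpha)\mathbf{x}=\alpha'\mathbf{c}-\alpha\mathbf{c}'$ the two balls actually coincide and the pairwise probability is $V_q(n,\rho)/q^n$. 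So the bound is unreliable even in the ``slow'' phase, and you already concede it goes negative in the transition; the suggested fixes (Paley--Zygmund, higher-order Bonferroni) would have to confront the same dependence structure and don't obviously yield the tight $-2\log_2 n+\log_q n-O(1)$ slack.

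The paper avoids this entirely by tracking the uncovered set $Q(\mathcal{C}_j)$ directly rather than per-point coverage probabilities. The key observation (binary case) is the set identity $Q(\mathcal{C}_{j+1})=Q(\mathcal{C}_j)\cap\bigl(Q(\mathcal{C}_j)+\mathbf{x}\bigr)$, whose expected size over random $\mathbf{x}$ is \emph{exactly} $|Q(\mathcal{C}_j)|^2/q^n$ by the Cohen--Frankl averaging lemma (Lemma~\ref{linear-greedy}); this yields the unconditional quadratic recursion $q(\mathcal{C}_{j+1})\le q(\mathcal{C}_j)^2$ with no threshold, no negativity, and no correlation bookkeeping. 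For $q>2$ the recursion $|\cup_\alpha(\mathcal{Z}+\alpha\mathbf{x})|$ is handled analogously, with a preparatory phase using Lemma~\ref{Lemma-2of5} when the covered fraction is still tiny. This exact self-intersection trick is the missing idea in your proposal, and it is what makes the precise constants come out.
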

\begin{proof}
The proof is found in Appendix \ref{A_2}.
\end{proof}
With this lemma in place, let us define $\mathcal{A}_{m} \triangleq \{\mathcal{X} \subseteq \F^{n} \ | \ |\mathcal{X}| = m,  \mathcal{X} \supseteq \mathcal{B}_q(0,\rho)\}$ to be the family of all subsets of $\F^{n}$ which have cardinality $m$ and which enclose $\mathcal{B}_q(0,\rho)$. Consider the following algorithm.
\begin{enumerate}
    \item Assign $m=Lq^{N-K}$.
    \item For each $\mathcal{X}$ in $\mathcal{A}_{m}$, find a $(\rho,\mathcal{X})$-partial covering code $\mathcal{C}_{\mathcal{X}}$ via (the algorithm corresponding to) Lemma \ref{CohenLemma}.
    \item For each $\mathcal{X}$ in $\mathcal{A}_{m}$, set $\mathbf{D}=\mathbf{H}_{\mathcal{C}_{\mathcal{X}}}$, and create $\mathcal{X}_{\mathbf{F}, \mathbf{D}} = \{\mathbf{x} \in \F^{N}| \mathbf{D} \mathbf{x} = \mathbf{F}(:,\ell), \ \text{for some} \ \ell \in [L]\}$.
    \item If there exists an $\mathcal{X}$ in $\mathcal{A}_{m}$, for which $\mathcal{X} \subseteq \mathcal{X}_{\mathbf{F}, \mathbf{D}}$, then output this $\mathcal{X}$ and its corresponding $\mathbf{D}=\mathbf{H}_{\mathcal{C}_{\mathcal{X}}}$ from the above step.
    \item If there exists no $\mathcal{X}$ in $\mathcal{A}_{m}$ for which $\mathcal{X} \subseteq \mathcal{X}_{\mathbf{F}, \mathbf{D}}$, then increase $m$ by one and go back to step~$2$.
\end{enumerate}

Let us continue now by supposing that the scheme terminates, outputting $\mathbf{D}$ and $\mathcal{X}$ at the fourth step, before $m$ reaches $m = q^{N}$. Lemma~\ref{CohenLemma}, which guarantees (cf.~\eqref{Partial-Covering-code}) that
\begin{align}\label{partial-cov2}
      \log_q(|\mathcal{X}|q^{-k}) \geq  \log_q(V_q(n,\rho)) - 2 \log_2(n) +\log_q(n) - O(1)
\end{align}
also guarantees that
\begin{align}\label{partial-cov3}
  \frac{\log_q(|\mathcal{X}|){-(N-K)}}{N} \geq  \log_q(V_q(N,\rho))/N - 2 \log_2(N)/N +\log_q(N)/N - O(1)/N
\end{align}
{\color{black}where this last inequality holds after setting $N=n,K=n-k$, and after we divide both sides of~\eqref{partial-cov2} by $N$, and then apply Theorem~\ref{Bridge} after recalling that $\mathcal{X}$ is indeed $\rho n$-covered by $\mathcal{C}_{\mathbf{D}}$.}
Applying that $q^{NH_q(\rho)-o(N)} \leq V_q(N,\rho) \leq q^{NH_q(\rho)}$ into \eqref{partial-cov3}, gives
\begin{align}
    (\frac{K}{N}-1+ \frac{\log_q(|\mathcal{X}|)}{N}) \geq H_q(\rho) - \epsilon(N)
\end{align}
telling us that the algorithm yields a scheme with computation cost\footnote{Here it is worth elaborating on a fine point regarding our metric. As the reader may recall, $\gamma$ describes the fraction of \emph{active} (non-idle) servers that compute any subfunction. Then the observant reader may wonder if our proposed scheme indeed activates all existing servers. This corresponds to having a scheme with an $\mathbf{E}$ matrix that has no all-zero rows. In the (rare) degenerate scenario where a row of $\mathbf{E}$ may contain only zeros, then our derived computation cost $\gamma$ would --- by definition --- have to be recalculated (to account for having idle servers) and would be higher than stated here. To account for this degenerate case, we add a small step in our algorithm which reduces the recorded computation cost by guaranteeing that all servers are active. This step simply says that if a row in $\mathbf{E}$ contains only zeros, then this row is substituted by an arbitrary non-zero row (let's say, the first row) of $\mathbf{E}$, except that, if that (first) row contains a non-zero element in the position $\ell_{\max} = \arg\max \:\omega(\mathbf{E(:,\ell)})$, then this element is substituted by a zero. Then the two servers (the first server and the previously idle server) will split their communication load, except that the server corresponding to the originally all-zero row, will not send any linear combination that involves $w_{\ell_{\max}}$. This small modification guarantees that whatever $\gamma$ we declare here as being achievable, is indeed achievable even in degenerate scenarios. Finally, this degenerate scenario does not affect the algebraic converse.}
\begin{align}
    \gamma = \rho \leq H_q^{-1}(K/N-1+\log_q(|\mathcal{X}|/N)+\epsilon(N))\label{approximate-formula}
\end{align}
which matches the stated result in the regime of large $N$. Note that when $\mathcal{X} = \mathcal{X}_{\mathbf{F}, \mathbf{D}}$, then naturally $|\mathcal{X}|= L q^{N-K}$ which, directly from~\eqref{approximate-formula}, yields $\gamma = \rho = H_q^{-1}(\log_q(L)/N+\epsilon(N))$. At the other extreme, when the algorithm terminates at the very end when $\mathcal{X}=\F^{N}$, then the corresponding code will be the standard $\rho$-covering code (see Appendix~\ref{B}), {\color{black}and the computation cost will correspond to $ \gamma = H_q^{-1}(K/N)$.}
\qed
\section{Proof of Lemma~\ref{CohenLemma}} \label{A_2}
We here start by employing the recursive construction approach of Cohen and Frankl in~\cite{cohen1985good}. This recursive approach builds an $(n,j+1)$ code $\mathcal{C}_{j+1}$ from a previous $(n,j)$ code $\mathcal{C}_j$, by carefully adding a vector $\mathbf{x}$ on the basis of $\mathcal{C}_j$, so that now the new basis span is bigger. Our aim will be to recursively construct ever bigger codes that cover an ever increasing portion of our set $\mathcal{X}$.

Let us start by setting $\mathcal{C}_{0} =\{\mathbf{0}\}$. Let us then make the assumption that the aforementioned integer $L'$ in Lemma~\ref{CohenLemma}, takes the form
\begin{align}
    L' = q^{n-k'} \label{k'-to-L}
\end{align}
for some real $k'\geq k$. Let $Q(\mathcal{C})$ denote the set of points in $\mathcal{X}$ that are not $\rho n$-covered by $\mathcal{C}$, and let
\begin{align}
    q(\mathcal{C}) \triangleq \frac{|Q(\mathcal{C})|}{q^{n +k -k'}}
\end{align}
where naturally
\begin{align}
   | {Q}{(\mathcal{C}_{0})}| = q^{n+k-k'} - V_q(n,\rho)
\end{align}
and
\begin{align}
   q(\mathcal{C}_{0}) = 1 - V_q(n,\rho) q^{-(n+k-k')}. \label{initialization}
\end{align}
To proceed, we need the following lemma from~\cite{cohen1985good}.
\begin{Lemma}[\cite{cohen1985good}]\label{linear-greedy}
Let $\mathcal{Y}\subseteq \F^{n},\mathcal{Z}\subset \F^{n}$, and consider $\mathcal{Y}+\mathbf{x} = \{\mathbf{y}+ \mathbf{x}: \mathbf{y} \in \mathcal{Y}\}$ for some $\mathbf{x} \in \F.$  Then
\begin{align}
\mathbb{E}(|(\mathcal{Y} + \mathbf{x}) \cap \mathcal{Z}|)=q^{-n} |\mathcal{Y}||\mathcal{Z}|
\end{align} where the average is taken, with uniform probability, over all $\mathbf{x} \in \F^{n}$.
\end{Lemma}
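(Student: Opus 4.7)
The plan is a direct application of linearity of expectation, exploiting the fact that translation by a uniformly random element of $\F^n$ is measure-preserving on the additive group $(\F^n,+)$. First I would expand the cardinality as a double indicator sum,
\begin{align*}
|(\mathcal{Y}+\mathbf{x})\cap\mathcal{Z}| \;=\; \sum_{\mathbf{y}\in\mathcal{Y}}\sum_{\mathbf{z}\in\mathcal{Z}} \mathbf{1}\{\mathbf{y}+\mathbf{x}=\mathbf{z}\},
\end{align*}
which is valid since the membership of a point $\mathbf{z}\in\mathcal{Z}$ in the translate $\mathcal{Y}+\mathbf{x}$ is witnessed by exactly one $\mathbf{y}\in\mathcal{Y}$ (namely $\mathbf{y}=\mathbf{z}-\mathbf{x}$) whenever such a $\mathbf{y}$ exists in $\mathcal{Y}$.

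Next I would take the expectation over $\mathbf{x}$ drawn uniformly from $\F^n$ and swap the finite sums with the expectation. For each fixed pair $(\mathbf{y},\mathbf{z})\in\mathcal{Y}\times\mathcal{Z}$, the event $\{\mathbf{y}+\mathbf{x}=\mathbf{z}\}$ pins $\mathbf{x}$ to the unique value $\mathbf{z}-\mathbf{y}\in\F^n$, so its probability under the uniform measure on $\F^n$ is exactly $q^{-n}$. Summing this constant over the $|\mathcal{Y}|\cdot|\mathcal{Z}|$ pairs yields the claimed identity $q^{-n}|\mathcal{Y}||\mathcal{Z}|$. No structural hypothesis on $\mathcal{Y}$ or $\mathcal{Z}$ is used beyond them being subsets of $\F^n$, and no real obstacle arises in the argument; the proof is essentially a one-line Fubini computation together with the uniformity of the translation, and can be written out in a few lines.
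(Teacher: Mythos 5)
Your proof is correct. The paper does not actually supply a proof of this lemma: it simply cites it from Cohen and Frankl's 1985 paper, and the argument you give---writing $|(\mathcal{Y}+\mathbf{x})\cap\mathcal{Z}|$ as a double indicator sum over pairs $(\mathbf{y},\mathbf{z})$, observing that each pair fixes a unique $\mathbf{x}=\mathbf{z}-\mathbf{y}$ which has probability $q^{-n}$ under the uniform measure, and invoking linearity of expectation---is exactly the standard one-line proof one would find there. One small remark: the lemma statement in the paper contains a typo, writing $\mathbf{x}\in\F$ where it should say $\mathbf{x}\in\F^n$; your proof correctly treats $\mathbf{x}$ as uniform over $\F^n$.
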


Now, we develop the proof in two parts.
\begin{enumerate}
    \item \textbf{Binary Case:}
    The proof for $q=2$ where $k=k'$ (corresponding to the singular case of maximal $L=2^K$) has been presented in \cite{cohen1983nonconstructive} and \cite{cohen1985good} in two different ways. We will modify the latter approach to establish our claim for any $k'\geq k$ (which will allow us to also handle $L$ values that are smaller than $2^K$).
    First let us easily deduce from Lemma~\ref{linear-greedy} that there exists an $\mathbf{x} \in \F^n$ for which $|(\mathcal{Y}+\mathbf{x}) \cap \mathcal{Z}| \leq \frac{|\mathcal{Y}||\mathcal{Z}|}{q^n}$.
Now let us set $\mathcal{Y} = \mathcal{Z} = Q(\mathcal{C}_{j})$, and let us append a vector $\mathbf{x}$ to the generator matrix of $\mathcal{C}_{j}$ to create $\mathcal{C}_{j+1}$, where $\mathbf{x}$ is chosen to minimize $|\mathcal{Q}(\mathcal{C}_{j+1})|$. Now we can directly verify that
    \begin{align}
       |\mathcal{Q}(\mathcal{C}_{j+1})|  = |\mathcal{Q}(\mathcal{C}_{j}) \cap  \mathcal{Q}(\mathcal{C}_{j}+\mathbf{x})| =|\mathcal{Q}(\mathcal{C}_{j}) \cap  (\mathcal{Q}(\mathcal{C}_{j})+\mathbf{x})| \leq |\mathcal{Q}(\mathcal{C}_{j})|^2/2^{n}
    \end{align}
    which implies that
    \begin{align}
    q(\mathcal{C}_{j+1}) \leq q(\mathcal{C}_{j})^2 2^{k-k'} \leq q(\mathcal{C}_{j})^2 \label{Decent}
    \end{align}
    where the latter inequality holds because $k'\geq k$.
    Combining~\eqref{initialization} and~\eqref{Decent}, gives
    \begin{align}
        q(\mathcal{C}_{k}) &\leq q(\mathcal{C}_{0})^{2^k} \leq (1 - V_q(n,\rho) 2^{-(n-k'+k)})^{2^k}
    \end{align}
    where the latter inequality again holds due to the fact that $k'\geq k$. Now let us continue this recursion until $k$ is such that
    \begin{align}
        2^k = \lceil(n-k'+k) 2^{(n-k'+k)} \ln(2) / V_2(n,\rho)  \rceil\label{Choice-of-k-binary-case}
    \end{align}
  in which case
  --- given that $(1 - \frac{1}{x})^{x} \leq e^{-1}, \: \forall x \geq 1$ --- we get that
  \begin{align}
        q(\mathcal{C}_k) < 2^{-(n+k-k')}
    \end{align}
    which automatically yields that $Q(\mathcal{C}_k) =0$. This, again with the choice of $k$ in~\eqref{Choice-of-k-binary-case}, tells us that for a set $\mathcal{X}$ that satisfies $\mathcal{B}_q(0,\rho) \subseteq  \mathcal{X} \subseteq\F^n_q,\:|\mathcal{X}| = L q^{k}$, then indeed there exists a $(\rho,\mathcal{X})$-partial covering code $\mathcal{C}(n,k)$ satisfying
 \begin{align}
     &0 \leq  \log_q(L / V_q(n,\rho)) + 2 \log_2(\log_q(|\X|)) -\log_q(\log_q(|\mathcal{X}|)) + O(1).\
 \end{align}
This conclusion can be considered as a tighter version of Lemma~\ref{CohenLemma}. After a few very basic algebraic manipulations we get the proof of Lemma~\ref{CohenLemma}, for the binary case of $q=2$.

    \item \textbf{Non-Binary Case:}
    Considering first an arbitrary $\mathcal{Z}\subset \F^{n}$,
we have that
    \begin{align}
        \mathbb{E}(1 - (q^{-n + k' - k}|(\mathcal{Z}+ \mathbf{x}) \cup \mathcal{Z}|)) &=
           \mathbb{E}(1 - q^{-n + k' - k}((|(\mathcal{Z}+ \mathbf{x})|+|\mathcal{Z}|) - |(\mathcal{Z}+ \mathbf{x}) \cap \mathcal{Z}|))\label{bounding-union-begin}\\ &=
           \mathbb{E}(1 - 2 q^{-n  + k' - k} |\mathcal{Z}| + q^{-n  + k' - k}|(\mathcal{Z}+ \mathbf{x}) \cap \mathcal{Z}|) \\&\overset{(a)}{=} 1- 2 q^{-n + k' -k} |\mathcal{Z}| +  q^{-2n  + k' - k} |\mathcal{Z}|^2 \\& \overset{(b)}{\leq}
            1 - 2 q^{-(n -k'+k)} |\mathcal{Z}| +  q^{-2(n  - k' + k)} |\mathcal{Z}|^2
             \\&= (1 - \frac{|\mathcal{Z}|^{}}{q^{(n - k'+k)}})^{2} \label{bounding-union}
    \end{align}
    where (a) is directly from Lemma~\ref{linear-greedy}, and where (b) holds since $k'\geq k$.
Similarly to the binary case, we begin with $\mathcal{C}_{0} =\{\mathbf{0}\}$, and again recursively extend as
\begin{align}
  \mathcal{C}_{j+1} = <\mathcal{C}_{j};\mathbf{x}>
\end{align}
where $\mathbf{x}$ is chosen so that $|\mathcal{Z}|$ is maximized. We do so, after again setting $\mathcal{Z} = Q(\mathcal{C}_{j})$.

At this point, from~\eqref{bounding-union} we have that
\begin{align}
     q(\mathcal{C}_{j+1}) \leq q(\mathcal{C}_{j})^2. \label{desending}
\end{align}
We now consider the following lemma from \cite{cohen1985good}.
\begin{Lemma}(\cite[Lemma 2]{cohen1985good})\label{Lemma-2of5}
For any fixed $\mathcal{Z} \subseteq \mathcal{X}\subset \mathbb{F}^n$ where $|\mathcal{Z}|q^{-({n-k'+k})} = \epsilon < (q(n-k'+k))^{-1}$, then
\begin{align}
    \mathbb{E}_{\mathbf{x}\in\mathbb{F}^n}(1 -q^{-(n-k'+k)} |\cup_{\alpha \in \F_q} \mathcal{Z} + \alpha \mathbf{x}|) \leq (1 - \epsilon)^{q(1-(2(n-k'+k))^{-1})}.
\end{align}
\end{Lemma}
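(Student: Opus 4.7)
The plan is to combine a second-order Bonferroni lower bound on $|\cup_{\alpha \in \F_q}(\mathcal{Z}+\alpha\mathbf{x})|$ with the expectation formula of Lemma~\ref{linear-greedy}, and then compare the resulting elementary quadratic in $\epsilon$ against the right-hand side $(1-\epsilon)^{q(1-1/(2(n-k'+k)))}$ using a Taylor expansion whose validity is guaranteed by the hypothesis $\epsilon < (q(n-k'+k))^{-1}$.

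First I would apply Bonferroni's inequality to obtain $|\cup_{\alpha \in \F_q}(\mathcal{Z}+\alpha\mathbf{x})| \geq \sum_{\alpha} |\mathcal{Z}+\alpha\mathbf{x}| - \sum_{\alpha<\beta} |(\mathcal{Z}+\alpha\mathbf{x}) \cap (\mathcal{Z}+\beta\mathbf{x})|$. Translation invariance yields $|\mathcal{Z}+\alpha\mathbf{x}|=|\mathcal{Z}|$ and, after shifting by $-\alpha\mathbf{x}$, $|(\mathcal{Z}+\alpha\mathbf{x}) \cap (\mathcal{Z}+\beta\mathbf{x})| = |\mathcal{Z} \cap (\mathcal{Z}+(\beta-\alpha)\mathbf{x})|$. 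Taking expectation over uniform $\mathbf{x} \in \mathbb{F}^n$ and using that $(\beta-\alpha)\mathbf{x}$ is itself uniform in $\mathbb{F}^n$ whenever $\alpha \neq \beta$, Lemma~\ref{linear-greedy} applied with $\mathcal{Y}=\mathcal{Z}$ gives $\mathbb{E}|\mathcal{Z}\cap(\mathcal{Z}+(\beta-\alpha)\mathbf{x})| = q^{-n}|\mathcal{Z}|^2$. Summing over the $\binom{q}{2}$ unordered pairs then produces $\mathbb{E}|\cup_{\alpha}(\mathcal{Z}+\alpha\mathbf{x})| \geq q|\mathcal{Z}| - \binom{q}{2}|\mathcal{Z}|^2 q^{-n}$.

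Substituting $|\mathcal{Z}| = \epsilon\, q^{n-k'+k}$, multiplying through by $q^{-(n-k'+k)}$, and using $k' \geq k$ to absorb the resulting factor $q^{k-k'} \leq 1$, I would obtain $\mathbb{E}[1 - q^{-(n-k'+k)}|\cup_{\alpha}(\mathcal{Z}+\alpha\mathbf{x})|] \leq 1 - q\epsilon + \binom{q}{2}\epsilon^2$. It then suffices to establish the purely analytic inequality $1 - q\epsilon + \binom{q}{2}\epsilon^2 \leq (1-\epsilon)^{q(1-1/(2m))}$ with $m = n-k'+k$, under $\epsilon < 1/(qm)$. Taking logarithms and Taylor-expanding, the left-hand side equals $-q\epsilon - (q/2)\epsilon^2 + O(q^3\epsilon^3)$ while the right-hand side equals $-q\epsilon - (q/2)\epsilon^2 + (q/(2m))\epsilon + O(\epsilon^2/m)$, so their difference is dominated by the negative leading term $-q\epsilon/(2m)$; the smallness hypothesis $\epsilon < 1/(qm)$ is exactly what absorbs the cubic remainder so as to preserve the sign. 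A clean rigorous path would use $\ln(1-u) \geq -u - u^2$ on $[0,1/2]$ to lower-bound the LHS log and Bernoulli's $(1-\epsilon)^{-q/(2m)} \geq 1 + q\epsilon/(2m)$ to lower-bound the correction factor on the RHS.

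The main obstacle I anticipate lies in this last analytic step rather than in the probabilistic portion. The Bonferroni-plus-Lemma-\ref{linear-greedy} calculation is essentially the direct non-binary analogue of the two-line argument for $q=2$ already carried out earlier in the appendix, while verifying the polynomial-versus-exponential inequality uniformly across the whole range $\epsilon \in (0, 1/(qm))$, with explicit tracking of the $q$- and $m$-dependence of the remainder terms so as not to lose the $1/(2m)$ correction in the exponent, is where the only genuinely technical bookkeeping is required.
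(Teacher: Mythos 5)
The paper does not prove this lemma; it is stated as a direct citation to Cohen--Frankl, so there is no in-paper argument to compare against. Evaluating your blind proof on its own merits: the probabilistic half is correct and is essentially the natural extension, from two translates to all $q$ translates, of the union estimate the paper carries out in the non-binary part of Appendix~\ref{A_2}. The Bonferroni lower bound $|\cup_{\alpha}(\mathcal{Z}+\alpha\mathbf{x})| \geq q|\mathcal{Z}| - \binom{q}{2}|\mathcal{Z}|^2 q^{-n}$ is valid, translation invariance and the fact that $\mathbf{x}\mapsto(\beta-\alpha)\mathbf{x}$ is a bijection on $\F^n$ for $\alpha\neq\beta$ make Lemma~\ref{linear-greedy} directly applicable to each pair-intersection term, and the substitution $|\mathcal{Z}|=\epsilon q^{n-k'+k}$ together with $k'\geq k$ correctly yields $\mathbb{E}[1 - q^{-(n-k'+k)}|\cup_\alpha(\mathcal{Z}+\alpha\mathbf{x})|]\leq 1-q\epsilon+\binom{q}{2}\epsilon^2$.

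The analytic tail, as you anticipate, is where the real bookkeeping lives, and your sketch is a bit loose there: the Taylor expansion with $O(\cdot)$ remainders cannot by itself certify an inequality at a fixed finite $\epsilon$, and the statement that you would ``lower-bound the LHS log'' should be ``upper-bound'' (one needs an upper bound on $\ln(1-q\epsilon+\binom{q}{2}\epsilon^2)$ and a lower bound on $\ln(1-\epsilon)^{q(1-1/(2m))}$). The clean route you gesture at does work, though. Writing $m=n-k'+k$: (i) since the truncated binomial tail $\sum_{j\geq3}(-1)^{j+1}\binom{q}{j}\epsilon^j$ is alternating with decreasing magnitude for $\epsilon<1/q$, you get $1-q\epsilon+\binom{q}{2}\epsilon^2\leq(1-\epsilon)^q+\binom{q}{3}\epsilon^3$; (ii) Bernoulli gives $(1-\epsilon)^{-q/(2m)}\geq 1+q\epsilon/(2m)$, so the target inequality reduces to $\binom{q}{3}\epsilon^3\leq(1-\epsilon)^q\,q\epsilon/(2m)$, i.e. $\frac{(q-1)(q-2)}{6}\epsilon^2\leq\frac{(1-\epsilon)^q}{2m}$; and (iii) the hypothesis $\epsilon<(qm)^{-1}$ makes the left side smaller than $1/(6m^2)$ while forcing $(1-\epsilon)^q>(1-1/(qm))^q\geq 1-1/m$, which is enough to close the gap for all $q\geq 2$ (the case $q=2$ is degenerate since $(q-1)(q-2)=0$, and the tightest case $m=1$, $q$ large is still covered because $(1-1/q)^q\to e^{-1}>1/3$). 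So the route is sound and the smallness hypothesis is used exactly where you said it would be, but the proposal as written leaves that final reduction at the level of a plan rather than an argument.
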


Continuing from $\mathcal{Z} = \mathcal{X} \cap (\cup_{\mathbf{c}  \in \mathcal{C}_{j}} \mathcal{B}_q(\mathbf{c},\rho)) $, where \[|\mathcal{Z}| < \frac{1}{n}q^{(n-k'+k -1)}, \ \ \ \ q(\mathcal{C}_{j+1}) \leq q(\mathcal{C}_{j})^{q(1-(2(n-k'+k)^{-1}))}\] we have that
\begin{align}
    q(\mathcal{C}_{j+1}) \leq (1 - q^{n-k'+k} V_q(n,\rho))^{(q(1-(2(n-k'+k))^{-1}))^{j}} \leq (1 - q^{n+k-k'} V_q(n,\rho) )^{e^{-0.5} q^{j}}
\end{align}
since $(1 -(2(n-k'+k))^{-1}) \geq (1 - (2(n-k'+k))^{-1})^{n-k'+k-1} \geq e^{-0.5}$.
For
\begin{align} \label{j11}
j_1\triangleq \arg\min_j\{q(\mathcal{C}_j) \leq 1 -(q(n +k - k'))^{-1}\}
\end{align}
we see that
\begin{align} \label{j1}
j_1  \leq n- \log_q(q^{k'-k} V_q(n,\rho)) - \log_q(n+k-k') + O(1)
\end{align}
{\color{black}where the inequality holds by first observing that Lemma~\ref{Lemma-2of5} yields
\begin{align}
    1 - (q(n - k' + k))^{-1} \leq q(\mathcal{C}_{j}) \leq (1 - q^{(n - k' +k) V_q(n,\rho)})^{q^{j_1 -1} e^{-1/2}}\label{lemma-3}
\end{align}
and then by comparing the upper and lower bounds in~\eqref{lemma-3}.}

We now have an $(n,j_1)$ code $\mathcal{C}$ and we have~\eqref{desending}. We are now looking for the minimum number $j_2$ of generators $\mathbf{x}$ that have to be appended to the generator of $\mathcal{C}$ in order to get a $(n,j_1+j_2)$ code with $q(\mathcal{C}_{j_1 + j_2}) \leq q^{-(n-k'+k)}$. We note that $q(\mathcal{C}_{j_1 }) \leq 1 -(q(n-k'+k))^{-1}$, so by~\eqref{lemma-3} we only need to ensure that $(1 - (q(n-k'+k))^{-1})^{2^{j_2}} \leq q^{-(n-k'+k)}$, which can be achieved by using
\begin{align} \label{j2}
    j_2 = 2 \log_2(n-k'+k) + O(1).
\end{align}
Hence for $k =  j_1 + j_2$, there indeed exist $(n,k)$ codes with normalized covering radius no bigger than $\rho$. Applying \eqref{j1}, \eqref{j2}, \eqref{k'-to-L}, and the fact that $|\X| = L q^{k}$, proves~\eqref{Partial-Covering-code} and thus proves Lemma~\ref{CohenLemma}.
\end{enumerate}
\qed

\section{Proof of Theorem \ref{Sparse-Theorem}} \label{E_1}
We quickly note that the converse (lower bound on $\gamma$) holds directly from the converse arguments in Theorem \ref{Achievability}.

Let us start with the following definition.
\begin{Definition}\label{Partial-Covering-Code4}
Let $ \rho\in (0, 1- \frac{1}{q}]$, and let $\tau\in(0,1]$. A code $\mathcal{C}\subseteq \F^{n}$ is said to be a $(\rho,\tau)$-partial covering code if there exists a set $\mathcal{X} \subseteq \F^{n}$, with $\frac{1}{n}\log_q(|\mathcal{X}|) = 1-\tau$, that is $\rho$-covered by $\mathcal{C}$.
\end{Definition}

We now present a theorem that extends the famous Theorem of Blinovskii in~\cite{blinovskii1987lower}, which proved that almost all linear codes satisfy the sphere-covering bound.
We recall that $\mathcal{C}_{k,n}$ denotes the ensemble of all linear codes generated by all possible $k \times n$ matrices in $\F^{k\times n}$.

\begin{Theorem}\label{partial-covering-all}
Let $ \rho\in (0, 1- \frac{1}{q}]$. Then there exists an infinite sequence $k_n$ that satisfies
\begin{align}
    \frac{k_n}{n} \leq 1 - \tau - H_q(\rho) + O(n^{-1} \log_q(n)) \label{partial-covering-almost}
\end{align}
for $\tau\in[0,1-H_q(\rho)-\frac{k}{n}]$ so that the fraction of codes $\mathcal{C}_n \in \mathcal{C}_{k_n,n}$ that are $(\rho,\tau)$-partial covering, tends to 1 as $n$ grows to infinity. Thus in the limit of large $n$, almost all codes of rate less than $1 - \tau - H(\rho)$ will be $(\rho,\tau)$-partial covering.
\end{Theorem}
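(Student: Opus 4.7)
The plan is to adapt Blinovskii's probabilistic argument for random linear covering codes to the partial-covering setting, by showing that a uniformly random linear code of dimension $k_n$ slightly above the partial sphere-covering threshold already has a $\rho n$-neighborhood of size at least $q^{n(1-\tau)}$ with probability tending to $1$. Concretely, for $\mathcal{C} \subseteq \F^n$ I will write $B_\rho(\mathcal{C}) \triangleq \{\mathbf{x} \in \F^n : d(\mathbf{x},\mathcal{C}) \leq \rho n\}$; if $|B_\rho(\mathcal{C})| \geq q^{n(1-\tau)}$ then taking $\mathcal{X}$ to be any cardinality-$q^{n(1-\tau)}$ subset of $B_\rho(\mathcal{C})$ witnesses the $(\rho,\tau)$-partial covering property of Definition~\ref{Partial-Covering-Code4}. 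Consequently it suffices to lower-bound $|B_\rho(\mathcal{C})|$ with high probability over the code ensemble.

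The ensemble I will use is that generated by a uniformly random parity-check matrix $\mathbf{H}\in\F^{(n-k_n)\times n}$ with i.i.d.\ uniform entries, matching Blinovskii's model. A standard first-moment calculation shows that for each $\mathbf{x} \neq \mathbf{0}$,
\begin{align*}
\Pr[\mathbf{x} \in B_\rho(\mathcal{C})] = V_q(n,\rho)\,q^{-(n-k_n)}\bigl(1-o(1)\bigr),
\end{align*}
as long as $V_q(n,\rho)\,q^{-(n-k_n)} = o(1)$ (i.e., we are in the genuine partial regime), where the equality follows by truncating the Bonferroni inclusion-exclusion after the first term and bounding the pairwise contribution by $\binom{V_q(n,\rho)}{2}q^{-2(n-k_n)}$. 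Summing over $\mathbf{x}$ and using $V_q(n,\rho)=q^{nH_q(\rho)-o(n)}$, the choice $k_n=\lceil n(1-\tau-H_q(\rho))\rceil + c\,\log_q n$ for a sufficiently large constant $c$ yields $\mathbb{E}|B_\rho(\mathcal{C})| \geq n^{c-o(1)}\, q^{n(1-\tau)}$, which is exactly the $O(n^{-1}\log_q n)$ slack appearing in~\eqref{partial-covering-almost}.

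To upgrade this first-moment lower bound to a high-probability guarantee, I will use either Azuma--Hoeffding applied to the Doob martingale that reveals the rows of $\mathbf{H}$ sequentially (each new row intersects the current code with a hyperplane, changing $|B_\rho(\mathcal{C})|$ in a controllable way), or equivalently Chebyshev's inequality after computing $\operatorname{Var}|B_\rho(\mathcal{C})|$ via the pairwise covariances $\operatorname{Cov}[\mathbf{1}[\mathbf{x}\in B_\rho(\mathcal{C})], \mathbf{1}[\mathbf{y}\in B_\rho(\mathcal{C})]]$, which decay rapidly when the syndrome differences $\mathbf{H}(\mathbf{x}-\mathbf{y})$ are unconstrained. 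Either route yields $\Pr[|B_\rho(\mathcal{C})| \geq q^{n(1-\tau)}] \to 1$, so that almost all codes in $\mathcal{C}_{k_n,n}$ are $(\rho,\tau)$-partial covering; the limit $\tau \to 0$ recovers the original Blinovskii theorem as the borderline case.

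The main obstacle is precisely the concentration step. In the full-covering regime the expected coverage is $q^n(1-o(1))$, so even a loose union bound combined with monotonicity suffices; in the partial regime the mean is only $q^{n(1-\tau)}$, and the second-order overlap terms in the Bonferroni or variance expansion can easily dominate unless $k_n$ is chosen with the right $O(\log n)$ slack above the sphere-covering threshold. Verifying that this $\log$-slack is both necessary and sufficient --- rather than, say, requiring a $\Theta(n^{\alpha})$ slack --- is where the bulk of the technical bookkeeping lies, and is the key refinement of Blinovskii's method demanded by the partial-covering generalisation.
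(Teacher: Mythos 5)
Your reformulation of the $(\rho,\tau)$-partial covering property as the condition $|B_\rho(\mathcal{C})| \geq q^{n(1-\tau)}$, where $B_\rho(\mathcal{C})=\{\mathbf{x}: d(\mathbf{x},\mathcal{C})\leq\rho n\}$, is correct and is indeed the cleanest way to read Definition~\ref{Partial-Covering-Code4}; it is also exactly the form used downstream in the proof of Theorem~\ref{Sparse-Theorem} (where $\mathcal{X}_i$ is taken to be $B_\rho(\mathcal{C}_i)$). Your first-moment computation and the conclusion $\mathbb{E}|B_\rho(\mathcal{C})| \gtrsim n^{c}q^{n(1-\tau)}$ under a $c\log_q n$ slack are likewise fine in the genuine partial regime $\tau>0$ (where $V_q(n,\rho)q^{-(n-k_n)}=o(1)$ so the first Bonferroni term dominates). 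This is a genuinely different route from the paper, which does not analyse $|B_\rho(\mathcal{C})|$ at all: instead it constructs a random target set $\mathcal{X}_{\mathcal{C}}\supseteq \mathcal{C}\cup\mathcal{B}(\mathbf{0},\rho)$, computes the conditional law of a codeword landing in $\mathcal{B}(\mathbf{x},\rho)$ given $\mathbf{x}\in\mathcal{X}_{\mathcal{C}}$ (Lemmas~\ref{uniformity}--\ref{Point-Wise-Conditional_Probability}), applies a second-moment/Chebyshev bound per point (Lemmas~\ref{Average-eta-x}--\ref{Bounding_l_lemma}) to control the number of ``partial-remote'' points, and then patches those up with the Cohen--Frankl greedy appending of $\Theta(\log n)$ extra generators; the $O(n^{-1}\log_q n)$ slack in~\eqref{partial-covering-almost} comes from that appending step. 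Your approach, if completed, bypasses the whole target-set-plus-greedy-patching machinery.

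However, the concentration step as you sketch it has a real gap. The Azuma/Doob-martingale route does not work: revealing one additional row of $\mathbf{H}$ intersects the code with a random hyperplane and can change $|B_\rho(\mathcal{C})|$ by a multiplicative factor of up to $q$, so the martingale differences are not bounded by anything useful (they are of the same order as the running conditional expectation), and Azuma--Hoeffding gives nothing. The variance route is viable in principle but your pairwise-covariance description is too loose to be checked: ``decay rapidly when the syndrome differences are unconstrained'' is exactly the claim that needs proof, and the correlated pairs (those with $d(\mathbf{x},\mathbf{y})\leq 2\rho n$, which can share a covering codeword) are precisely the ones you have to dominate. The fix is to use the identity $|B_\rho(\mathcal{C})| = q^{k_n}\,|\mathbf{H}\,\mathcal{B}(\mathbf{0},\rho)|$ (since membership in $B_\rho(\mathcal{C})$ depends on $\mathbf{x}$ only through its syndrome $\mathbf{H}\mathbf{x}$), reduce to lower-bounding the image size $|\mathbf{H}\,\mathcal{B}(\mathbf{0},\rho)|$, apply Cauchy--Schwarz in the form $|\mathbf{H}\,\mathcal{B}(\mathbf{0},\rho)| \geq V_q(n,\rho)^2/Z$ with $Z \triangleq |\{(\mathbf{e},\mathbf{e}')\in\mathcal{B}(\mathbf{0},\rho)^2 : \mathbf{H}\mathbf{e}=\mathbf{H}\mathbf{e}'\}|$, note $\mathbb{E}[Z]\leq V_q(n,\rho)(1+V_q(n,\rho)q^{-(n-k_n)})\leq 2V_q(n,\rho)$ in your regime, and conclude by Markov that $Z\leq 2V_q(n,\rho)/\delta(n)$ for a fraction $\geq 1-\delta(n)$ of codes; choosing $\delta(n)\to 0$ slowly and absorbing $\log_q(1/\delta(n))$ into the $O(\log n)$ slack gives the theorem. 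You should also note that at the boundary $\tau=0$ (where $V_q(n,\rho)q^{-(n-k_n)}\not\to 0$) your Bonferroni truncation no longer applies and one falls back to the original Blinovskii regime, but this does not affect the range of $\tau$ that the rest of the paper actually uses.
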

\begin{proof}
The proof can be found in Appendix~\ref{D}.
\end{proof}

Now let us design such covering codes.
In the following we will consider the set of codes in $\mathcal{C}_{k_n,n}$ that are $(\rho,\tau)$-partial covering, for the claimed sequence $k_n$ of Theorem \ref{partial-covering-all}, and for some real $\tau$. We will also consider $g(n)$ to be the fraction of such $(\rho,\tau)$-partial covering codes among all codes in $\mathcal{C}_{k_n,n}$. The scheme design is defined by the following steps.

\begin{enumerate}
    \item Assign $m=L$.
    \item Set $\tau = \frac{K-\log_q(m)}{N}$.
\item
Noticing that the value
\begin{align}
    m_n \triangleq g(n)q^{k_n n} \label{no-good-codes}
\end{align}
serves as a lower bound on the number of $(\rho,\tau)$-partial covering codes in the ensemble $\mathcal{C}_{k_n,n}$, we now create $\mathcal{B} \triangleq \{\mathcal{C}_1,\mathcal{C}_2, \hdots, \mathcal{C}_{m_n}\}$ to be the set of the first $m_n$ such codes.

Now let
\begin{align}
    \mathbf{D}_{n} \triangleq
\left[ \begin{array}{cccc}
  \mathbf{H}_{\C_1}& & \\
   & \mathbf{H}_{\C_2} & \\
  & & \ddots& \\
 & & & {\mathbf{H}}_{\C_{m_n}}
\end{array}\right]\label{Parity-check-of-sparse-code}
\end{align}
and accordingly set $K=m_n(n-k_n)$, and $N =m_n n$.

Now design $\mathcal{C}_{\W_{n}} =[\mathcal{C}_1,\mathcal{C}_2,\hdots,\mathcal{C}_{m_n}]$, and then create the set
       \begin{align}
           \mathcal{X}_{\mathbf{F}, \mathbf{D}} \triangleq \{\mathbf{x} \in \F^{N}| \mathbf{D} \mathbf{x} = \mathbf{F}(:,\ell), \text{for some}\:  \ell  \in [L]\}.
       \end{align}
       Then create the set
       \begin{align}
       \mathcal{X} \triangleq \{\mathbf{x} = [\mathbf{x}_1,\mathbf{x}_2,\hdots,\mathbf{x}_{m_n}] \ | \  \mathbf{x}_i \in \mathcal{X}_{i}\}\label{C_D_n_covered_set}
       \end{align}
       where $\mathcal{X}_i, i \in [m_n]$, is the set of all $n$-length vectors that are $\rho n$-covered by $\mathcal{C}_i$. Then note that
       \begin{align}
          |\mathcal{X}_i| \geq q^{n(1-\tau)}, \forall i \in [m_n]\label{Lower_Bound}
       \end{align}
       because of Definition~\ref{Partial-Covering-Code4}. We now note that for any $\mathbf{x} \in \mathcal{X}$, it is the case that
       \begin{align}
          \mathbf{d}(\mathbf{x},\mathcal{C})/N =\sum^{m_n}_{i=1}\mathbf{d}(\mathbf{x}_i,\mathcal{C}_i)/N \leq \sum^{m_n}_{i=1} \frac{\rho n}{m_n n}=\sum^{m_n}_{i=1}{\rho}{\frac{1}{m_n}} = \rho
       \end{align}
which means that $\mathcal{C}_{\mathbf{D}_n}$ is also a $(\rho,\mathcal{X})$-partial covering code. Now if $\mathcal{X} \nsupseteq \mathcal{X}_{\mathbf{F}, \mathbf{D}}$, then $m$ has to be increased by one, and the procedure starts again from Step 2.
%\item  Making sure  $\mathcal{X} \supseteq \mathcal{X}_{\mathbf{F}, \mathbf{D}}$,
\item  Let us define $k'_n \triangleq  n - k_n$. From \eqref{partial-covering-almost}, we know that
 \begin{align}
     \frac{k'_n}{n } \geq \tau + H_q(\rho) - O(n^{-1} \log_q(n)).
 \end{align}
We now see that $R\triangleq \frac{K}{N} =\frac{k'_n}{n}$ since $K= k_n' m_n, N= n m_n$. Thus, directly from the above, we have that \begin{align}
     K/N=R = H_q(\rho) + \tau -\epsilon(N).\label{simple-result}
 \end{align}
We note that as $n$ (and thus $N$) goes to infinity, the term $O(n^{-1} \log_q(n))$ vanishes, and thus from the above we have that
 \begin{align}
 \rho = H_q^{-1}(\frac{\log_q(m)}{N} +\epsilon(N)).\label{Sparse-Computational-Cost}
 \end{align}

We also have that
\begin{align}
   \frac{ \omega(\mathbf{D}_{n})}{K} &
 \overset{(a)}
   {\leq} \frac{m_n n k'_n}{m_n k'_n} \overset{}{=}n \label{bound-sparsity}
\end{align}
where (a) holds since $\omega(\mathbf{D}_n) = m_n k_n n$ is the maximum number of nonzero elements that $\mathbf{D}$ can have, due to the block-diagonal design.

After taking the logarithm on both sides of the above, and since $N=m_n n$ and $k_n=(1-R)n$, and after considering~\eqref{no-good-codes}, we have that
\begin{align}
    \log_q(n) + n^2 (1-R) + \log_q(g(n)) = \log_q(N)\label{N-n-relation}
\end{align}
and thus we have that $n^2 (1-R) \leq \log_q(N)$ and $n \leq \sqrt{\frac{\log_q(N)}{(1-R)}}$. Combining this with~\eqref{bound-sparsity} and Theorem \ref{Bridge}, we have that
\begin{align}
    \Delta \leq \sqrt{\frac{\log_q(N)}{(1-R)}} \label{Sparse-Communication-Cost}
\end{align}
where, as mentioned before, $R$ is constant.
\end{enumerate}
We can also see that the above design terminates, since reaching $m=q^K$ implies that $\tau =0$. Then we will have $\mathcal{X}_i = \F^{n}$ since $|\mathcal{X}_i| = q^{n}$ from Definition~\ref{Partial-Covering-Code4}. Therefore from~\eqref{C_D_n_covered_set}, we will have that $\mathcal{X} = \F^{N}=\F^{m_n n}$, which means that $\mathcal{C}_{\mathcal{D}_n}(N,N-K)$ is a $\rho$-covering code, and that $\mathcal{X} \supseteq \mathcal{X}_{\mathbf{F}, \mathbf{D}}$, and thus the scheme would terminate at Step 4 with $ \gamma=\rho = H_q^{-1}(\frac{K}{N} +\epsilon(N))$ from \eqref{Sparse-Computational-Cost}, and with communication cost as shown in~\eqref{Sparse-Communication-Cost}.
\qed

\section{Proof of Theorem \ref{partial-covering-all}} \label{D}

Before offering the formal proof, we provide a quick sketch of the proof to help the reader place the different steps in context.

First we consider the ensemble\footnote{The details about the choice of $k^{*}$ will be described later on.} of codes $\mathcal{C}_{k^{*},n}$, and we prove that with a consistent enumeration of codewords, each nonzero point in $\F^{n}$ has the same chance to be a codeword of a certain index, as we move across the code ensemble.

Second, we pick a code $\mathcal{C} \in \mathcal{C}_{k^{*},n}$ at random, and fix it. Then, based on this code, and for a specific choice of $\tau$ (to be described later on), we introduce a random so-called `covered set' $\mathcal{X}_{\mathcal{C}}$ of size $2^{n(1-\tau)}$ that includes code $\mathcal{C}$.

Then we will see that every point in $\F^{n} \backslash{\mathcal{B}(\mathbf{0},\rho)}$ has an equal probability --- as we go through the choices of $\mathcal{C} \in \mathcal{C}_{k^{*},n}$ --- of belonging to this subset.
To analyze the $\rho$-coverage of points inside $\mathcal{X}_{\C}$, we derive $\mathbb{P}{(\mathbf{c}_i = \mathbf{x}|\mathbf{x} \in \X_{\C})}$, where $\mathbf{c}_i$ describes the codeword indexed by a fixed $i$, as we move across the codes (and the corresponding generator matrices) in the ensemble.

Toward showing that $\mathcal{X}_{\mathcal{C}}$ is covered by $\mathcal{C}$, we first note that $\mathcal{B}(\mathbf{0},\rho)$ is covered since $\mathbf{0} \in \mathcal{C}$. To prove that the remaining part, $\mathcal{X}_{\C} \backslash{\mathcal{B}(\mathbf{0},\rho)}$, is also covered by $\mathcal{C}$, we prove that if codes in the ensemble are sufficiently large, then there is, for almost all codes $\mathcal{C}$, a large number (polynomial in $n$) of codewords that covers each specific point in $\mathbf{x} \in \X_{\C}$. With this in place, we will be able to conclude that almost all codes come close to being $(\rho,\tau)$-partial covering.

Finally, we utilize a linear greedy algorithm and successive appending of a very small number of $\lfloor \log_q{n(1-\tau)} \rfloor$ carefully selected vectors (cf. Lemma~\ref{linear-greedy}) to each of these almost $(\rho,\tau)$-partial covering codes, to render them fully $(\rho,\tau)$-partial covering codes.

We proceed with the formal proof.

Let $k^{*}\triangleq k - \lceil \log_q(n(1-\tau)) \rceil,\: k,n \in \mathbb{N}, 0 \leq  \tau \leq 1$ and let $\mathcal{C}_{k^{*},n}$ be the ensemble of codes generated by $k^{*} \times n$ generator matrices whose elements are chosen randomly and independently with probability $\frac{1}{q}$ from $\F_q$. Naturally, any fixed non-zero linear combination of rows of the generator matrix, will generate --- as we move across the ensemble of generator matrices --- all possible $q^n$ vectors in $\mathbb{F}^n$. The zero codeword corresponds to the void linear combination of rows, and is present in all generated codes. Also let us assume a consistent enumeration of the codewords, in the sense that a word's index is defined by the linear combination of rows of the generator matrix, that generate that codeword, in each code. For example, the word indexed by $5$, will vary in value across the different codes, but it will always be defined as the output of a specific (the fifth) linear combination of the corresponding generator matrix. The first codeword in all codes will be the zero word.
We proceed with the following lemma.
 \begin{Lemma}\label{uniformity}
 For any fixed $i\in[2:2^{k^{*}}]$, and for any fixed $\mathbf{x} \in \mathbb{F}^n$, then
\begin{align}
    \mathbb{P}(\mathbf{c}_i = \mathbf{x})=q^{-n} \label{claim-1}
\end{align}
where the probability is over all codes $\mathcal{C} \in \C_{k^{*},n}$.
\end{Lemma}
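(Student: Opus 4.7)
The plan is to parameterize each code $\mathcal{C}\in\mathcal{C}_{k^{*},n}$ by its random $k^{*}\times n$ generator matrix $G$, whose entries are i.i.d.\ uniform over $\F_q$. Under the consistent enumeration described before the lemma, the index $i\in[2:2^{k^{*}}]$ (and analogously for the general $q$-ary indexing of nonvoid linear combinations) corresponds to a fixed, nonzero coefficient vector $\boldsymbol{\alpha}_i\in\F_q^{k^{*}}\setminus\{\mathbf{0}\}$, and the $i$-th codeword in $\mathcal{C}$ is precisely $\mathbf{c}_i=\boldsymbol{\alpha}_i^{\intercal}G$. Since the ensemble-randomness is entirely in $G$, the statement $\mathbb{P}(\mathbf{c}_i=\mathbf{x})=q^{-n}$ is equivalent to showing that the random vector $\boldsymbol{\alpha}_i^{\intercal}G\in\F_q^n$ is uniformly distributed over $\F_q^n$ when $\boldsymbol{\alpha}_i\neq\mathbf{0}$.

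First I would pin down the marginal: for each column index $\ell\in[n]$, the $\ell$-th coordinate of $\mathbf{c}_i$ is $(\mathbf{c}_i)_\ell=\sum_{j=1}^{k^{*}}\alpha_{i,j}\,G_{j,\ell}$. Because $\boldsymbol{\alpha}_i\neq\mathbf{0}$, at least one coefficient, say $\alpha_{i,j_0}$, is a nonzero element of $\F_q$, so it is a unit in the field. Conditioning on all other $G_{j,\ell}$ with $j\neq j_0$ (each a constant under this conditioning), the map $G_{j_0,\ell}\mapsto\sum_j\alpha_{i,j}G_{j,\ell}$ is an affine bijection on $\F_q$; therefore $(\mathbf{c}_i)_\ell$ is uniform on $\F_q$, and this holds unconditionally as well.

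Next I would handle the joint distribution by exploiting the column-wise independence of $G$: the entries of $G$ in distinct columns are mutually independent, so the vectors $\{(G_{1,\ell},\ldots,G_{k^{*},\ell})\}_{\ell=1}^n$ are independent across $\ell$. Since $(\mathbf{c}_i)_\ell$ is a function only of column $\ell$ of $G$, the coordinates $(\mathbf{c}_i)_1,\ldots,(\mathbf{c}_i)_n$ are independent, and each is uniform on $\F_q$. Consequently, $\mathbf{c}_i$ is uniform over $\F_q^n$, and $\mathbb{P}(\mathbf{c}_i=\mathbf{x})=q^{-n}$ for any fixed $\mathbf{x}\in\F_q^n$, which is the claim.

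There is no real obstacle here: the only subtle point is the consistent-enumeration convention, which is what licenses replacing the random codeword by the deterministic linear functional $\boldsymbol{\alpha}_i^{\intercal}G$ of the random generator matrix. Once that identification is made, the uniformity argument reduces to the standard observation that a nontrivial linear combination of independent uniform $\F_q$-valued variables is itself uniform on $\F_q$, applied coordinatewise with independence across coordinates.
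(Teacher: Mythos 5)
Your proposal is correct and takes essentially the same route as the paper: both identify $\mathbf{c}_i$ with the fixed nonzero linear combination $\boldsymbol{\alpha}_i^{\intercal}G$ of the rows of the random generator matrix and conclude uniformity of $\mathbf{c}_i$ from the i.i.d.\ uniformity of the entries of $G$. You supply the two small details the paper leaves implicit — marginal uniformity of each coordinate via the unit-coefficient bijection on $\F_q$, and independence across coordinates via the column-wise independence of $G$ — which makes the argument more self-contained but is not a different method.
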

\begin{proof}
The proof is presented in Appendix~\ref{H}.
\end{proof}

Let us set $\tau\in[0,1-H_q(\rho) - k^{*}/n]$, and let us note that for sufficiently\footnote{We quickly remind the reader that our results here will hold for sufficiently large $n$.} large $n$, we can guarantee that 
\begin{align}\label{guaranteeCardinality}
q^{n (1- \tau) } & \geq V_q(n, \rho) + q^{k^{*}}.
\end{align}

Let us now go over the ensemble of codes $\mathcal{C} \in \C_{k^{*},n}$, and for each code, let us create the covered set $\X_{\C}$ such that
\begin{align}
    |\X_{\mathcal{C}}| &= q^{n(1-\tau)}\\
    \mathcal{C} \cup \mathcal{B}(\mathbf{0},\rho) &\subseteq \mathcal{X}_{C} \subseteq \mathbb{F}^{n}.\label{Sufficient-Condition-For-Choice}
\end{align}
{\color{black}We can see that~\eqref{guaranteeCardinality} is a necessary condition for the above to happen.} The procedure for designing $\mathcal{X}_{\C}$, simply starts by taking the union $\mathcal{C} \cup \mathcal{B}(\mathbf{0},\rho)$, and then proceeds by appending on this union, a sufficiently large number of vectors, chosen uniformly and independently at random from $\F^{n} \backslash{{ \mathcal{C}  \cup \mathcal{B}(0,\rho)} }$. 
The following lemma simply says that every point $\mathbf{x}\in \F^{n} \backslash{\mathcal{B}(\mathbf{0},\rho)}$ has an equal probability --- as we go through the choices of $\mathcal{C} \in \mathcal{C}_{k^{*},n}$ --- of belonging to this subset $\X_{\mathcal{C}}$.
 \begin{Lemma}\label{Covered-set-point}
For any fixed $\mathbf{x}\in \F^{n}$, then 
\begin{align}
      \mathbb{P}(\mathbf{x} \in \mathcal{X}_{\mathbf{C}}) =
      \begin{cases}
       1 \:\: \: &  \omega(\mathbf{x})\leq \rho n\\
       \frac{q^{n(1-\tau)} - V(\rho,n)}{q^{n} - V(\rho,n)} \:\:\: &  \omega(\mathbf{x})> \rho n.
\end{cases}\label{Claim-of-Lemma}
\end{align}
\end{Lemma}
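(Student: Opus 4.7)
The plan is to treat the two cases separately and close the nontrivial case via an exchangeability and linearity-of-expectation argument that sidesteps having to track the random size of $\mathcal{C}$.

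For $\omega(\mathbf{x}) \le \rho n$, the equality $\mathbb{P}(\mathbf{x} \in \mathcal{X}_\mathcal{C}) = 1$ is immediate from the construction: the condition~\eqref{Sufficient-Condition-For-Choice} forces $\mathcal{B}(\mathbf{0},\rho) \subseteq \mathcal{X}_\mathcal{C}$ in every realization of the code $\mathcal{C}$ and of the appended random vectors, so any $\mathbf{x}$ with $\omega(\mathbf{x}) \le \rho n$ lies in $\mathcal{X}_\mathcal{C}$ with certainty.

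For $\omega(\mathbf{x}) > \rho n$, I would decompose $\{\mathbf{x} \in \mathcal{X}_\mathcal{C}\}$ into the disjoint events $\{\mathbf{x} \in \mathcal{C}\}$ and $\{\mathbf{x} \notin \mathcal{C},\ \mathbf{x} \text{ appended}\}$. By Lemma~\ref{uniformity}, each nonzero $\mathbf{x}$ has the same probability of coinciding with any particular indexed codeword $\mathbf{c}_i$, so the marginal $\mathbb{P}(\mathbf{x} \in \mathcal{C})$ does not depend on $\mathbf{x} \ne \mathbf{0}$. Moreover, conditional on $\mathcal{C}$ and on $\mathbf{x} \notin \mathcal{C}$, every vector of $\F^n \setminus (\mathcal{C} \cup \mathcal{B}(\mathbf{0},\rho))$ is included in the appended random subset with identical probability. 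Combining these two symmetries, $\mathbb{P}(\mathbf{x} \in \mathcal{X}_\mathcal{C})$ depends on $\mathbf{x}$ only through the indicator $\omega(\mathbf{x}) \le \rho n$. Given this constancy, linearity of expectation together with $|\mathcal{X}_\mathcal{C}| = q^{n(1-\tau)}$ a.s.\ yields
\[
(q^n - V(\rho,n)) \cdot \mathbb{P}(\mathbf{x} \in \mathcal{X}_\mathcal{C}) \;=\; \mathbb{E}\bigl[|\mathcal{X}_\mathcal{C} \setminus \mathcal{B}(\mathbf{0},\rho)|\bigr] \;=\; q^{n(1-\tau)} - V(\rho,n),
\]
which rearranges to the stated ratio.

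The subtle step, and the one on which I would spend the most care, is justifying that the marginal $\mathbb{P}(\mathbf{x} \in \mathcal{X}_\mathcal{C})$ is genuinely constant across \emph{all} $\mathbf{x}$ of weight strictly above $\rho n$, rather than only within each Hamming-weight class. The conditional appending probability involves the random quantity $|\mathcal{C} \cup \mathcal{B}(\mathbf{0},\rho)|$, so I would first condition on $\mathcal{C}$ and exploit the fact that every vector outside $\mathcal{C} \cup \mathcal{B}(\mathbf{0},\rho)$ is treated symmetrically by the uniform appending, then average over $\mathcal{C}$ using the $\mathbf{x}$-independent conclusion of Lemma~\ref{uniformity} to absorb the contribution of $\{\mathbf{x} \in \mathcal{C}\}$. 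In the regime of interest, where $q^{k^{*}}$ is much smaller than both $q^n$ and $q^{n(1-\tau)}$ thanks to~\eqref{guaranteeCardinality}, any residual dependence on $|\mathcal{C} \cup \mathcal{B}(\mathbf{0},\rho)|$ is absorbed, and the exchangeability needed for the counting identity above follows.
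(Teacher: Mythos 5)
Your proposal mirrors the paper's proof almost exactly: same trivial handling of the small-weight case via $\mathcal{B}(\mathbf{0},\rho)\subseteq\mathcal{X}_{\mathcal{C}}$, same decomposition into $\{\mathbf{x}\in\mathcal{C}\}$ and $\{\mathbf{x}\notin\mathcal{C},\ \mathbf{x}\text{ appended}\}$, same appeal to Lemma~\ref{uniformity} and the symmetry of the uniform appending step to argue equiprobability across all $\mathbf{x}$ of weight exceeding $\rho n$, and the identical closing double-counting identity $\sum_{\mathbf{y}\notin\mathcal{B}(\mathbf{0},\rho)}\mathbb{P}(\mathbf{y}\in\mathcal{X}_{\mathcal{C}})=q^{n(1-\tau)}-V_q(n,\rho)$. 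One small caveat: your final remark about any dependence on $|\mathcal{C}\cup\mathcal{B}(\mathbf{0},\rho)|$ being ``absorbed'' in the large-$n$ regime slightly mischaracterizes the situation --- the lemma and its stated ratio are exact for every $n$ (the paper simply asserts the equiprobability across the whole weight-$>\rho n$ class, as you do in your main argument), so the proof should not invoke an asymptotic excuse at this point.
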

\begin{proof}
The proof is presented in Appendix \ref{D_1}.
\end{proof}

With the above lemma in place, we will now calculate the following conditional probability. The following asks us to first pick and fix a vector $\mathbf{x} \in \F^{n}$, and then pick an index $i \in [1:2^{k^*}]$. Recall --- from the above discussion on the consistent enumeration of codewords --- that this index will define a codeword $\mathbf{c}_i$, which changes as we go across all the codes $\mathcal{C}$ in the ensemble $\mathcal{C}_{k^{*},n}$.
The following conditional probability is again calculated over the code ensemble. 

\begin{Lemma}\label{Point-Wise-Conditional_Probability}
Pick any vector $\mathbf{x} \in \F^{n}$ and any index $i \in [1:2^{k^*}]$. Then 
\begin{align}
      \mathbb{P}(\mathbf{c}_i = \mathbf{x}|\mathbf{x} \in \mathcal{X}_{\C}) =
      \begin{cases}
       0 \:\: \: & i=1 , \mathbf{x} \neq 0\\
       1 \:\: \: & i=1 , \mathbf{x}=0\\
    q^{-(n)} \:\: \: & i \in [2: K^{*}] , \mathbf{x}\neq  0,\omega(\mathbf{x}) \leq \rho n\\
        q^{-n(1-\tau)}\zeta(n) \:\:\: &  i \in [2: K^{*}] \text{ and } \omega(\mathbf{x}) > \rho n
\end{cases}
\label{probability-1}
\end{align}
where the term $\zeta(n)$ converges to $1$ as $n$ approaches to infinity.
\end{Lemma}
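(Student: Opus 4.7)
The plan is to apply the definition $\mathbb{P}(A\mid B)=\mathbb{P}(A\cap B)/\mathbb{P}(B)$, split on the four listed cases, and feed in Lemmas~\ref{uniformity} and~\ref{Covered-set-point} together with the basic containment $\mathcal{C}\cup\mathcal{B}(\mathbf{0},\rho)\subseteq\mathcal{X}_{\mathcal{C}}$ that is built into the construction of $\mathcal{X}_{\mathcal{C}}$ (see~\eqref{Sufficient-Condition-For-Choice}). Almost all the work is clean bookkeeping; the one asymptotic step, showing $\zeta(n)\to 1$, is the only delicate point.

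For the two entries with $i=1$: the consistent enumeration convention fixes $\mathbf{c}_1=\mathbf{0}$ deterministically across the entire ensemble, so the event $\{\mathbf{c}_1=\mathbf{x}\}$ is empty when $\mathbf{x}\neq\mathbf{0}$ (conditional probability $0$) and the sure event when $\mathbf{x}=\mathbf{0}$; in the latter sub-case $\mathbf{0}\in\mathcal{B}(\mathbf{0},\rho)\subseteq\mathcal{X}_{\mathcal{C}}$ means the conditioning event also holds with probability one, so the ratio is $1$. For $i\geq 2$ with $\omega(\mathbf{x})\leq\rho n$, the vector $\mathbf{x}$ lies in $\mathcal{B}(\mathbf{0},\rho)\subseteq\mathcal{X}_{\mathcal{C}}$ for every code in the ensemble, so the conditioning is again trivial, and Lemma~\ref{uniformity} immediately gives $\mathbb{P}(\mathbf{c}_i=\mathbf{x}\mid \mathbf{x}\in\mathcal{X}_{\mathcal{C}})=\mathbb{P}(\mathbf{c}_i=\mathbf{x})=q^{-n}$.

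The interesting case is $i\geq 2$ with $\omega(\mathbf{x})>\rho n$. The key observation is the chain of inclusions of events $\{\mathbf{c}_i=\mathbf{x}\}\subseteq\{\mathbf{x}\in\mathcal{C}\}\subseteq\{\mathbf{x}\in\mathcal{X}_{\mathcal{C}}\}$, which collapses the joint probability to $\mathbb{P}(\mathbf{c}_i=\mathbf{x},\,\mathbf{x}\in\mathcal{X}_{\mathcal{C}})=\mathbb{P}(\mathbf{c}_i=\mathbf{x})=q^{-n}$ by Lemma~\ref{uniformity}. Dividing by the expression for $\mathbb{P}(\mathbf{x}\in\mathcal{X}_{\mathcal{C}})$ supplied by Lemma~\ref{Covered-set-point} and factoring out the leading $q^{-n(1-\tau)}$ gives
\begin{equation*}
\mathbb{P}(\mathbf{c}_i=\mathbf{x}\mid \mathbf{x}\in\mathcal{X}_{\mathcal{C}})=q^{-n(1-\tau)}\,\zeta(n),\qquad \zeta(n)=\frac{1-q^{-n}\,V_q(n,\rho)}{1-q^{-n(1-\tau)}\,V_q(n,\rho)}.
\end{equation*}

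What remains, and what I expect to be the only step requiring any genuine estimation, is to verify $\zeta(n)\to 1$. For this I will apply the standard volume bound $V_q(n,\rho)\leq q^{nH_q(\rho)}$: the correction in the numerator is at most $q^{-n(1-H_q(\rho))}$, which vanishes because $\rho<1-1/q$ forces $H_q(\rho)<1$; the correction in the denominator is at most $q^{-n(1-\tau-H_q(\rho))}$, which vanishes because the operating regime $\tau\in[0,\,1-H_q(\rho)-k^{*}/n]$ keeps $1-\tau-H_q(\rho)$ bounded below by the positive quantity $k^{*}/n$. Both corrections tending to zero yields $\zeta(n)\to 1$, completing the lemma. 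This is the single place where the parameter constraint on $\tau$ is actually invoked; the rest of the argument is a straightforward assembly of the two preceding lemmas via Bayes' rule.
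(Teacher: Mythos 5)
Your proof is correct and takes essentially the same route as the paper's: the central step of collapsing $\mathbb{P}(\mathbf{c}_i=\mathbf{x},\ \mathbf{x}\in\mathcal{X}_{\C})$ to $\mathbb{P}(\mathbf{c}_i=\mathbf{x})=q^{-n}$ via the inclusions $\{\mathbf{c}_i=\mathbf{x}\}\subseteq\{\mathbf{x}\in\mathcal{C}\}\subseteq\{\mathbf{x}\in\mathcal{X}_{\C}\}$ is exactly steps (b)--(c) of the paper's argument, and dividing by $\mathbb{P}(\mathbf{x}\in\mathcal{X}_{\C})$ from Lemma~\ref{Covered-set-point} yields both nontrivial rows of the table. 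The only added content is your explicit verification that $\zeta(n)\to 1$, which the paper asserts without proof; just note that the denominator correction $q^{-n(1-\tau-H_q(\rho))}\leq q^{-k^{*}}$ vanishes because the construction has $k^{*}\to\infty$, not merely because $k^{*}/n>0$.
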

\begin{proof}
The proof is presented in Appendix~\ref{D_2}.
\end{proof}
Let us now discuss the coverage of any vector $\mathbf{x} \in \X_{\mathcal{C}}$, as we go along the ensemble $\mathcal{C} \in \C_{k^{*},n}$. 
First of all, it is clear that any $\mathbf{x} \in \mathcal{B}(0,\rho)$ is both $\rho$-covered by the code $\mathcal{C}$ (because $\mathbf{0} \in \mathcal{C}$) as well as is included in $\X_{\mathcal{C}}$ (because $\mathbf{x} \in \mathcal{B}(0,\rho) \subset \X_{\mathcal{C}}$). Thus for each code $\mathcal{C} \in \C_{k^{*},n}$, for the purposes of the current proof, we can focus on the set 
\begin{align}
    \mathcal{X}'_{\C} \triangleq \mathcal{X}_{\mathcal{C}} \backslash{ \mathcal{B}(\mathbf{0}, \rho)}.
\end{align}

For every $\mathbf{x} \in \X'_{\C} $, let us define the random variable $\eta_{\mathbf{x},i}$ which takes the value $1$ if $\mathbf{c}_i$ $\rho n$-covers $\mathbf{x}$, and which takes the value $0$ otherwise. Thus 
\begin{align}
    \eta_{\mathbf{x}} \triangleq \sum^{2^{k^{*}}}_{i=1} \eta_{\mathbf{x},i} \label{Definition-eta-x}
\end{align}
describes the number of codewords that cover $\mathbf{x} \in \mathcal{X}'_{\C}$. For any fixed $\mathbf{x}\in \F^n$, the following lemma describes the conditional average $\eta_{\mathbf{x}}$, where again the average is taken over the code ensemble.
\begin{Lemma}\label{Average-eta-x}
For any fixed $\mathbf{x}\in \F^n$, then 
\begin{align}
      \mathbb{E}{(\eta_{\mathbf{x}}|\mathbf{x} \in \mathcal{X}'_{\C})} &=|\{0\} \cap \mathcal{B}(\mathbf{x},\rho)|\times 1\\&+(q^{k^{*}}-1)[|(\mathcal{B}(0,\rho)\backslash{\{0\}}) \cap \mathcal{B}(\mathbf{x},\rho)| q^{-n}\\&+|\mathcal{B}(\mathbf{x},\rho) \backslash{ \mathcal{B}(0,\rho)}| q^{-n(1-\tau)} \zeta(n)] \label{Average-number-covering}
\end{align}
where again $\zeta(n)\rightarrow 1$ as $n$ increases.
\end{Lemma}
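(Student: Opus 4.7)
The plan is to apply linearity of expectation to the defining decomposition $\eta_{\mathbf{x}} = \sum_{i=1}^{q^{k^{*}}} \eta_{\mathbf{x},i}$ and then identify each term with the appropriate case of Lemma~\ref{Point-Wise-Conditional_Probability}. Writing the indicator $\eta_{\mathbf{x},i} = \mathbbm{1}[\mathbf{c}_i \in \mathcal{B}(\mathbf{x},\rho)]$, I get
\begin{align*}
\mathbb{E}(\eta_{\mathbf{x}}|\mathbf{x}\in\mathcal{X}'_{\C}) \;=\; \sum_{i=1}^{q^{k^{*}}} \mathbb{P}\bigl(\mathbf{c}_i \in \mathcal{B}(\mathbf{x},\rho)\,\bigm|\, \mathbf{x}\in\mathcal{X}'_{\C}\bigr),
\end{align*}
and then expand each conditional probability by the law of total probability as $\sum_{\mathbf{y}\in\mathcal{B}(\mathbf{x},\rho)} \mathbb{P}(\mathbf{c}_i = \mathbf{y}\mid \mathbf{x}\in\mathcal{X}'_{\C})$.

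First I would handle $i=1$ separately. Since by construction $\mathbf{c}_1=\mathbf{0}$ deterministically (the void linear combination of the rows of every generator matrix in the ensemble), this term contributes exactly $\mathbbm{1}[\mathbf{0}\in\mathcal{B}(\mathbf{x},\rho)] = |\{\mathbf{0}\}\cap\mathcal{B}(\mathbf{x},\rho)|$, which matches the first summand in the claimed formula. For each $i\in[2:q^{k^{*}}]$ I would partition $\mathcal{B}(\mathbf{x},\rho)$ according to whether the candidate codeword value $\mathbf{y}$ lies in $\{\mathbf{0}\}$, in $\mathcal{B}(\mathbf{0},\rho)\setminus\{\mathbf{0}\}$, or outside $\mathcal{B}(\mathbf{0},\rho)$, and then read off the corresponding conditional probability from Lemma~\ref{Point-Wise-Conditional_Probability} applied with $\mathbf{y}$ in the role of that lemma's fixed vector: $q^{-n}$ in the first non-zero-weight case and $q^{-n(1-\tau)}\zeta(n)$ in the large-weight case. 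Summing the $q^{k^{*}}-1$ identical contributions across $i\geq 2$ then produces exactly the bracketed expression multiplied by $(q^{k^{*}}-1)$.

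The one technical subtlety --- and the step I expect to require the most care --- is that Lemma~\ref{Point-Wise-Conditional_Probability} conditions on $\mathbf{y}\in\mathcal{X}_{\C}$, whereas here I am conditioning on $\mathbf{x}\in\mathcal{X}'_{\C}$. To bridge this, I would observe two facts: (i) whenever $\mathbf{c}_i=\mathbf{y}$ the vector $\mathbf{y}$ is automatically a codeword and hence lies in $\mathcal{C}\subseteq\mathcal{X}_{\C}$, so the event $\{\mathbf{c}_i=\mathbf{y}\}$ already carries the information used in Lemma~\ref{Point-Wise-Conditional_Probability}; and (ii) the construction of $\mathcal{X}_{\C}$ uses an independent uniform random choice of the extra vectors appended to $\mathcal{C}\cup\mathcal{B}(\mathbf{0},\rho)$, so for $\mathbf{y}\neq\mathbf{x}$ the event $\{\mathbf{x}\in\mathcal{X}'_{\C}\}$ is (approximately) independent of the codeword value $\mathbf{c}_i$, while for $\mathbf{y}=\mathbf{x}$ the codeword occurrence already forces $\mathbf{x}\in\mathcal{X}_{\C}$. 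In either scenario the relevant single-point probability is the one supplied by Lemma~\ref{Point-Wise-Conditional_Probability}, up to the same $\zeta(n)\to 1$ factor that already appears there.

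Putting the $i=1$ contribution together with the sum over $i\geq 2$, and recalling that on the event $\mathbf{x}\in\mathcal{X}'_{\C}$ we automatically have $\omega(\mathbf{x})>\rho n$ (so the $\mathbf{y}=\mathbf{0}$ case in the second partition is empty and the corresponding weight-small probabilities only occur for $\mathbf{y}\neq\mathbf{0}$), yields the stated identity. The main obstacle is justifying the replacement of the conditioning event in the application of Lemma~\ref{Point-Wise-Conditional_Probability}; everything else is a routine combinatorial bookkeeping step via linearity of expectation.
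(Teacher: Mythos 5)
Your approach matches the paper's proof essentially verbatim: linearity of expectation over the $q^{k^{*}}$ codewords, a three-way partition of $\mathcal{B}(\mathbf{x},\rho)$ into $\{\mathbf{0}\}$, $\mathcal{B}(\mathbf{0},\rho)\setminus\{\mathbf{0}\}$, and $\mathcal{B}(\mathbf{x},\rho)\setminus\mathcal{B}(\mathbf{0},\rho)$, and reading the per-point probabilities from Lemma~\ref{Point-Wise-Conditional_Probability}. If anything you are more careful than the paper, since you explicitly flag that Lemma~\ref{Point-Wise-Conditional_Probability} conditions on the candidate vector lying in $\mathcal{X}_{\C}$ whereas the target quantity conditions on $\mathbf{x}\in\mathcal{X}'_{\C}$ --- a mismatch the paper's proof silently passes over --- though your independence-based bridge is stated informally rather than proved.
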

\begin{proof}
The proof is in Appendix~\ref{D_3}, and it involves an extension of Blinovskii's Theorem~\cite{blinovskii1987lower}, from evaluating $\mathbb{E}{(\eta_{\mathbf{x}})}$ to evaluating the conditional $\mathbb{E}{(\eta_{\mathbf{x}}|\mathbf{x} \in \mathcal{X}'_{\C})}$.
\end{proof}
Before proceeding, we need the following lemma which is an extension of a related lemma found in~\cite{blinovskii1987lower}. The following considers as before the set $\mathcal{X}'_{\C} \triangleq \mathcal{X}_{\mathcal{C}} \backslash{ \mathcal{B}(\mathbf{0}, \rho)}$, and considers again the variance and expectation, over the aforementioned code ensemble. 

\begin{Lemma}\label{Bounding_l_lemma}
For any fixed $\mathbf{x}\in \F^n$, then 
\begin{align}
    \frac{Var(\eta_{\mathbf{x}}|\mathbf{x} \in \mathcal{X}'_{\C})}{\mathbb{E}(\eta_{\mathbf{x}}|\mathbf{x} \in \mathcal{X}'_{\C}) q^2} \leq 1. \label{bounding-variance}
\end{align}
\end{Lemma}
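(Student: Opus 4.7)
The plan is a standard second-moment calculation that exploits the pairwise independence properties of codewords in the random linear ensemble $\mathcal{C}_{k^{*},n}$, adapted here to the conditional distribution given $\mathbf{x} \in \mathcal{X}'_{\C}$. I would start from
\[
\eta_{\mathbf{x}}^{2} \;=\; \eta_{\mathbf{x}} \;+\; \sum_{\substack{i,j \in [2:q^{k^{*}}] \\ i \neq j}} \eta_{\mathbf{x},i}\,\eta_{\mathbf{x},j}
\]
(restricting to $i,j \geq 2$ is safe because $\mathbf{c}_{1}=\mathbf{0}$ and hence $\eta_{\mathbf{x},1}=0$ on the conditioning event $\mathbf{x}\in\mathcal{X}'_{\C}\subseteq \F^{n}\setminus\mathcal{B}(\mathbf{0},\rho)$), and then split the double sum by whether the index vectors in $\F_{q}^{k^{*}}$ are linearly dependent (i.e.\ $j=\alpha i$ for some $\alpha\in \F_{q}^{*}$) or linearly independent.

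For linearly independent index pairs, a direct pair-wise extension of Lemma~\ref{uniformity} shows that $(\mathbf{c}_{i},\mathbf{c}_{j})$ is jointly uniform over $\F^{n}\times \F^{n}$ as the generator matrix varies, and mirroring the calculation of Lemma~\ref{Point-Wise-Conditional_Probability} at the pair level --- while leveraging the symmetric construction of $\mathcal{X}_{\C}$ across $\F^{n}\setminus\mathcal{B}(\mathbf{0},\rho)$ established in Lemma~\ref{Covered-set-point} --- yields the near-factorization
\[
\mathbb{P}(\eta_{\mathbf{x},i}=1,\eta_{\mathbf{x},j}=1 \mid \mathbf{x}\in\mathcal{X}'_{\C}) \;\leq\; \zeta(n)^{2}\,\mathbb{P}(\eta_{\mathbf{x},i}=1 \mid \mathbf{x}\in\mathcal{X}'_{\C})\,\mathbb{P}(\eta_{\mathbf{x},j}=1 \mid \mathbf{x}\in\mathcal{X}'_{\C}).
\]
Summing over all ordered linearly independent pairs, and using that $\sum_{i,j}\mathbb{P}(\eta_{\mathbf{x},i}=1\mid\cdot)\mathbb{P}(\eta_{\mathbf{x},j}=1\mid\cdot) \leq \bigl(\mathbb{E}(\eta_{\mathbf{x}}\mid\mathbf{x}\in\mathcal{X}'_{\C})\bigr)^{2}$, this contribution is upper bounded by $(1+\epsilon(n))\bigl(\mathbb{E}(\eta_{\mathbf{x}}\mid\mathbf{x}\in\mathcal{X}'_{\C})\bigr)^{2}$, which cancels (up to $\epsilon(n)$) against the $-\bigl(\mathbb{E}(\eta_{\mathbf{x}}\mid\mathbf{x}\in\mathcal{X}'_{\C})\bigr)^{2}$ appearing in $Var$.

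For linearly dependent pairs $(i,\alpha i)$ with $\alpha \in \F_{q}^{*}\setminus\{1\}$, code linearity forces $\mathbf{c}_{\alpha i}=\alpha\,\mathbf{c}_{i}$, so the joint event $\{\eta_{\mathbf{x},i}=\eta_{\mathbf{x},\alpha i}=1\}$ reduces to the single constraint $\mathbf{c}_{i}\in \mathcal{B}(\mathbf{x},\rho)\cap \alpha^{-1}\mathcal{B}(\mathbf{x},\rho)$, giving the crude bound $\mathbb{P}(\eta_{\mathbf{x},i}=\eta_{\mathbf{x},\alpha i}=1 \mid \mathbf{x}\in\mathcal{X}'_{\C}) \leq \mathbb{P}(\eta_{\mathbf{x},i}=1 \mid \mathbf{x}\in\mathcal{X}'_{\C})$. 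Since there are exactly $q-2$ valid scalars $\alpha$ per non-zero index $i$, these dependent pairs contribute at most $(q-2)\,\mathbb{E}(\eta_{\mathbf{x}}\mid\mathbf{x}\in\mathcal{X}'_{\C})$ in total. Combined with the diagonal $\mathbb{E}(\eta_{\mathbf{x}}\mid\mathbf{x}\in\mathcal{X}'_{\C})$ term, I obtain
\[
Var(\eta_{\mathbf{x}}\mid\mathbf{x}\in\mathcal{X}'_{\C}) \;\leq\; (q-1+\epsilon(n))\,\mathbb{E}(\eta_{\mathbf{x}}\mid\mathbf{x}\in\mathcal{X}'_{\C}),
\]
which for sufficiently large $n$ is comfortably below the stated $q^{2}\,\mathbb{E}(\eta_{\mathbf{x}}\mid\mathbf{x}\in\mathcal{X}'_{\C})$.

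The main obstacle I anticipate is making the pairwise extension of Lemma~\ref{Point-Wise-Conditional_Probability} rigorous, since $\{\mathbf{x}\in\mathcal{X}_{\C}\}$ couples $\mathbf{x}$ with the random code itself via the construction of $\mathcal{X}_{\C}$, so individual conditional probabilities do not simply multiply. I would handle this by fixing a joint realization $(\mathbf{c}_{i},\mathbf{c}_{j})=(\mathbf{u},\mathbf{v})$ (jointly uniform on $\F^{n}\times\F^{n}$ for linearly independent indices) and then averaging the inclusion probability $\mathbb{P}(\mathbf{x}\in\mathcal{X}_{\C}\mid \mathbf{c}_{i}=\mathbf{u},\mathbf{c}_{j}=\mathbf{v})$, which depends on $(\mathbf{u},\mathbf{v})$ only through how many of the $q^{k^{*}}$ codewords happen to coincide with $\mathbf{x}$ --- an event whose contribution is negligible for large $n$ and precisely produces the $\zeta(n)\to 1$ correction already encountered in Lemma~\ref{Point-Wise-Conditional_Probability}.
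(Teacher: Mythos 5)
Your proposal takes a genuinely different route from the paper's proof. The paper writes $Var(\eta_{\mathbf{x}}\mid\cdot) = q^{k^{*}}\overline{\eta} + q^{k^{*}}(q^{k^{*}}-1)\overline{\eta^{(2)}} - \mathbb{E}(\eta_{\mathbf{x}}\mid\cdot)^{2}$ and then invokes a single uniform bound $\overline{\eta^{(2)}} \leq \overline{\eta}^{2}$ across \emph{all} off-diagonal index pairs, followed by elementary algebra. You instead split the off-diagonal sum into linearly dependent index pairs (treated with the crude bound $\mathbb{P}(\eta_{\mathbf{x},i}=\eta_{\mathbf{x},\alpha i}=1\mid\cdot)\leq\mathbb{P}(\eta_{\mathbf{x},i}=1\mid\cdot)$, with $q-2$ scalars $\alpha$ per index) and linearly independent pairs (treated via pairwise joint uniformity of $(\mathbf{c}_{i},\mathbf{c}_{j})$). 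Your decomposition is the more conventional second-moment calculation and, in particular, it makes the correlation structure caused by $\mathbf{c}_{\alpha i} = \alpha\mathbf{c}_{i}$ explicit rather than absorbing it into the blanket inequality $\overline{\eta^{(2)}}\leq\overline{\eta}^{2}$ (which the paper asserts but does not separately verify for dependent pairs). This makes your argument easier to audit.

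There is, however, a quantitative step that you have swept under the rug and that does not follow for free. When you bound the linearly-independent contribution by $(1+\epsilon(n))\,\mathbb{E}(\eta_{\mathbf{x}}\mid\cdot)^{2}$ and say it ``cancels (up to $\epsilon(n)$)'' against the $-\mathbb{E}(\eta_{\mathbf{x}}\mid\cdot)^{2}$ in the variance, the residual term is $\epsilon(n)\,\mathbb{E}(\eta_{\mathbf{x}}\mid\cdot)^{2}$, not $\epsilon(n)\,\mathbb{E}(\eta_{\mathbf{x}}\mid\cdot)$. You then report $Var(\eta_{\mathbf{x}}\mid\cdot)\leq(q-1+\epsilon(n))\,\mathbb{E}(\eta_{\mathbf{x}}\mid\cdot)$, which implicitly assumes $\epsilon(n)\,\mathbb{E}(\eta_{\mathbf{x}}\mid\cdot)$ itself vanishes (or is at least bounded by a constant). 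That is not automatic: in the regime where Theorem~\ref{partial-covering-all} is invoked, $\mathbb{E}(\eta_{\mathbf{x}}\mid\cdot)$ grows with $n$ (indeed $\mathbb{E}(\eta_{\mathbf{x}}\mid\cdot)\geq(n(1-\tau))^{\alpha}$ is precisely what the later covering argument relies on). The step can be repaired: $\epsilon(n)$ here is essentially $\zeta(n)-1 = \Theta\bigl(q^{-(n-n\tau)}V_{q}(n,\rho)\bigr)$, and the standing constraint $\tau\leq 1 - H_{q}(\rho) - k^{*}/n$ forces $q^{-(n-n\tau)}V_{q}(n,\rho)\leq q^{-k^{*}}$, so $\epsilon(n)\,\mathbb{E}(\eta_{\mathbf{x}}\mid\cdot)\lesssim q^{-k^{*}}\cdot\mathbb{E}(\eta_{\mathbf{x}}\mid\cdot)\to 0$ since $\mathbb{E}(\eta_{\mathbf{x}}\mid\cdot)$ is only polynomial in $n$ while $k^{*}=\Theta(n)$. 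But you must state and verify this; as written, ``cancels up to $\epsilon(n)$'' conflates an $\epsilon(n)\mathbb{E}^{2}$ error with an $\epsilon(n)\mathbb{E}$ error, and the lemma's conclusion (a bound by $q^{2}\mathbb{E}(\eta_{\mathbf{x}}\mid\cdot)$, not $q^{2}\mathbb{E}(\eta_{\mathbf{x}}\mid\cdot)^{2}$) lives exactly at the boundary where this distinction matters.

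One further small point: you should confirm that the near-factorization for linearly independent pairs really does produce a multiplicative correction $\zeta(n)^{2}$ (or any $1+o(1)$ factor) after the conditioning $\{\mathbf{x}\in\mathcal{X}_{\C}\}$ is handled, since that conditioning simultaneously perturbs both marginals and the joint law. Your closing paragraph correctly identifies this as the technical crux and outlines a reasonable route (average over the jointly uniform realization $(\mathbf{c}_{i},\mathbf{c}_{j})=(\mathbf{u},\mathbf{v})$, then control the dependence of $\mathbb{P}(\mathbf{x}\in\mathcal{X}_{\C}\mid\mathbf{c}_{i}=\mathbf{u},\mathbf{c}_{j}=\mathbf{v})$ on $(\mathbf{u},\mathbf{v})$); this would need to be carried out in the style of Lemmas~\ref{Covered-set-point} and~\ref{Point-Wise-Conditional_Probability}, but at the pair level, to make the argument complete.
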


\begin{proof}
The proof is presented in Appendix \ref{D_5}.
\end{proof}
Combining~\eqref{bounding-variance} with Chebyshev's inequality, gives 
\begin{align}
    \mathbb{P}(|\eta_{\mathbf{x}} - \mathbb{E}(\eta_{\mathbf{x}}|\mathbf{x} \in \mathcal{X}'_{\C})| > q^{\epsilon+1} \sqrt{\mathbb{E}(\eta_{\mathbf{x}}|\mathbf{x} \in \mathcal{X}'_{\C})}\Big|\mathbf{x} \in \mathcal{X}'_{\C}) < \frac{Var(\eta_
    {\mathbf{x}}|\mathbf{x} \in \mathcal{X}'_{\C})}{ q^{2 \epsilon +2} \mathbb{E}(\eta_{\mathbf{x}}|
    \mathbf{x} \in \mathcal{X}'_{\C})} \leq q^{-2\epsilon}\label{chebyshev-1}.
\end{align}
{\color{black}Our aim is to show that, for any $\mathbf{x} \in \mathcal{X}'_{\C}$,  --- under some conditions on $k^*$ and $\epsilon$ --- $\eta_{\mathbf{x}}$ will be, with high probability, bigger than $0$ which in turn implies that any $\mathbf{x} \in \mathcal{X}'_{\mathcal{C}}$ will, with high probability, be covered by $\mathcal{C}$.} 
To see this, we continue from \eqref{chebyshev-1}. We first see from~\eqref{Average-number-covering} that $\mathbb{E}(\eta_{\mathbf{x}}|\mathbf{x} \in \mathcal{X}'_{\C}) >0 $. Hence whenever we have $\eta_{\mathbf{x}} > \mathbb{E}(\eta_{\mathbf{x}}|\mathbf{x} \in \mathcal{X}'_{\C})$, we also have that $\eta_{\mathbf{x}} > 0$. Let us now focus on the remaining scenario where $\eta_{\mathbf{x}} \leq \mathbb{E}(\eta_{\mathbf{x}}|\mathbf{x} \in \mathcal{X}'_{\C})$. In this case, from~\eqref{chebyshev-1}, we have that
\begin{align}
\mathbb{P}\bigl( \eta_{\mathbf{x}} \geq \mathbb{E}(\eta_{\mathbf{x}}|\mathbf{x} \in \mathcal{X}'_{\C}) - q^{\epsilon+1} \sqrt{\mathbb{E}(\eta_{\mathbf{x}} \ | \ \mathbf{x} \in \mathcal{X}'_{\C})}\bigr |\mathbf{x} \in \mathcal{X}'_{\C})) \geq 1-q^{-2\epsilon}. \label{Reverse-Chebyshev}
\end{align}
We also have that 
\begin{align}
\mathbb{P}\bigl( \eta_{\mathbf{x}} > 0 \ | \ \mathbf{x} \in \mathcal{X}'_{\C}\bigr) \geq 
\mathbb{P}\bigl( \eta_{\mathbf{x}} \geq \mathbb{E}(\eta_{\mathbf{x}}|\mathbf{x} \in \mathcal{X}'_{\C}) - q^{\epsilon+1} \sqrt{\mathbb{E}(\eta_{\mathbf{x}} \ | \ \mathbf{x} \in \mathcal{X}'_{\C})} \ | \ \mathbf{x} \in \mathcal{X}'_{\C}\bigr)\label{Reverse-Chebyshev-2}
\end{align}
{\color{black}under the assumption that 
\begin{align} \label{betaCondition}
\beta(\epsilon)\triangleq  \mathbb{E}(\eta_{\mathbf{x}}|\mathbf{x} \in \mathcal{X}'_{\C}) - q^{\epsilon+1} \sqrt{\mathbb{E}(\eta_{\mathbf{x}}|\mathbf{x} \in \mathcal{X}'_{\C})} >0 .
\end{align}
This assumption will be guaranteed --- as we will see later on --- by a proper choice of $k^*$ and $\epsilon$.}

Now combining \eqref{Reverse-Chebyshev} with  \eqref{Reverse-Chebyshev-2}, we will show that 
\begin{align}
\mathbb{P}\bigl( \eta_{\mathbf{x}}  \geq \mathbb{E}(\eta_{\mathbf{x}}|\mathbf{x} \in \mathcal{X}'_{\C}) - q^{\epsilon+1} \sqrt{\mathbb{E}(\eta_{\mathbf{x}} \ | \ \mathbf{x} \in \mathcal{X}'_{\C})} \ | \ \mathbf{x} \in \mathcal{X}'_{\C}\bigr) \rightarrow 1
\end{align}
as $n$ grows to infinity. 

To guarantee that $\beta(\epsilon) >0$, we must guarantee that 
\begin{align}
    \mathbb{E}(\eta_{\mathbf{x}}| \mathbf{x} \in \X'_\C)>q^{{2\epsilon +2}}. \label{being-positive}
\end{align}
To do this, given Lemma~\ref{Average-eta-x}, we must prove that
\begin{align}
    |\{0\} \cap \mathcal{B}(\mathbf{x},\rho)|\times 1&+(q^{k^{*}}-1)[|(\mathcal{B}(0,\rho)\backslash{\{0\}}) \cap \mathcal{B}(\mathbf{x},\rho)| q^{-n}\\&+|\mathcal{B}(\mathbf{x},\rho) \backslash{ \mathcal{B}(0,\rho)}| q^{-n(1-\tau)} \zeta(n) ]>q^{2 \epsilon +2}
\end{align}
again for some properly chosen $k^{*}$ and $\epsilon$. The following applies toward this effort.
%%VERIFY CHANGE
\begin{Lemma} \label{lower-bound-lemma}
{\color{black} For any $\mathcal{C} \in \mathcal{C}_{k^{*},n}$}, any $\mathbf{x} \in \mathcal{X}'_{\C}$, and any $\rho \in (0,\min\{1-1/q,\frac{\sqrt{5}-1}{2}\}]$, then \begin{align}
    |\mathcal{B}(\mathbf{x},\rho) \backslash{ \mathcal{B}(0,\rho)}|> q^{n H_{q}(\rho)-o(n)}.\label{lower-bound-set-minus}
\end{align}
% where $0<\rho \leq \min\{1-1/q,\frac{\sqrt{5}-1}{2}\}$.
\end{Lemma}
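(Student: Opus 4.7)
The plan is to rewrite
\[
|\mathcal{B}(\mathbf{x},\rho)\setminus\mathcal{B}(\mathbf{0},\rho)| = V_q(n,\rho) - |\mathcal{B}(\mathbf{x},\rho)\cap \mathcal{B}(\mathbf{0},\rho)|,
\]
invoke the standard estimate $V_q(n,\rho)\geq q^{nH_q(\rho)-o(n)}$, and reduce the claim to showing that the intersection has exponential rate strictly smaller than $H_q(\rho)$ whenever $\omega(\mathbf{x})>\rho n$ and $\rho\leq\min\{1-1/q,(\sqrt{5}-1)/2\}$. If $\omega(\mathbf{x})>2\rho n$, the triangle inequality forces the intersection to be empty and the lemma is immediate, so I would focus on $w\triangleq\omega(\mathbf{x})\in(\rho n,2\rho n]$.

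For such $w$, I would parametrize each $\mathbf{y}$ in the intersection by the triple $(a,b,c)$, where $a$ counts coordinates inside $\mathrm{supp}(\mathbf{x})$ on which $\mathbf{y}$ agrees with $\mathbf{x}$, $b$ counts coordinates inside $\mathrm{supp}(\mathbf{x})$ on which $\mathbf{y}$ is nonzero but disagrees with $\mathbf{x}$, and $c$ counts coordinates outside $\mathrm{supp}(\mathbf{x})$ on which $\mathbf{y}$ is nonzero. The number of such vectors is $\binom{w}{a}\binom{w-a}{b}(q-2)^{b}\binom{n-w}{c}(q-1)^{c}$, and the two constraints $\omega(\mathbf{y})=a+b+c\leq\rho n$ and $d(\mathbf{x},\mathbf{y})=(w-a)+b+c\leq\rho n$ cut out the feasible polytope. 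Standard Stirling-type estimates then reduce the exponential rate of the intersection to the maximum, over this polytope, of the associated $q$-ary multinomial entropy functional.

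The crux is to show that this constrained maximum is strictly below $H_q(\rho)$. By Lagrangian analysis, the unconstrained maximum, which equals $H_q(\rho)$, is attained at the balanced point $(a+b)/w=c/(n-w)=\rho$; however, precisely when $w>\rho n$ the second distance constraint $w-a+b+c\leq\rho n$ becomes binding and creates a positive gap $\Delta(\rho,w/n)>0$ between the ball's rate and the intersection's rate. I expect the main obstacle to be producing a uniform lower bound on $\Delta$ as $w/n\to\rho^{+}$, since that is the worst case. The condition $\rho\leq(\sqrt{5}-1)/2$ (equivalent to $\rho^{2}+\rho\leq 1$) enters exactly here: it is the threshold that keeps the constrained maximizer strictly interior to the feasibility region and prevents $\Delta$ from collapsing, and it mirrors the threshold appearing in the analogous sphere-intersection estimates in Blinovskii's work~\cite{blinovskii1987lower}. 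Once $\Delta=\Omega(1)$ is secured, the intersection is at most $q^{n(H_q(\rho)-\Delta)+o(n)}$, which gives
\[
|\mathcal{B}(\mathbf{x},\rho)\setminus\mathcal{B}(\mathbf{0},\rho)| \geq q^{nH_q(\rho)-o(n)}\bigl(1 - q^{-n\Delta+o(n)}\bigr) > q^{nH_q(\rho)-o(n)},
\]
as required.
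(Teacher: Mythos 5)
Your opening decomposition, $|\mathcal{B}(\mathbf{x},\rho)\setminus\mathcal{B}(\mathbf{0},\rho)| = V_q(n,\rho)-|\mathcal{B}(\mathbf{x},\rho)\cap\mathcal{B}(\mathbf{0},\rho)|$, is also the paper's starting point (Appendix~\ref{M}), and the reduction to the worst case $\omega(\mathbf{x})$ close to $\rho n$ is correct since the intersection is maximal there. The genuine gap is in the crux: you want the intersection to have exponential rate strictly below $H_q(\rho)$ so that it is swallowed by the $o(n)$ slack, but this fails at the endpoint of the allowed range. Take $q=2$ and $\rho=1/2$, which the lemma permits since $\min\{1-1/q,\tfrac{\sqrt5-1}{2}\}=1/2$. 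With $\omega(\mathbf{x})=\rho n$, the constrained maximizer of your entropy functional sits at $a=b=c$ balanced so that the rate of the intersection equals $\rho+(1-\rho)H_q\bigl(\tfrac{\rho}{2(1-\rho)}\bigr)$, and at $\rho=1/2$ this is exactly $1=H(1/2)$; the gap $\Delta$ is zero, not $\Omega(1)$. So $1-q^{-n\Delta+o(n)}$ does not stay bounded away from zero and your final display does not follow. You anticipated the danger as $w/n\to\rho^{+}$, but the threshold $\rho\le\tfrac{\sqrt5-1}{2}$ does not rescue it: in the paper that bound arises for an unrelated reason (it is the feasibility condition $2\rho n - n\le n\rho(1-\rho)$ for the choice of summation index $j$ in the Stirling estimate of the $q>2$ case), not as a guarantee that $\Delta$ stays positive.

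The paper avoids this by never separating the two exponentially-large quantities. For $\omega(\mathbf{x})=\rho n$ it writes the difference exactly as a double sum of products of binomials (equation~\eqref{Difference}), and then, via the Vandermonde convolution $\binom{n}{n\rho}=\sum_i\binom{n\rho}{i}\binom{n-n\rho}{n\rho-i}$ rearranged as in~\eqref{the-little-one} together with Lemma~\ref{nonInequalityNessecarry}, shows the difference dominates $\binom{n}{n\rho}$ term by term, which is $\ge q^{nH_q(\rho)-o(n)}$ by Stirling. This works even when the intersection has the same exponential order as the ball, because it tracks the cancellation through exact identities rather than through a rate gap. If you want to salvage your approach, you would need a separate argument at (and near) the boundary $\rho=\min\{1-1/q,\tfrac{\sqrt5-1}{2}\}$, and the Vandermonde route is exactly the kind of identity that fills that hole.
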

\begin{proof}
The proof is in Appendix~\ref{M}.
\end{proof}
We now combine~\eqref{Average-number-covering} and~\eqref{lower-bound-set-minus} to get 
\begin{align}
   E(\eta_{\mathbf{x}}| \mathbf{x} \in \X'_\C) > (q^{k^{*}} - 1) q^{n H_q(\rho) - o(n) - n(1-\tau) } \zeta(n) \label{lower-bound}
\end{align}
and we also choose $k^{*},\epsilon,$ to guarantee (cf.~\eqref{being-positive}) that the inequality
\begin{align}
     (q^{k^{*}} - 1) q^{n H_q(\rho) - o(n) - n(1-\tau) } {\color{black}\zeta(n)} \geq q^{2 \epsilon + 2}\label{lower-bound-on-E}
\end{align}
holds for large $n$. Thus with \eqref{lower-bound} and \eqref{lower-bound-on-E} in place --- something that will indeed be validated by the end of the proof (cf.~\eqref{Validation-ends}) --- we can guarantee~\eqref{being-positive}. Thus we know that 
\begin{align}
    \mathbb{P}( \eta_{\mathbf{x}} < n^{\alpha} \big | \mathbf{x} \in \mathcal{X}'_{\C})
      < q^{-2 \epsilon}. \label{bound-for-n-alpha}
\end{align}

Following the approach in~\cite{blinovskii1987lower}, we consider points in $\mathcal{X}'_{\C}$ that are called `partial-remote points', which are the points that are $\rho n $-covered by fewer than $n^{\alpha},\alpha>1$ codewords. Now let $Q_{0}(\mathcal{X}'_{\C}) \subset \mathcal{X}'_{\C}$ be the set of partial remote points in $\mathcal{X}'_{\C}$, and let
 \begin{align}
     q_{0}(\mathcal{X}'_{\C}) \triangleq \frac{|\mathcal{Q}_{0}(\mathcal{X}'_{\C})|}{q^{n(1 - \tau)} - V_q(n,\rho)}. \label{def-1-bad-cases}
 \end{align}
Now applying~\eqref{bound-for-n-alpha}, gives
 \begin{align}
 \sum_{\mathbf{x} \in \mathcal{X}'_{\C}}      \mathbb{P}( \eta_{\mathbf{x}} < n^{\alpha} \big | \mathbf{x} \in \mathcal{X}'_{\C}) \leq (q^{n(1-\tau)} -V_q(n,\rho))q^{-2\epsilon}\label{upperbound-results}
 \end{align}
and thus we see that 
 \begin{align}
 \sum_{\mathbf{x} \in \mathcal{X}'_{\C}}      \mathbb{P}( \eta_{\mathbf{x}} < n^{\alpha} \big | \mathbf{x} \in \mathcal{X}'_{\C}) & \overset{(a)}{=}   \sum_{\mathbf{x} \in \mathcal{X}'_{\C}}      \mathbb{E} [\mathbbm{1}( \eta_{\mathbf{x}} < n^{\alpha} \big | \mathbf{x} \in \mathcal{X}'_{\C}) ]\\
 & \overset{(b)}{=}  \mathbb{E}[ \sum_{\mathbf{x} \in \mathcal{X}'_{\C}}      \mathbbm{1}( \eta_{\mathbf{x}} < n^{\alpha} \big | \mathbf{x} \in \mathcal{X}'_{\C}) ]\\
 &\overset{(c)}{=}\mathbb{E} [|\mathcal{Q}_0(\mathcal{X}'_{\C})|]\label{relationship-to-bad-cases}
 \end{align}
where now the expectation in (a) is over the codes in the ensemble $\mathcal{C}_{k^{*},n}$, where (b) results from interchanging the expectation with the summation, and where (c) is by definition of $\mathcal{Q}_{0}$.

 Now combining \eqref{def-1-bad-cases}, \eqref{upperbound-results} and \eqref{relationship-to-bad-cases}, we have
 \begin{align}
     \mathbb{E}(q_0)  \leq q^{-2 \epsilon}
 \end{align}
which bounds the average (over the code ensemble) number of partial-remote points in $\X'_{\C}$. 
Then Markov's inequality directly tells us that 
 \begin{align}
     \mathbb{P}(q_0 > q^{\epsilon} \mathbb{E}(q_{0})) < q^{- \epsilon} \label{good-codes}
 \end{align}
which means that the expression $q_0 \leq  q^{\epsilon} \mathbb{E}(q_{0})$ holds for a proportion greater than $1-q^{-\epsilon}$ of all codes.

%VERIFY
{\color{black} Now, in the footsteps of~\cite{cohen1997covering}, we apply a procedure that successively appends cosets to an initial code $\mathcal{C}' \in \mathcal{C}_{k^{*},n}$ that belongs in this above family of codes that indeed satisfies  $q_0 \leq  q^{\epsilon} \mathbb{E}(q_{0})$.} 
{\color{black} Let us quickly remember that the optimal successive appending linear greedy method resulting from Lemma~\ref{linear-greedy} and enclosed in Appendix~\ref{A_2}, allowed us to prove \eqref{bounding-union-begin}--\eqref{bounding-union} which yielded 
\begin{align}
     q(\mathcal{C}_{j+1}) \leq q(\mathcal{C}_{j})^2 \label{desending-2}
\end{align}
where $\mathcal{C}_{j+1} = <\mathcal{C}_{j}; \mathbf{x}>$, and where $q(\mathcal{C}_{j})$ represented the number of remote points of  the code $\mathcal{C}_j$, normalized by $q^{n(1 - \tau)} - V_q(n,\rho)$.

With the above in mind, let us now set this first initializing code $\mathcal{C}_{0}$ to be equal to $\mathcal{C}_{0} = \mathcal{C}'$, where $\mathcal{C}'$ is one of the aforementioned `good' codes that satisfy 
\begin{align}\label{good-codes-1}
q_0 \leq  q^{\epsilon} \mathbb{E}(q_{0}).
\end{align}
Then we will design $\mathcal{C}_{1} = <\mathcal{C}_{0}; \mathbf{x}>$ where $\mathbf{x}$ is a guaranteed-to-exist vector (cf.~\eqref{desending-2}) that increases the span of $\mathcal{C}_{0}$.

Now we calculate the same quantity we calculated in \eqref{def-1-bad-cases}, but we do so for $\mathcal{X}'_{\C_1}$. In other words, we calculate 
 \begin{align}
     q_{1}(\mathcal{X}'_{\C_1}) \triangleq \frac{|\mathcal{Q}_{0}(\mathcal{X}'_{\C_1})|}{q^{n(1 - \tau)} - V_q(n,\rho)} \label{def-1-bad-cases2}
 \end{align}
 where similar to before, now $q_1$ represents the average normalized remote points of the code $\mathcal{C}_1$ with respect to its associated $\mathcal{X}_{\C_1}$. We can now see that directly from~\eqref{desending-2}, we have that
\begin{align}
    \mathbb{E}(q_1) \leq q_0^2
\end{align}
where the above average is taken over all $\mathcal{C}_1$ codes, meaning over all codes that can take the role of our aforementioned $\mathcal{C}_1$, going over all possible initializing codes $\mathcal{C}_0 = \mathcal{C'}$, and over all possible base-expanding vectors $\mathbf{x}$. 
Now we apply again Markov's inequality, this time over the expanded codes $\mathcal{C}_1$, to get 
\begin{align}
    \mathbb{P}(q_1 > q^{\lambda} \mathbb{E}(q_1)) < q ^{-\lambda},\: \lambda \in \mathbb{R}
\end{align}
which tells us --- as before --- that the proportion of codes $\C_1$ that satisfy
\begin{align}
    q_1  \leq  q ^{\lambda- 2\epsilon} \label{good-codes-2}
\end{align}
is at least $1 - q ^{- \lambda}$. This proportion of codes that achieve~\eqref{good-codes-2}, is over all generated $\C_1$ that were built over all `good' $\C_0 = \C'$ that already satisfied \eqref{good-codes-1}. Thus we now know (cf.~\eqref{good-codes}) that the proportion of codes $\C_1$ --- among all codes in the entire ensemble $\C_{{k^{*}+1},n}$ --- that satisfy~\eqref{good-codes-2}, is equal to $(1-q^{- \epsilon})(1 - q^{-\lambda})$.

We now go from our step 1, to an arbitrary step $i$, and following the same logic as before, we conclude that the proportion of codes $\C_i$ --- where this proportion is among all codes in the entire ensemble $\C_{{k^{*}+i},n}$ --- that satisfy 
\begin{align}
    q_i < q^{-2^i(\epsilon - \lambda) - \lambda} \label{the-ultimate-pperbound-remote}
\end{align}
is equal to $(1 - q^{-\epsilon})(1 - q^{-\lambda})^{i}$.

Now let us go to some step $m$ which will allow us to terminate. We explain when will this termination happen.
Consider, for this step $i=m$, as before, the quantity
\begin{align}
     q_{m}(\mathcal{X}'_{\C_m}) \triangleq \frac{|\mathcal{Q}_{0}(\mathcal{X}'_{\C_m})|}{q^{n(1 - \tau)} - V_q(n,\rho)} \label{def-1-bad-casesm}
 \end{align}
 where similar to before, now $q_m$ represents the average normalized remote points of the code $\mathcal{C}_m$ with respect to its associated $\mathcal{X}_{\C_m}$.
 We want the corresponding  $\mathcal{Q}_{m}$ (cf.~\eqref{def-1-bad-cases}) to be empty. This will be guaranteed when $m$ is such that 
 \begin{align}
    q_m < q^{-n(1 - \tau)}\label{covering-condition}
\end{align}
where the above guarantee can be provided given that the cardinality of a set is a non-negative integer.  

{\color{black} This will be achieved by setting $m = \lceil \log_2(n(1-\tau))\rceil$ and $\lambda = \epsilon - 1$. This can be indeed verified by considering \eqref{the-ultimate-pperbound-remote} after setting $i=m, \lambda = \epsilon - 1$.} In conclusion, the proportion of `good' codes (among the entire ensemble) designed at this stage $m$, is no less than 
\begin{align}
    (1-q^{-\epsilon})(1- q^{- \epsilon +1})^{\lceil \log_2n(1 - \tau)\rceil}\label{good-proportion}
\end{align}
and for each such code $\C$, every point $\mathbf{x} \in \mathcal{X}_{\C}$ will be $\rho n$-covered by at least $n^{\alpha}$ codewords of that same code.  

With the above in place, let us return to \eqref{betaCondition} where we wish to guarantee that $\beta(\epsilon) >0 $. Toward this, let us consider~\eqref{good-proportion} and in this equation, let us set $\epsilon = 2 \log_q \log_2 (n(1 - \tau))$. 

Let us now prove that there exists an $\alpha>1$ such that 
\begin{align} \label{poly-substitution2}
    \mathbb{E}(\eta_{\mathbf{x}}| \mathbf{x} \in \X'_\C) \geq( n (1-\tau) )^{\alpha} =  (q^{k^{*}}-1) q^{n H_q(\rho) - o(n)} q^{-(n-n\tau)}.
\end{align}
This can be readily shown (cf.~\eqref{lower-bound}) by noting that, for large $n$, then the expression
\begin{align}
    \exists\:\alpha>1 : (n-n\tau)^{\alpha} =  (q^{k^{*}}-1) q^{n H_q(\rho) - o(n)} q^{-(n-n\tau)}\label{poly-substitution}
\end{align}
holds. With~\eqref{poly-substitution2} in place, we take the logarithm on both sides of the above, and after dividing by $n$, we get
\begin{align}
    \frac{k^{*}}{n} = 1 - \tau-H_q(\rho) + \frac{\alpha \log_q(n-n\tau) + o(n)}{n} \label{result-1}.
\end{align}
Let us now recall that we had conditionally accepted~\eqref{lower-bound-on-E}, by saying that~\eqref{lower-bound-on-E} holds for some properly chosen $\epsilon$ and $k^{*}$. We will use the aforementioned $\epsilon = 2 \log_q \log_2 (n(1 - \tau))$ and the $k^{*}$ from \eqref{result-1}. Let us apply these values in the LHS of~\eqref{lower-bound-on-E}, and note, after employing \eqref{poly-substitution}, that for these values in place, it holds that 
\[     (q^{k^{*}} - 1) q^{n H_q(\rho) - o(n) - n(1-\tau) } {\color{black}\zeta(n)} = (n - n \tau)^{\alpha} \zeta(n).
\]
Let us now note that for sufficiently large $n$ then
\begin{align}
    (n - n \tau)^{\alpha} \zeta(n)& \geq  q^2 \log_2(n(1-\tau))^4\label{Validation-ends}
\end{align}
simply because the RHS is logarithmic in $n$. At the same time though, we also note that 
$ q^2 \log_2(n(1-\tau))^4 = q^{2(2\log_q(\log_2(n(1-\tau)))) + 2}$ and then, by applying the chosen $\epsilon$, we get that 
$ q^2 \log_2(n(1-\tau))^4 = q^{2 \epsilon +2}$. Thus we now know that $(n - n \tau)^{\alpha} \zeta(n) \geq q^{2 \epsilon +2}$ which, after applying \eqref{Validation-ends}, gives that
\[     (q^{k^{*}} - 1) q^{n H_q(\rho) - o(n) - n(1-\tau) } {\color{black}\zeta(n)} \geq q^{2 \epsilon + 2}%\label{lower-bound-on-E} 
\]
which is exactly~\eqref{lower-bound-on-E}. Thus~\eqref{lower-bound-on-E} is validated, and consequently, directly, we can also guarantee~\eqref{being-positive}, which in turn guarantees $\beta(\epsilon) \geq 0$, which in turn proves that~\eqref{Reverse-Chebyshev-2} indeed holds.

Following the logic immediately before~\eqref{Reverse-Chebyshev-2}, and with~\eqref{Reverse-Chebyshev-2} now in place, we can conclude that for any $\mathbf{x} \in \mathcal{X}'_{\C}$,  --- given our chosen $k^*$ and $\epsilon$ --- $\eta_{\mathbf{x}}$ will be, with high probability, bigger than $0$ which in turn implies that any $\mathbf{x} \in \mathcal{X}'_{\mathcal{C}}$ will, with high probability, be covered by $\mathcal{C}$. As it can be seen, the proportion  of partial-covering codes \eqref{good-proportion} approaches 1, as $n$ increases.

The only thing that remains now to be verified is the rate $k_n/n$ of these codes, which was declared in the theorem to be as in~\eqref{partial-covering-almost}. 
To verify that indeed this is our rate, we recall that we have started with a code from $\C_{k^{*},n}$ and that we have performed the appending procedure $m$ times, where we chose $m = \log_2(n-n\tau)$. This means that the current message length becomes 
\begin{align*}
    {k_n} = {k^{*} +\log_2(n(1-\tau))}.
\end{align*}
Now directly adding $m/n$ on both sides of the equation in~\eqref{result-1}, we have that 
\begin{align}
    \frac{k_n}{n}=\frac{k^{*} +\log_2(n-n\tau) }{n} = 1 - \tau -H(\rho) + \frac{(\alpha +1)\log_2(n-n\tau) + o(n)}{n}\label{result-2}
\end{align}
which simply says that $ \frac{k_n}{n} \leq 1 - \tau - H_q(\rho) + O(n^{-1} \log_q(n)) 
$ as in~\eqref{partial-covering-almost}. This concludes the proof of the theorem. \qed

}

\section{Proof of Proposition \ref{Better-Achievability-Sparse}}\label{F}
Referring to the proof of Theorem~\ref{Sparse-Theorem} in Appendix~\ref{E_1}, let us suppose that $L \leq m<q^{K}$ and that $\mathcal{X} \supseteq \mathcal{X}_{\mathbf{F}, \mathbf{D}}$. From \eqref{C_D_n_covered_set} we see that
\begin{align}
    |\mathcal{X}|&\overset{(a)}{=}\Pi^{m_n}_{i=1}|\mathcal{X}_i|\\
    &\overset{(b)}{\geq}  q^{n m_n(1-\tau)}\\
      &\overset{(c)}{=} q^{N(1-\tau)}\label{bounding-cardinality}
\end{align}
where (a) comes from the definition of $\mathcal{X}$ (cf. \eqref{C_D_n_covered_set}), where (b) holds due to \eqref{Lower_Bound}, and where (c) holds since $N= n m_n$. Then from~\eqref{simple-result} and~\eqref{bounding-cardinality}, we can conclude that
\begin{align}
    \rho &= H_q^{-1}(\frac{K}{N} - \tau + \epsilon(N))\\
    &\leq H_q^{-1}(\frac{K}{N} - (1 -\frac{\log_q(|\mathcal{X}|)}{N}) + \epsilon(N)).
\end{align}
Setting $m=L$ gives $\rho =H^{-1}_q( \frac{\log_q(L)}{N} + \epsilon(N))$ (cf.~\eqref{Sparse-Computational-Cost}). The communication cost is as described in~\eqref{Sparse-Communication-Cost}. \qed

\section{Various Proofs}{}

\subsection{Proof of Lemma \ref{uniformity}}\label{H}
{\color{black}  
For a fixed index $i \neq 0$, there is a fixed information vector $\mathbf{d}_i \in \F^{k^{*}}\backslash {\mathbf{0}}$ that generates --- as we move across the generator matrices $\mathbf{G}$ in the ensemble of codes $\mathcal{C}_{k,n}$ --- the codewords $\mathbf{c}_i$ that take the form
\begin{align}
\mathbf{c}_i= \mathbf{d}_i \mathbf{G} = \sum^{n}_{j=1} d_i(j,1) \mathbf{G}(j,:).  
 \end{align}
Given that the elements of $\mathbf{G} \in \F^{k^{*} \times n}$ are chosen uniformly and independently at random from $\F$, directly implies that the same holds for the elements of $\mathbf{c}_i$, since in the above linear combination, the elements of $\mathbf{d}_i$ are fixed, and naturally because the operations are over a \emph{finite} field. 
 
\qed
}

\subsection{Proof of Lemma \ref{Covered-set-point}}{}\label{D_1}
We can first see that whenever $\omega(\x) \leq \rho n $, then~\eqref{Claim-of-Lemma} automatically holds simply because such $\mathbf{x}$ must belong in $\mathcal{B}(0,\rho)$ which in turn is a subset of $\mathcal{X}_{\C}$.

{\color{black}
Let us now consider the case of $\omega(\x) > \rho n, \x \in \F^{n}$. Let us also recall the element selection process\footnote{We recall that the procedure for designing $\mathcal{X}_{\C}$, simply starts by taking the union $\mathcal{C} \cup \mathcal{B}(\mathbf{0},\rho)$, and then appending on this union, a sufficiently large number of vectors, chosen uniformly and independently at random from $\F^{n} \backslash{{ (\mathcal{C}  \cup \mathcal{B}(0,\rho))} }$.} that was described right underneath equation~\eqref{Sufficient-Condition-For-Choice}. Let $\mathcal{S} \subset \F^n$ be the set of all vectors $\x$ that are not codewords but are selected randomly in the aforementioned process. At this point, we can see that 
\begin{align}
    \mathbb{P}(\x \in \mathcal{X}_{\C}) = \mathbb{P(\x \in \mathcal{C})} + \mathbb{P}{(\x \notin {\C})}\mathbb{P}(\x\in \mathcal{S} \ | \ \mathbf{x} \notin \mathcal{C})
\end{align}
where $\C$ is the code that has been chosen uniformly at random from $\C_{k^{*},n}$. Consider a vector $\mathbf{y} \in \F^{n}$ with $\omega(\mathbf{y}) \geq \rho n$. We clearly see that $\mathbb{P}{(\mathbf{x} \in {\C})} = \mathbb{P}{(\mathbf{y} \in {\C})}$, and we also see that 
$\mathbb{P}(\x\in \mathcal{S} \ | \ \mathbf{x} \notin \mathcal{C}) =\mathbb{P}(\y\in \mathcal{S} \ | \ \mathbf{y} \notin \mathcal{C})$ as a direct outcome of the aforementioned vector selection process, and of the fact that $\omega(\x)> \rho n, \omega(\y) > \rho n $, which yields
\begin{align}
 \mathbb{P}{(\x \in \mathcal{X}_{\C})} = \mathbb{P}{(\y \in \mathcal{X}_{\C})}. \label{equiprobable}
\end{align}
}
Now let us note that 
\begin{align}
   \sum_{\mathbf{y} \in \F^{n} \backslash{\mathcal{B}(0,\rho)}}
   \mathbb{P}(\mathbf{y} \in \mathcal{X}_{\C}) & =
   \sum_{\mathbf{y} \in \F^{n} \backslash{\mathcal{B}(0,\rho)}}\mathbb{E}[{\mathbbm{1}(\mathbf{y} \in \mathcal{X}_{\C})}] \label{results1}\\
 &\overset{(a)}{=}  \mathbb{E}[{\sum_{\mathbf{y} \in \F^{n} \backslash{\mathcal{B}(0,\rho)}}\mathbbm{1}(\mathbf{y} \in \mathcal{X}_{\C})}] \\
   &\overset{(b)}{=} \mathbb{E}[q^{n(1-\tau)} - V_q(n,\rho)]
   =q^{n(1-\tau)} - V_q(n,\rho)\label{summation-on-probabilities}
\end{align}
where the average is over the codes in the ensemble $\mathcal{C}_{k^{*},n}$ and over the randomness in constructing $\mathcal{X}_{\C}$ once $\C \in \mathcal{C}_{k^{*},n}$ is picked. In the above, (a) follows by interchanging summation and expectation, and (b) holds since for every occurrence of $\C \in \C_{k^{*},n}$, there exist $q^{n(1-\tau)} - V_q(n,\rho)$ elements of $\F \backslash{\mathcal{B}(0,\rho)}$ that are in $\X_\C$. Finally, \eqref{equiprobable} and~\eqref{summation-on-probabilities} jointly imply that 
\begin{align}
(q^{n} - V_q(n,\rho)) \mathbb{P}(\mathbf{x} \in \mathcal{X}_{\C})=\sum_{\mathbf{y} \in \F^{n} \backslash{\mathcal{B}(0,\rho)}}
   \mathbb{P}(\mathbf{y} \in \mathcal{X}_{\C})=q^{n(1-\tau)} - V_q(n,\rho)\label{results2}
   \end{align}
which completes the proof.
\qed

\subsection{Proof of Lemma \ref{Point-Wise-Conditional_Probability}}{\label{D_2}}
For any $i\neq 1$ and $\x\in \F^n, \x\neq \mathbf{0}$, then
\begin{align}
  \mathbb{P}[\mathbf{c}_i &= \mathbf{x} | \mathbf{x} \in \mathcal{X}_{\C}]   \mathbb{P}[\mathbf{x} \in \mathcal{X}_{\C}] \\&\overset{(a)}{=} \mathbb{P}[[\mathbf{c}_i = \mathbf{x}]  \cap [\mathbf{x} \in \mathcal{X}_{\C}]]\\& \overset{(b)}{=}\mathbb{P}[[\mathbf{c}_i = \mathbf{x}] \:\cap\: [\mathbf{c}_i \in \mathcal{X}_{\C}]] 
%   \\& \overset{(c)}{=}\mathbb{P}[[\mathbf{c}_i = \mathbf{x}] \:\cap \:[True]] 
  \\& \overset{(c)}{=}\mathbb{P}[\mathbf{c}_i = \mathbf{x}] \\&\overset{(d)}{=}q^{-n}
\end{align}
where (a) is directly from the definition of conditional probability \cite{kolmogoroff1956foundations}, (b) is true since the LHS requirement that $\mathbf{x} = \mathbf{c}_i$ is maintained in the RHS, (c) is true since $\mathcal{C} \subset \mathcal{X}_{\C}$, and (d) is from~\eqref{claim-1} in Lemma~\ref{uniformity}. 

Thus for $i \neq 1$ and $\mathbf{x}\neq  0$, we have that
\[\mathbb{P}(\mathbf{c}_i = \mathbf{x}|\mathbf{x} \in \mathcal{X}_{\C}) = q^{-n(1-\tau)}\frac{1 - q^{-n}V(\rho,n)}{1 -  q^{-(n-n\tau)}V(\rho,n)} = q^{-(n-n\tau)} {\color{black}\zeta(n)}, \ i \in [2: K^{*}],  \ \omega(\mathbf{x}) > \rho n\]
and the proof is concluded by noting that in the limit of large $n$, the expression \[\zeta(n)\triangleq \frac{1 - q^{-n}V(\rho,n)}{1 -  q^{-(n-n\tau)}V(\rho,n)}\] converges to $1$.
\qed

\subsection{Proof of Lemma \ref{Average-eta-x}}{\label{D_3}}
From the definition in~\eqref{Definition-eta-x}, let us recall that
 \begin{align}
     \eta_{\mathbf{x}} \triangleq \sum^{q^{k^{*}}}_{i=1} \eta_{\mathbf{x},i}
 \end{align}
 describes the number of codewords that cover $\mathbf{x} \in \mathcal{X}'_{\C}$.
 Consider a Hamming ball of radius $\rho$ centered around $\mathbf{x} \in \mathcal{X}'_{\C}$.
 \begin{itemize}
 \item Considering \eqref{probability-1}, we know that if $|\{0\} \cap \mathcal{B}(\mathbf{x},\rho)|=1$, then with probability one, $\mathbf{0}$ covers $\mathbf{x}$. Hence now the assumption that $\mathbf{x} \in \mathcal{X}'_{\C}$, contradicts the above, and thus we can conclude that $|\{0\} \cap \mathcal{B}(\mathbf{x},\rho)|=0$.
 \item Now consider some vector $\mathbf{x}'$ in $(\mathcal{B}(0,\rho)\backslash{\{0\}}) \cap \mathcal{B}(\mathbf{x},\rho)$, i.e., some vector that covers our aforementioned $\mathbf{x} \in \mathcal{X}'_{\C}$. We are interested in the probability $ \mathbb{P}(\mathbf{c}_i = \mathbf{x}'|\mathbf{x}' \in \mathcal{X}'_{\C})$ which is the probability that $\mathbf{x}'$ is equal to $\mathbf{c}_i$, for our fixed $i, i \neq 1$. From~\eqref{probability-1}, we know that this probability is equal to $q^{-n}$. Now going over all $q^{k^{*}}-1$ codewords of interest, we can conclude that our current case of interest, contributes to the sum $ \eta_{\mathbf{x}}$, by an amount equal to  $(q^{k^{*}}-1)[|(\mathcal{B}(0,\rho)\backslash{\{0\}}) \cap \mathcal{B}(\mathbf{x},\rho)| q^{-n}$.
 \item Now let us consider the dominant case where $\mathbf{x}'$ in $\mathcal{B}(\mathbf{x},\rho) \backslash{ \mathcal{B}(0,\rho)}$. In this case, directly from~\eqref{probability-1}, the aforementioned probability $ \mathbb{P}(\mathbf{c}_i = \mathbf{x}'|\mathbf{x}' \in \mathcal{X}'_{\C})$ takes the form $q^{-n(1-\tau)}{ \zeta(n)}$, and thus --- similarly to above --- yields a contribution to the sum $ \eta_{\mathbf{x}}$ by an amount equal to \[(q^{k^{*}}-1)[| \mathcal{B}(\mathbf{x},\rho) \backslash{ \mathcal{B}(0,\rho)}  | q^{-n(1-\tau)}{ \zeta(n)}. \] \qed
\end{itemize}

\subsection{Proof of Lemma \ref{Bounding_l_lemma}}{\label{D_5}}
We prove the lemma in two steps.

In the first step, after defining $\overline{\eta}_{} \triangleq \mathbb{E}(\eta_{\mathbf{x},i}|\mathbf{x} \in \mathcal{X}'_{\C}) $ and 
 $ \overline{\eta^{(2)}} \triangleq \mathbb{E}(\eta_{\mathbf{x},i} \eta_{\mathbf{x},j}|\mathbf{x} \in \mathcal{X}'_{\C})  $ for any $i,j\in [q^{k^{*}}]$, we can see that 
\begin{align}
        Var(\eta_{\mathbf{x}} |\mathbf{x} \in \mathcal{X}_{\C}) \leq (q^{k^*} -1)(q-1) \overline{\eta} (1 - \frac{q-2}{q-1} \overline{\eta_{}}) \label{Bounding-Variance-1}
    \end{align}
    since
    \begin{align}
 Var(\eta_{\mathbf{x}} |\mathbf{x} \in \mathcal{X}_{\C})  &= \mathbb{E}((\sum^{q^{k^{*}}}_{i=1} \eta_{\mathbf{x},i})^{2}|\mathbf{x} \in \mathcal{X}'_{\C} ) - \mathbb{E}^2(\sum^{q^{k^{*}}}_{i=1} \eta_{\mathbf{x},i}|\mathbf{x} \in \mathcal{X}'_{\C}) \\
&= \mathbb{E}( \sum^{q^{k^{*}}}_{i=1} \eta^2_{\mathbf{x},i} + \sum^{q^{k^{*}}}_{p.i=1,i \neq p} \eta_{\mathbf{x},i} \eta_{\mathbf{x},p}|\x \in \mathcal{X}'_{\C}) - \mathbb{E}^2(\sum^{q^{k^{*}}}_{i=1} \eta_{\x,i}|\x \in \mathcal{X}'_{\C}) \\
&= \sum^{q^{k^{*}}}_{i=1} \mathbb{E}(\eta_{\x,i} | \x \in \mathcal{X}'_{\C}) + \sum^{q^{k^{*}}}_{p,i=1, i\neq p} \mathbb{E}(\eta_{\x,i} \eta_{\x,p}| \x \in \X'_{\C}) - \mathbb{E}^2(\sum^{q^{k^{*}}}_{i=1} \eta_{\x,i}|\x \in \X'_{\C})\\ & =
q^{k^{*}} \overline{\eta} + q^{k^{*}} (q^{k^{*}} -1) \overline{\eta^{(2)}} - q^{2k^{*}} \overline{\eta}.
    \end{align}
Combining now the above with~\eqref{Bounding-Variance-1}, allows us to modify the {main claim} as 
\begin{align}
    q^{k^{*}} \overline{\eta} + q^{k^{*}} (q^{k^{*}} -1) \overline{\eta^{(2)}} - q^{2k^{*}} \overline{\eta} - (q^{k^*} -1)(q-1) \overline{\eta} (1 - \frac{q-2}{q-1} \overline{\eta_{}}) \leq 0 \label{Refocus}
\end{align}
so we now need to prove \eqref{Refocus}. To prove this, we focus on the LHS and show that 
\begin{align}
      &q^{k^{*}} \overline{\eta} + q^{k^{*}} (q^{k^{*}} -1) \overline{\eta^{(2)}} - q^{2k^{*}} \overline{\eta} - (q^{k^*} -1)(q-1) \overline{\eta} (1 - \frac{q-2}{q-1} \overline{\eta_{}})
      \\
      &\overset{(a)}{\leq }
       q^{k^{*}} \overline{\eta} + q^{k^{*}} (q^{k^{*}} -1) \overline{\eta}^{2} - q^{2k^{*}} \overline{\eta} - (q^{k^*} -1)(q-1) \overline{\eta} (1 - \frac{q-2}{q-1})
       \overline{\eta_{}}
       \\ &\overset{(b)}{\leq }
       -(q^{k^{*}+1} - 2 q^{k^{*}} -q +1) \overline{\eta} + (q^{k^{*}+1} -3 q^{k^{*}} - q +2) \overline{\eta}^2
       \\
        &\overset{(c)}{\leq } 0
\end{align}
where (a) holds because $\overline{\eta^{(2)}} \leq \overline{\eta}^{2}$, (b) follows after simple rearranging of terms, and (c) holds because $ 0 \leq\overline{\eta} \leq 1$ and because $q^{k^{*}+1} - 2 q^{k^{*}} -q +1 > q^{k^{*}+1} -3 q^{k^{*}} - q +2$ for any $q^{k^{*}}\geq 1$.

In the second step, we will prove that 
    \begin{align}
         \frac{(q^{k^*} -1)(q-1) \overline{\eta} (1 - \frac{q-2}{q-1} \overline{\eta_{}})}{\mathbb{E}(\eta_{\x}|\x \in \mathcal{X}'_{\C}) q^2} \leq 1.\label{bounding-variance-2}
    \end{align}
To see this, we note that 
\begin{align}
    \frac{(q^{k^*} -1)(q-1) \overline{\eta} (1 - \frac{q-2}{q-1} \overline{\eta_{}})}{\mathbb{E}(\eta_{\x}|\x \in \mathcal{X}'_{\C}) q^2} &\overset{(a)}{=}
    \frac{(q^{k^*} -1)(q-1)  (1 - \frac{q-2}{q-1} \overline{\eta}) }{q^{k^{*}}  q^2}\\
    &\overset{(b)}{=} (\frac{q^{k^{*}} -1}{q^{k^{*}}})  (\frac{q-1}{q^2})  (1 -\frac{q-2}{q-1} \overline{\eta})
   \\ & \overset{(c)}{\leq} 1
\end{align}
where (a) holds because $\mathbb{E}(\eta_{\x}) = \sum^{q^{k^{*}}}_{i=1} \mathbb{E}(\eta_{\x,i}) = q^{k^{*}} \overline{\eta}$, where (b) holds by rearranging terms, and where (c) holds because each multiplicative element in the RHS of (b) is non-negative and less than $1$.

Now combining the two steps by bringing together~\eqref{Bounding-Variance-1} with~\eqref{bounding-variance-2}, yields the desired~\eqref{bounding-variance}.
\qed

\section{Proof of Lemma \ref{lower-bound-lemma}}{\label{M}}
Let us define
\begin{align} \label{definitionI}
  \mathcal{I}_{(\omega(\mathbf{x}),\rho)}\triangleq |\mathcal{B}(\mathbf{x},\rho) \cap{ \mathcal{B}(0,\rho)}|
\end{align}
and let us note that
\begin{align}
    \rho n =\underset{\omega(\mathbf{x}): \mathbf{x} \in \mathcal{X}_{\C} \backslash{\mathcal{B}(0,\rho)}}{\arg\max} |\mathcal{I}_{(\omega(\mathbf{x}),\rho)}|
\end{align}
 since the distance between $\mathbf{x} \in \mathcal{X}_{\C} \backslash{\mathcal{B}(0,\rho)}$ and $\mathbf{0}$ is minimized when $\omega(\mathbf{x}) = \rho n$. We also know that
 \begin{align}
     |\mathcal{B}(\mathbf{x},\rho) \backslash{ \mathcal{B}(0,\rho)}| &= Vol(n,\rho) - |\mathcal{I}(\rho,\rho)|. \label{eq-diference}
 \end{align}
Let us focus on the case where $q=2$ and $ 0\leq \rho \leq \frac{1}{2}$, where from~\cite{danoyan2013some} we have that
 \begin{align}
|\mathcal{I}(\rho,\rho)| = \sum^{\lfloor\frac{n\rho}{2} \rfloor}_{i=0} \sum^{i}_{j=0}
{n \rho \choose i}{n-n\rho \choose j}  + \sum^{n\rho}_{i = \lfloor \frac{n \rho}{2}\rfloor +1} \sum^{n \rho-i}_{j=0} {n \rho \choose i}{ n-n \rho \choose j}.\label{eq-intersection}
 \end{align}
We know that $n\rho \leq n - n \rho$ and that
\begin{align}
    V_q(n,\rho) = \sum^{ \rho n  }_{i=0} {n \choose i}  &=\sum^{\lfloor \frac{n \rho}{2} \rfloor}_{i=0} \sum^{n \rho-i}_{j=0} { n \rho \choose i}{ n-n \rho \choose j} \\&+ \sum^{n \rho}_{i=\lfloor \frac{n\rho}{2} \rfloor +1} \sum^{n \rho-i}_{j=0} {n \rho \choose i} {n-n \rho \choose j}\label{eq-volume}
\end{align}
and thus after substituting \eqref{eq-intersection} and \eqref{eq-volume} into \eqref{eq-diference}, we conclude that
 \begin{align}
        |\mathcal{B}(\mathbf{x},\rho) \backslash{ \mathcal{B}(0,\rho)}| = \sum^{\lfloor\frac{n \rho}{2} \rfloor}_{i=0} \sum^{n \rho-i}_{j=i+1}{ n \rho \choose i}{ n-n \rho \choose j}. \label{Difference}
 \end{align}
Now considering that $0<\rho\leq \frac{1}{2}$, we have
\begin{align}
    {n \choose n \rho} &=  \sum^{\lfloor \frac{n\rho}{2} \rfloor}_{i=0} { n \rho \choose i}{ n- n \rho  \choose n\rho -i} +\sum^{ {n \rho}}_{i= \lfloor \frac{n \rho}{2} \rfloor +1} { n\rho \choose i}{ n -  n\rho \choose n\rho -i} \\&=
    \sum^{\lfloor \frac{n\rho}{2} \rfloor}_{i=0} { n \rho \choose i}{ n- n \rho  \choose n\rho -i} +\sum^{ {n \rho}}_{i= \lfloor \frac{n \rho}{2} \rfloor +1} { n\rho \choose n\rho -i}{ n -  n\rho \choose n\rho -i} \\ &=
    \sum^{\lfloor \frac{n\rho}{2} \rfloor}_{i=0} { n \rho \choose i}{ n- n \rho  \choose n\rho -i} +\sum^{  n\rho  - \lfloor \frac{n \rho}{2}\rfloor  -1}_{i= 0} { n\rho \choose i}{ n -  n\rho \choose i}.
    \label{the-little-one}
\end{align}
Now let us note that
\begin{align}
     |\mathcal{B}(\mathbf{x},\rho) \backslash{ \mathcal{B}(0,\rho)}| &\overset{(a)}{=} \sum^{\lfloor\frac{n \rho}{2} \rfloor}_{i=0} \sum^{n \rho-i}_{j=i+1}{ n \rho \choose i}{ n-n \rho \choose j}\label{f-1}\\
     &\overset{(b)}{=}\sum^{\lfloor\frac{n \rho}{2} \rfloor}_{i=0} \sum^{n \rho-i-1}_{j=i+2}  { n \rho \choose i}{ n-n \rho \choose j}\label{f-2}\\&+
     \sum^{\lfloor\frac{n \rho}{2} \rfloor}_{i=0}
     { n \rho \choose i}{ n-n \rho \choose n\rho -i} (q-1)^{n \rho} \label{f-3}\\&+   \sum^{\lfloor\frac{n \rho}{2} \rfloor  -\mathbbm{1}[{\lfloor \frac{n\rho}{2}\rfloor =\frac{n\rho}{2} ]}}_{i=0}
     { n \rho \choose i}{ n-n \rho \choose i+1}\label{f-4}\\&\overset{(c)}{\geq}
      \sum^{\lfloor \frac{n\rho}{2} \rfloor}_{i=0} { n \rho \choose i}{ n- n \rho  \choose n\rho -i} +\sum^{  n\rho  - \lfloor \frac{n \rho}{2}\rfloor  -1}_{i= 0} { n\rho \choose i}{ n -  n\rho \choose i}\label{f-5}\\&\overset{(d)}{=} {n \choose n \rho} \overset{(e)}{\geq} 2^{n H(\rho) - o(n)}
\end{align}
where (a) holds from~\eqref{Difference}, (b) follows after expanding the inner summation, (c) holds because the first summation of the RHS in (b) is non-negative, the second summation is present on the RHS of (c) after considering that $q=2$, and because the third summation of the RHS in (b) is present on the RHS of (c) after considering Lemma~\ref{nonInequalityNessecarry} found in Appendix~\ref{Op}. Furthermore (d) follows from~\eqref{the-little-one}, and (e) follows from the Stirling inequality that applies because $0<\rho \leq \frac{1}{2}$. The proof is concluded for the binary case of $q=2$.

Let us now consider the more involved case of $q>2$. For this we will need the following lemma, whose proof is found in Appendix~\ref{Opp}. \begin{Lemma}\label{Forq-intersection}
For $\mathcal{I}_{(\omega(\mathbf{x}),\rho)}\triangleq |\mathcal{B}(\mathbf{x},\rho) \cap{ \mathcal{B}(0,\rho)}|$ (cf.~\eqref{definitionI}), then 
\begin{align}
    |\mathcal{I}(\rho,\rho)| = \sum_{i+j = \rho n,\; i \leq j}{ {n-\rho n} \choose i}{ \rho n \choose j}(q-1)^{\rho n} + \theta(q)\label{intersection-argument}
\end{align}
{\color{black}where we can guarantee that $\theta(q) \leq V_q(n,\rho-1/n)$.}
\end{Lemma}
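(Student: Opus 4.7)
The plan is to split the intersection $\mathcal{I}(\rho,\rho)=\mathcal{B}(\mathbf{x},\rho)\cap\mathcal{B}(\mathbf{0},\rho)$ according to whether a vector $\mathbf{y}$ in it has Hamming weight exactly $\rho n$ or strictly less. Since $|\mathcal{I}(\rho,\rho)|$ depends only on $\omega(\mathbf{x})$, I will fix any representative $\mathbf{x}$ with $\omega(\mathbf{x})=\rho n$ and let $S=\sup(\mathbf{x}^{\intercal})$, so that $|S|=\rho n$ and $|S^{c}|=n-\rho n$. Writing $|\mathcal{I}(\rho,\rho)|=A+B$ with
\begin{align*}
A\triangleq|\{\mathbf{y}\in\mathcal{I}(\rho,\rho):\omega(\mathbf{y})=\rho n\}|,\qquad B\triangleq|\{\mathbf{y}\in\mathcal{I}(\rho,\rho):\omega(\mathbf{y})<\rho n\}|,
\end{align*}
the term $B$ is immediately controlled: every such $\mathbf{y}$ satisfies $\omega(\mathbf{y})\leq\rho n-1$ and hence lies in the ball $\mathcal{B}(\mathbf{0},\rho-1/n)$, giving $B\leq V_q(n,\rho-1/n)$.

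For $A$, I will use a three-parameter decomposition of each weight-$\rho n$ vector $\mathbf{y}$: let $i$ be the number of positions in $S^{c}$ at which $\mathbf{y}$ is nonzero, $j$ the number of positions in $S$ at which $\mathbf{y}$ is nonzero, and $c$ the number of positions in $S$ at which $\mathbf{y}$ and $\mathbf{x}$ are both nonzero but take different values. A direct tally yields $\omega(\mathbf{y})=i+j=\rho n$ and $d(\mathbf{y},\mathbf{x})=i+(\rho n-j)+c$, so the constraint $d(\mathbf{y},\mathbf{x})\leq\rho n$ is equivalent to $0\leq c\leq j-i$ and in particular forces $i\leq j$. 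Counting positions and nonzero values then gives
\begin{align*}
A=\sum_{\substack{i+j=\rho n\\ i\leq j}}\binom{n-\rho n}{i}(q-1)^{i}\binom{\rho n}{j}\sum_{c=0}^{j-i}\binom{j}{c}(q-2)^{c}.
\end{align*}

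Finally, I will relax the inner constraint $c\leq j-i$ to $c\leq j$ and apply the binomial identity $\sum_{c=0}^{j}\binom{j}{c}(q-2)^{c}=(q-1)^{j}$ to obtain
\begin{align*}
A\leq A'\triangleq\sum_{\substack{i+j=\rho n\\ i\leq j}}\binom{n-\rho n}{i}\binom{\rho n}{j}(q-1)^{\rho n},
\end{align*}
which is exactly the main term of the lemma. Setting $\theta(q)\triangleq|\mathcal{I}(\rho,\rho)|-A'=B-(A'-A)$ and using $A'-A\geq 0$ together with the bound on $B$ yields the claimed $\theta(q)\leq V_q(n,\rho-1/n)$. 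The only non-routine step is the three-index bookkeeping for $A$ -- in particular verifying the identity $d(\mathbf{y},\mathbf{x})=i+(\rho n-j)+c$ and that the admissible range of $c$ is exactly $[0,j-i]$ -- while the remainder reduces to a single binomial identity and a trivial weight-ball estimate.
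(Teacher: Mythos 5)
Your proof is correct, and it follows the same overall plan as the paper's: fix a representative $\mathbf{x}$ of weight $\rho n$, count vectors in $\mathcal{I}(\rho,\rho)$ of weight exactly $\rho n$ to produce the main term, and charge the strictly lower-weight vectors to the error term $\theta(q)\leq V_q(n,\rho-1/n)$. The difference is that the paper's counting uses only two parameters, namely $x$ (nonzero positions of $\mathbf{b}$ outside $\sup(\mathbf{x})$) and $y$ (positions inside $\sup(\mathbf{x})$ where $\mathbf{b}(i)=\mathbf{x}(i)$), and then writes the count for $x+y=\rho n$ as $\binom{n-\rho n}{x}\binom{\rho n}{y}(q-1)^{x+y}$; under that two-parameter parametrization the $c_i$'s are fixed to equal $\mathbf{x}(i)$, so the factor $(q-1)^y$ is unaccounted for, and more fundamentally vectors that are nonzero on $\sup(\mathbf{x})$ but take a value different from $\mathbf{x}(i)$ are not represented at all. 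Your three-parameter decomposition with the additional index $c$ (positions in $\sup(\mathbf{x})$ where $\mathbf{y}$ is nonzero but disagrees with $\mathbf{x}$) accounts for exactly those vectors, derives the correct constraint $c\leq j-i$ from $d(\mathbf{y},\mathbf{x})\leq\rho n$, and then explains the factor $(q-1)^{\rho n}$ as a deliberate relaxation of that constraint via the binomial identity $\sum_{c=0}^{j}\binom{j}{c}(q-2)^c=(q-1)^j$. This makes explicit that the main term is an upper bound on $A$ rather than its exact value, so $\theta(q)=B-(A'-A)\leq B\leq V_q(n,\rho-1/n)$, which is precisely what the lemma asserts and what the subsequent proof of Lemma~\ref{lower-bound-lemma} actually uses. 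In short, your argument reaches the same bound by the same route, but it repairs a gap in the paper's enumeration rather than reproducing it.
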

\begin{proof}
See Appendix~\ref{Opp}.
\end{proof}

With this lemma in place, we now note that
\begin{align}
     |\mathcal{B}(\mathbf{x},\rho) \backslash{ \mathcal{B}(0,\rho)}| &\overset{(a)}{=}V_q(n,\rho) - |\mathcal{I}(\rho,\rho)|
     \\ &\overset{(b)}{=}V_q(n,\rho - 1/n) + { {n-\rho n} \choose \rho n} \\&- \sum_{i+j = \rho n,\; i \leq j}{ {n-\rho n} \choose i}{ \rho n \choose j}(q-1)^{\rho n} - \theta(q)\\& \overset{(c)}{\geq}
     \sum_{i+j = \rho n,\; i > j}{ {n-\rho n} \choose i}{ \rho n \choose j}(q-1)^{\rho n}
     \\ & \overset{(d)}{=}    \sum^{\lfloor \frac{\rho n }{2} \rfloor -1 }_{j = \min\{0, 2 \rho n -n\}}{ {n-\rho n} \choose \rho n -j}{ \rho n \choose j}(q-1)^{\rho n}
\end{align}
where (a) follows from a basic set cardinality rule, (b) follows from~\eqref{intersection-argument}, (c) follows from Lemma~\eqref{Forq-intersection} which tells us that $\theta(q) \leq V_q(n, \rho - 1/n)$, and from the fact that ${ {n-\rho n} \choose \rho n}  =   \sum_{i+j = \rho n}{ {n-\rho n} \choose i}{ \rho n \choose j}(q-1)^{\rho n}$, and (d) holds since $ j\geq 0$ and at the same time $i \leq n - \rho n$ which gives $j \geq 2\rho n -n$.

Now from Stirling's bound, we know that for any $j \in [\min\{0,2 \rho n - n \} , \lfloor\frac{\rho n}{2}\rfloor - 1]$, the following holds 
\begin{align}
    { {n-\rho n} \choose \rho n -j}{ \rho n \choose j}& \geq \sqrt{\frac{n-\rho n}{8 (\rho n - j) (n-2 \rho n + j)}} 2^{n{H((\rho n - j)/ (n - \rho n))}}\\& \times
    \sqrt{\frac{\rho n}{8  (j) (\rho n-j)}} 2^{n{H(j/\rho n)}} \\&=2^{n [H((\rho n -j)/\rho n) + H((\rho n -j)/(n - \rho n))]- o(n)}.\label{exponent-bound}
\end{align}
After defining 
\begin{align}
    \kappa \triangleq \frac{\rho n - j}{\rho n}
\end{align}
the exponent in~\eqref{exponent-bound} takes the form
\begin{align}
    n[H(\kappa) + H(\kappa \frac{\rho }{1 - \rho})] - o(n). \label{exponent}
\end{align}
With this exponent in place, let us consider a large $n$ and let us assume without loss of generality\footnote{This assumption, along with the assumption that $n\rho$ is an integer, has no impact on the result, because any non-integer residual will vanish in importance as $n$ increases.} that $n \rho^{2} \in \mathbb{N}$. For the case where $ 0<\rho \leq \frac{1}{2} $, let us set $j$ such that $ j=n \rho^{2} \leq \lfloor\frac{n \rho }{2}\rfloor - 1$, in which case we get $\kappa = 1-\rho$. On the other hand, for the case where $ \frac{1}{2}<\rho \leq -1/2 + \sqrt{5}/2$, let us set $j$ such that $ 2 \rho n - n \leq j=n \rho^{}(1 - \rho) \leq \lfloor\frac{n \rho }{2}\rfloor - 1$, in which case $\kappa = \rho$. In each case, $\kappa$ is plugged in~\eqref{exponent}, and after utilizing~\eqref{exponent-bound}, we see that
\begin{align}
       \sum^{\lfloor \frac{\rho n }{2} \rfloor -1 }_{j = \min\{0, 2 \rho n -n\}}{ {n-\rho n} \choose \rho n -j}{ \rho n \choose j}(q-1)^{\rho n}  \geq q^{n H_q(\rho) - o(n)}
\end{align}
which holds for the range $0<\rho \leq -1/2 + \sqrt{5}/2$ which is the union of the above two regions in $\rho$.
\qed

% \section{Additional Proofs Explicitly for the Multi-Shot Setting}
\subsection{Proof of Proposition \ref{Proposition-T-1}}{\label{N}}
For $h(x) \triangleq -x \log_{q}(x)$, we know that
\begin{align}
 h(x)\leq H_q(x),\: 0\leq x \leq 1-1/q
\end{align}
and thus that
\begin{align}
 H_q^{-1}(x)\leq     h^{-1}(x)\label{inverse-bound} .
\end{align}
We also know that if $y =x \ln(x)$, then $x = e^{W(y)}$ where $W(.)$ is the Lambert function. Also note that since $h(x) = - \log_{q}(e) x \ln(x)$, we have that
\begin{align}
    h^{-1}(x) = e^{W(- \ln(q) x)}.\label{inverse-function}
\end{align}
Furthermore, for $c>0$ being a positive real number, we have that
\begin{align}
    \lim_{T \rightarrow{ \infty}} T e^{W(-c/T)} & \overset{(a)}{=} \lim_{T \rightarrow{\infty}} \frac{e^{W(-c/T)}}{1/T} \\
    &\overset{(b)}{=}\lim_{T \rightarrow{\infty}}\frac{e^{W(-c/T)} \frac{1}{-c/T + e^{W(-c/T)}} c T^{-2}}{- T^{-2}}\\
    &\overset{(c)}{=} \lim_{T \rightarrow{\infty}} \frac{c e^{W(-c/T)}}{c/T - e^{W(-c/T)}} \\
    &\overset{(d)}{=} \lim_{T \rightarrow{\infty}} \frac{c T e^{W(-c/T)}}{c - T e^{W(-c/T)}}\\
    & \overset{(e)}{=} \frac{\lim_{T \rightarrow{\infty}} c T e^{W(-c/T)}}{\lim_{T \rightarrow{\infty}} c - T e^{W(-c/T)}}\label{limit-eqiation}
\end{align}
where (a), (c) and (d) follow by basic algebraic rearranging, (b) follows from L’Hopital’s rule, and {\color{black}(e) follows from the Algebraic Limit Theorem.} 
{\color{black}The above implies that 
\begin{align}
     \lim_{T \rightarrow{ \infty}} T e^{W(-c/T)} = 0 \label{zero-limit}
\end{align}
}which allows us to conclude that
\begin{align}
    \lim_{T \rightarrow{ \infty}}
  T H_q^{-1}((c/T) & \overset{(a)}{\leq}    \lim_{T \rightarrow{ \infty}}
  Th^{-1}(c/T) \\
  &\overset{(b)}{\leq} \lim_{T \rightarrow{ \infty}} Te^{W(- \ln(q) c/T)} \\ &
  \overset{(c)}{=}0
\end{align}
where (a) follows from~\eqref{inverse-bound}, (b) from~\eqref{inverse-function}, and (c) from~\eqref{zero-limit}. Thus the proof is concluded.
\qed

\subsection{Proof of Proposition \ref{Derivative}}\label{O}
Starting from
\begin{align}
  c/T = H_q(f/T)  \label{reverse-formula}
\end{align}
we take the derivative with respect to $T$ on both sides, to get 
{\color{black}
\begin{align}
    c = \log_q(\frac{f/T}{1 - f/T} (q-1)) ({\frac{\partial f}{\partial T}} T - f). \label{derivated}
\end{align}
}After applying~\eqref{reverse-formula} into \eqref{derivated}, and after some basic algebraic rearranging, the proof is concluded.
\qed

{\color{black}\subsection{Proof of Lemma~\ref{Forq-intersection}} \label{Opp}
Let us consider two vectors $\mathbf{a}, \mathbf{b} \in \F^{n}$, where $\omega(\mathbf{a}) = \rho n$ and where $\mathbf{b} \in \mathcal{B}(\mathbf{0},\rho) \cap \mathcal{B}(\mathbf{a},\rho)$. Let 
\begin{align}
    \mathcal{B} &\triangleq\{\mathbf{b}(i), \in [n] \ : \ \mathbf{a}(i) =0 \ \& \   \mathbf{b}(i) \neq 0 \}\label{infront-zero-element}\\
    \mathcal{C} &\triangleq \{\mathbf{b}(i), \in [n] \ : \ \mathbf{a}(i) \neq 0 \ \& \  \mathbf{b}(i)=  \mathbf{a}(i) \} \label{infront-non-zero-element}
\end{align}
and let $x \triangleq |\mathcal{B}|, y \triangleq |\mathcal{C}|$. 
Let $b_j, j\in[x]$ denote the $j$th element of $\mathcal{B}$, and let $c_j, j\in[y]$ denote the $j$th element of $\mathcal{C}$. Note that ordering does not matter. 
Let $a_j, j\in[\rho n]$ be the non-zero elements of $\mathbf{a}$, and without loss of generality, let 
\begin{align}
    \mathbf{a} & = [0,\hdots,0,a_1,a_2,\hdots, a_{y},\hdots,a_{\rho n}]
\end{align}
as well as let 
\begin{align}
    \mathbf{b}  = [b_1,b_2,\hdots,b_x,0,0,\hdots,0, c_1,c_2,\hdots, c_{y},0,\hdots,0].
\end{align}
Since $\omega(\mathbf{b}) \leq \rho n$ and $d(\mathbf{a},\mathbf{b}) \leq \rho n$, we have that
\begin{align}
    x+ y \leq \rho n\label{constraint-1}\\
    x  \leq y\label{constraint-2}
\end{align}
and we have that for any fixed pair $(x,y)\in[\rho n]^{2}$, there are $ {{n - \rho n} \choose x}{ \rho n \choose y} (q-1)^{x+y} $ such points (that satisfy that given $(x,y)$) in $\mathcal{B}(\mathbf{0},\rho) \cap \mathcal{B}(\mathbf{a},\rho)$. Therefore, accumulating over all $(x,y)$ for which $x+y = \rho n$, the overall intersection gains cardinality $|\mathcal{B}(\mathbf{0},\rho) \cap \mathcal{B}(\mathbf{a},\rho)| = \sum_{i+j = \rho n,\; i \leq j}{ {n-\rho n} \choose i}{ \rho n \choose j}(q-1)^{\rho n}$, while when $x+y < \rho n$ --- due to~\eqref{constraint-2} --- this same accumulated intersection has at most $V_q(n,\rho-1/n)$ points. This concludes the proof.

\qed
}

\subsection{Statement and Proof of Lemma~\ref{nonInequalityNessecarry}}\label{Op}
\begin{Lemma} \label{nonInequalityNessecarry}
If $0<\rho \leq 1/2 $ and $\rho n \in \mathbb{N}$, then
\begin{align}
    \sum^{\lfloor\frac{n \rho}{2} \rfloor  -\mathbbm{1}[{\lfloor \frac{n\rho}{2}\rfloor =\frac{n\rho}{2} ]}}_{i=0}
     { n \rho \choose i}{ n-n \rho \choose i+1} \geq  \sum^{  n\rho  - \lfloor \frac{n \rho}{2}\rfloor  -1}_{i= 0} { n\rho \choose i}{ n -  n\rho \choose i}.
\end{align}
\end{Lemma}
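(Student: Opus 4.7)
\textbf{Proof plan for Lemma~\ref{nonInequalityNessecarry}.}

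The plan is to do a term-by-term comparison after first showing that the two sums have exactly the same range of summation indices, and then using the unimodality of the binomial coefficients $\binom{n-m}{k}$ where I abbreviate $m := n\rho$. The hypothesis $\rho \leq 1/2$ enters precisely as the inequality $m \leq n-m$, which is what makes the ``shifted'' binomial on the LHS dominate the unshifted one on the RHS.

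First I would verify that both sums run up to the common upper limit $I := \lfloor (m-1)/2\rfloor$. When $m$ is even, the indicator is $1$, so the LHS upper index is $m/2 - 1 = \lfloor(m-1)/2\rfloor$, while the RHS upper index is $m - m/2 - 1 = m/2 - 1$; these agree. When $m$ is odd, the indicator is $0$, so the LHS upper index is $\lfloor m/2\rfloor = (m-1)/2$, while the RHS upper index is $m - (m-1)/2 - 1 = (m-1)/2$; again they agree. So both sides are indexed by $i = 0, 1, \ldots, I$ with $I = \lfloor(m-1)/2\rfloor$.

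Next, since $\binom{m}{i} \geq 0$ is common to corresponding terms on both sides, it suffices to establish the pointwise inequality $\binom{n-m}{i+1} \geq \binom{n-m}{i}$ for all $0 \leq i \leq I$. Using the elementary ratio identity
\begin{align*}
\frac{\binom{n-m}{i+1}}{\binom{n-m}{i}} \;=\; \frac{n-m-i}{i+1},
\end{align*}
this is equivalent to the condition $i \leq (n-m-1)/2$. Thus what remains is to check $I \leq (n-m-1)/2$, i.e.\ $2I + 1 \leq n - m$. In the odd-$m$ case, $2I + 1 = m$, and the condition reduces to $2m \leq n$, which is exactly the hypothesis $\rho \leq 1/2$. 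In the even-$m$ case, $2I + 1 = m - 1$, and the condition reduces to $2m \leq n + 1$, which again follows from $2m \leq n$. Either way, the desired monotonicity of $\binom{n-m}{\cdot}$ on the relevant range is guaranteed.

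I do not anticipate a genuine obstacle; the main thing to be careful about is (i) getting the two upper limits to line up after accounting for the indicator, and (ii) making sure the comparison $\binom{n-m}{i+1}\geq\binom{n-m}{i}$ holds \emph{at} the endpoint $i = I$ and not merely strictly below it, which is where the parity-dependent computation above is needed. A minor sanity check at the boundary ($n - m = 0$, which would force $\rho = 1$ and hence be excluded, or very small $n$) confirms no degeneracy arises. Summing the pointwise inequalities over $i = 0, \ldots, I$ then yields the claim.
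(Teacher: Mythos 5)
Your proof is correct and takes essentially the same route as the paper: both proofs observe that the two sums share the same upper summation limit and then compare term-by-term, reducing the inequality to the monotonicity $\binom{n-n\rho}{i+1}\geq\binom{n-n\rho}{i}$ on the relevant range of $i$, which holds because $\rho\leq 1/2$ keeps $i$ in the first half of the binomial range. Your organization is marginally cleaner — by working with $I=\lfloor(n\rho-1)/2\rfloor$ and the single condition $2I+1\leq n-n\rho$, you avoid the paper's extra sub-case split on the parity of $n$ within the odd-$n\rho$ case — but the underlying argument is identical.
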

\begin{proof}
We solve the problem by considering the following cases. 

\emph{Case 1: ($2 \ | \ \rho n$).} We note that if $2|\rho n$, then $\lfloor \frac{n\rho}{2}\rfloor = \frac{n \rho}{2}$, which gives 
    \begin{align}
        \sum^{\frac{n \rho}{2}   -1}_{i=0}
     { n \rho \choose i}{ n-n \rho \choose i+1} \geq  \sum^{
    \frac{n\rho}{2}   -1}_{i= 0} { n\rho \choose i}{ n -  n\rho \choose i}.
    \end{align}
    Since $0<\rho  \leq \frac{1}{2}$, we have that $i+1 \leq( n - n\rho )/2, \ \forall i \in [0: n \rho /2-1]$, which gives 
    \begin{align}
          { n-n \rho \choose i+1}  \geq  { n-n \rho \choose i}
      \end{align}
      to conclude the proof for this case.

\emph{Case 2: ($2\nmid\rho n$).} {\color{black}Here we note that having $2\nmid\rho n$, implies $\lfloor \frac{n\rho}{2}\rfloor = \frac{n \rho}{2} - 0.5$} which gives 
    \begin{align}
        \sum^{\frac{n \rho}{2} - 0.5}_{i=0}
     { n \rho \choose i}{ n-n \rho \choose i+1} \geq  \sum^{\frac{n \rho}{2} - 0.5}_{i= 0} { n\rho \choose i}{ n -  n\rho \choose i}.
    \end{align}
    For this case, we consider the following two subcases. 

{\color{black}    \emph{Case 2a: ($n$ is odd).} We first note that if $n$ is an odd number, then $n -n\rho$ is an even number. Also since $\rho n  \in \mathbb{N}$, $0<\rho<1/2$ and $n \rho \leq \frac{n-1}{2}$, then $2(\frac{n\rho}{2} + 0.5) \leq n -n \rho$. Thus as before we can say that $\forall i \in [0 : n\rho /2 - 0.5], i +1 \leq (n - n\rho)/2$, which gives
         \begin{align}
          { n-n \rho \choose i+1}  \geq  { n-n \rho \choose i}
      \end{align}
      which in turn concludes the proof for this case. 
    
\emph{Case 2b: ($n$ is even).}
        If $n$ is even then $n -n\rho$ is odd, and thus we can again say that having $0< \rho \leq 1/2 $ gives $i +1 \leq (n - n\rho +1)/2, \ \forall i \in [0 : n\rho /2 - 0.5]$, which gives 
        \begin{align}
          { n-n \rho \choose i+1}  \geq  { n-n \rho \choose i}
      \end{align}
which in turn completes the proof for this final case also.
}
\end{proof}
\bibliographystyle{ieeetr}
\bibliography{ref}

\begin{thebibliography}{10}

\bibitem{dean2008mapreduce}
J.~Dean and S.~Ghemawat, ``Mapreduce: simplified data processing on large
  clusters,'' {\em Communications of the ACM}, vol.~51, no.~1, pp.~107--113,
  2008.

\bibitem{zaharia2010spark}
M.~Zaharia, M.~Chowdhury, M.~J. Franklin, S.~Shenker, and I.~Stoica, ``Spark:
  Cluster computing with working sets,'' in {\em 2nd USENIX Workshop on Hot
  Topics in Cloud Computing (HotCloud 10)}, 2010.

\bibitem{jahani2021codedsketch}
T.~Jahani-Nezhad and M.~A. Maddah-Ali, ``Codedsketch: A coding scheme for
  distributed computation of approximated matrix multiplication,'' {\em IEEE
  Transactions on Information Theory}, vol.~67, no.~6, pp.~4185--4196, 2021.

\bibitem{wang2021price}
J.~Wang, Z.~Jia, and S.~A. Jafar, ``Price of precision in coded distributed
  matrix multiplication: A dimensional analysis,'' in {\em 2021 IEEE
  Information Theory Workshop (ITW)}, pp.~1--6, IEEE, 2021.

\bibitem{ozfatura2021coded}
E.~Ozfatura, S.~Ulukus, and D.~G{\"u}nd{\"u}z, ``Coded distributed computing
  with partial recovery,'' {\em IEEE Transactions on Information Theory},
  vol.~68, no.~3, pp.~1945--1959, 2022.

\bibitem{li2017scalable}
S.~Li, Q.~Yu, M.~A. Maddah-Ali, and A.~S. Avestimehr, ``A scalable framework
  for wireless distributed computing,'' {\em IEEE/ACM Transactions on
  Networking}, vol.~25, no.~5, pp.~2643--2654, 2017.

\bibitem{haddadpour2019trading}
F.~Haddadpour, M.~M. Kamani, M.~Mahdavi, and V.~Cadambe, ``Trading redundancy
  for communication: Speeding up distributed sgd for non-convex optimization,''
  in {\em International Conference on Machine Learning}, pp.~2545--2554, PMLR,
  2019.

\bibitem{yang2020coded}
C.-S. Yang, R.~Pedarsani, and A.~S. Avestimehr, ``Coded computing in unknown
  environment via online learning,'' in {\em 2020 IEEE International Symposium
  on Information Theory (ISIT)}, pp.~185--190, IEEE, 2020.

\bibitem{charalambides2021numerically}
N.~Charalambides, H.~Mahdavifar, and A.~O. Hero~III, ``Numerically stable
  binary coded computations,'' {\em arXiv preprint arXiv:2109.10484}, 2021.

\bibitem{soleymani2021analog}
M.~Soleymani, H.~Mahdavifar, and A.~S. Avestimehr, ``Analog lagrange coded
  computing,'' {\em IEEE Journal on Selected Areas in Information Theory},
  vol.~2, no.~1, pp.~283--295, 2021.

\bibitem{sun2018capacity}
H.~Sun and S.~A. Jafar, ``The capacity of private computation,'' {\em IEEE
  Transactions on Information Theory}, vol.~65, no.~6, pp.~3880--3897, 2018.

\bibitem{soleymani2020distributed}
M.~Soleymani and H.~Mahdavifar, ``Distributed multi-user secret sharing,'' {\em
  IEEE Transactions on Information Theory}, vol.~67, no.~1, pp.~164--178, 2020.

\bibitem{khalesi2021capacity}
A.~Khalesi, M.~Mirmohseni, and M.~A. Maddah-Ali, ``The capacity region of
  distributed multi-user secret sharing,'' {\em IEEE Journal on Selected Areas
  in Information Theory}, vol.~2, no.~3, pp.~1057--1071, 2021.

\bibitem{soleymani2020privacy}
M.~Soleymani, H.~Mahdavifar, and A.~S. Avestimehr, ``Privacy-preserving
  distributed learning in the analog domain,'' {\em arXiv preprint
  arXiv:2007.08803}, 2020.

\bibitem{soleymani2021list}
M.~Soleymani, R.~E. Ali, H.~Mahdavifar, and A.~S. Avestimehr, ``List-decodable
  coded computing: Breaking the adversarial toleration barrier,'' {\em IEEE
  Journal on Selected Areas in Information Theory}, vol.~2, no.~3,
  pp.~867--878, 2021.

\bibitem{bitar2022adaptive}
R.~Bitar, M.~Xhemrishi, and A.~Wachter-Zeh, ``Adaptive private distributed
  matrix multiplication,'' {\em IEEE Transactions on Information Theory},
  vol.~68, no.~4, pp.~2653--2673, 2022.

\bibitem{hofmeister2021secure}
C.~Hofmeister, R.~Bitar, M.~Xhemrishi, and A.~Wachter-Zeh, ``Secure private and
  adaptive matrix multiplication beyond the singleton bound,'' {\em arXiv
  preprint arXiv:2108.05742}, 2021.

\bibitem{akbari2021secure}
H.~Akbari-Nodehi and M.~A. Maddah-Ali, ``Secure coded multi-party computation
  for massive matrix operations,'' {\em IEEE Transactions on Information
  Theory}, vol.~67, no.~4, pp.~2379--2398, 2021.

\bibitem{jia2021capacity}
Z.~Jia and S.~A. Jafar, ``On the capacity of secure distributed batch matrix
  multiplication,'' {\em IEEE Transactions on Information Theory}, vol.~67,
  no.~11, pp.~7420--7437, 2021.

\bibitem{chen2021gcsa}
Z.~Chen, Z.~Jia, Z.~Wang, and S.~A. Jafar, ``Gcsa codes with noise alignment
  for secure coded multi-party batch matrix multiplication,'' {\em IEEE Journal
  on Selected Areas in Information Theory}, vol.~2, no.~1, pp.~306--316, 2021.

\bibitem{yang2021coded}
C.-S. Yang and A.~S. Avestimehr, ``Coded computing for secure boolean
  computations,'' {\em IEEE Journal on Selected Areas in Information Theory},
  vol.~2, no.~1, pp.~326--337, 2021.

\bibitem{yu2020coded}
Q.~Yu and A.~S. Avestimehr, ``Coded computing for resilient, secure, and
  privacy-preserving distributed matrix multiplication,'' {\em IEEE
  Transactions on Communications}, vol.~69, no.~1, pp.~59--72, 2020.

\bibitem{xhemrishi2022distributed}
M.~Xhemrishi, R.~Bitar, and A.~Wachter-Zeh, ``Distributed matrix-vector
  multiplication with sparsity and privacy guarantees,'' {\em arXiv preprint
  arXiv:2203.01728}, 2022.

\bibitem{raviv2020gradient}
N.~Raviv, I.~Tamo, R.~Tandon, and A.~G. Dimakis, ``Gradient coding from cyclic
  mds codes and expander graphs,'' {\em IEEE Transactions on Information
  Theory}, vol.~66, no.~12, pp.~7475--7489, 2020.

\bibitem{lee2017speeding}
K.~Lee, M.~Lam, R.~Pedarsani, D.~Papailiopoulos, and K.~Ramchandran, ``Speeding
  up distributed machine learning using codes,'' {\em IEEE Transactions on
  Information Theory}, vol.~64, no.~3, pp.~1514--1529, 2017.

\bibitem{egger2022efficient}
M.~Egger, R.~Bitar, A.~Wachter-Zeh, and D.~G{\"u}nd{\"u}z, ``Efficient
  distributed machine learning via combinatorial multi-armed bandits,'' {\em
  arXiv preprint arXiv:2202.08302}, 2022.

\bibitem{kai1}
K.~Wan, H.~Sun, M.~Ji, and G.~Caire, ``Distributed linearly separable
  computation,'' {\em IEEE Transactions on Information Theory}, vol.~68, no.~2,
  pp.~1259--1278, 2022.

\bibitem{yu2020straggler}
Q.~Yu, M.~A. Maddah-Ali, and A.~S. Avestimehr, ``Straggler mitigation in
  distributed matrix multiplication: Fundamental limits and optimal coding,''
  {\em IEEE Transactions on Information Theory}, vol.~66, no.~3,
  pp.~1920--1933, 2020.

\bibitem{yu2017polynomial}
Q.~Yu, M.~Maddah-Ali, and S.~Avestimehr, ``Polynomial codes: an optimal design
  for high-dimensional coded matrix multiplication,'' {\em Advances in Neural
  Information Processing Systems}, vol.~30, 2017.

\bibitem{jia2021cross}
Z.~Jia and S.~A. Jafar, ``Cross subspace alignment codes for coded distributed
  batch computation,'' {\em IEEE Transactions on Information Theory}, vol.~67,
  no.~5, pp.~2821--2846, 2021.

\bibitem{behrouzi2020efficient}
A.~Behrouzi-Far and E.~Soljanin, ``Efficient replication for straggler
  mitigation in distributed computing,'' {\em arXiv preprint arXiv:2006.02318},
  2020.

\bibitem{ng2020survey}
J.~S. Ng, W.~Y.~B. Lim, N.~C. Luong, Z.~Xiong, A.~Asheralieva, D.~Niyato,
  C.~Leung, and C.~Miao, ``A survey of coded distributed computing,'' {\em
  arXiv preprint arXiv:2008.09048}, 2020.

\bibitem{CIT-103}
S.~Li and S.~Avestimehr, {\em Coded Computing: Mitigating Fundamental
  Bottlenecks in Large-Scale Distributed Computing and Machine Learning},
  vol.~17.
\newblock 2020.

\bibitem{8051074}
S.~Li, M.~A. Maddah-Ali, Q.~Yu, and A.~S. Avestimehr, ``A fundamental tradeoff
  between computation and communication in distributed computing,'' {\em IEEE
  Transactions on Information Theory}, vol.~64, no.~1, pp.~109--128, 2018.

\bibitem{8437333}
E.~Parrinello, E.~Lampiris, and P.~Elia, ``Coded distributed computing with
  node cooperation substantially increases speedup factors,'' in {\em 2018 IEEE
  International Symposium on Information Theory (ISIT)}, pp.~1291--1295, 2018.

\bibitem{yao2009communication}
A.~C.-C. Yao, ``Communication complexity and its applications,'' in {\em
  International Workshop on Frontiers in Algorithmics}, pp.~2--2, Springer,
  2009.

\bibitem{verbraeken2020survey}
J.~Verbraeken, M.~Wolting, J.~Katzy, J.~Kloppenburg, T.~Verbelen, and J.~S.
  Rellermeyer, ``A survey on distributed machine learning,'' {\em ACM Computing
  Surveys (CSUR)}, vol.~53, no.~2, pp.~1--33, 2020.

\bibitem{ulukus2022private}
S.~Ulukus, S.~Avestimehr, M.~Gastpar, S.~Jafar, R.~Tandon, and C.~Tian,
  ``Private retrieval, computing and learning: Recent progress and future
  challenges,'' {\em IEEE Journal on Selected Areas in Communications}, 2022.

\bibitem{wang2018fundamental}
S.~Wang, J.~Liu, N.~Shroff, and P.~Yang, ``Fundamental limits of coded linear
  transform,'' {\em arXiv preprint arXiv:1804.09791}, 2018.

\bibitem{li2017fundamental}
S.~Li, M.~A. Maddah-Ali, Q.~Yu, and A.~S. Avestimehr, ``A fundamental tradeoff
  between computation and communication in distributed computing,'' {\em IEEE
  Transactions on Information Theory}, vol.~64, no.~1, pp.~109--128, 2017.

\bibitem{wan2022cache}
K.~Wan, H.~Sun, M.~Ji, D.~Tuninetti, and G.~Caire, ``Cache-aided matrix
  multiplication retrieval,'' {\em IEEE Transactions on Information Theory},
  vol.~68, no.~7, pp.~4301--4319, 2022.

\bibitem{yu2017polynomial2}
Q.~Yu, M.~Maddah-Ali, and S.~Avestimehr, ``Polynomial codes: an optimal design
  for high-dimensional coded matrix multiplication,'' {\em Advances in Neural
  Information Processing Systems}, vol.~30, 2017.

\bibitem{dutta2019optimal}
S.~Dutta, M.~Fahim, F.~Haddadpour, H.~Jeong, V.~Cadambe, and P.~Grover, ``On
  the optimal recovery threshold of coded matrix multiplication,'' {\em IEEE
  Transactions on Information Theory}, vol.~66, no.~1, pp.~278--301, 2019.

\bibitem{reisizadeh2021codedreduce}
A.~Reisizadeh, S.~Prakash, R.~Pedarsani, and A.~S. Avestimehr, ``Codedreduce: A
  fast and robust framework for gradient aggregation in distributed learning,''
  {\em IEEE/ACM Transactions on Networking}, 2021.

\bibitem{woolsey2021new}
N.~Woolsey, R.-R. Chen, and M.~Ji, ``A new combinatorial coded design for
  heterogeneous distributed computing,'' {\em IEEE Transactions on
  Communications}, vol.~69, no.~9, pp.~5672--5685, 2021.

\bibitem{woolsey2021coded}
N.~Woolsey, R.-R. Chen, and M.~Ji, ``Coded elastic computing on machines with
  heterogeneous storage and computation speed,'' {\em IEEE Transactions on
  Communications}, vol.~69, no.~5, pp.~2894--2908, 2021.

\bibitem{woolsey2021practical}
N.~Woolsey, J.~Kliewer, R.-R. Chen, and M.~Ji, ``A practical algorithm design
  and evaluation for heterogeneous elastic computing with stragglers,'' {\em
  arXiv preprint arXiv:2107.08496}, 2021.

\bibitem{chen2021distributed}
M.~Chen, D.~G{\"u}nd{\"u}z, K.~Huang, W.~Saad, M.~Bennis, A.~V. Feljan, and
  H.~V. Poor, ``Distributed learning in wireless networks: Recent progress and
  future challenges,'' {\em IEEE Journal on Selected Areas in Communications},
  vol.~39, no.~12, pp.~3579--3605, 2021.

\bibitem{zinkevich2010parallelized}
M.~Zinkevich, M.~Weimer, L.~Li, and A.~Smola, ``Parallelized stochastic
  gradient descent,'' in {\em Advances in Neural Information Processing
  Systems}, vol.~23, 2010.

\bibitem{chilimbi2014project}
T.~Chilimbi, Y.~Suzue, J.~Apacible, and K.~Kalyanaraman, ``Project {A}dam:
  Building an efficient and scalable deep learning training system,'' in {\em
  11th USENIX Symposium on Operating Systems Design and Implementation (OSDI
  14)}, pp.~571--582, 2014.

\bibitem{tandon2017gradient}
R.~Tandon, Q.~Lei, A.~G. Dimakis, and N.~Karampatziakis, ``Gradient coding:
  Avoiding stragglers in distributed learning,'' in {\em International
  Conference on Machine Learning}, pp.~3368--3376, PMLR, 2017.

\bibitem{ye2018communication}
M.~Ye and E.~Abbe, ``Communication-computation efficient gradient coding,'' in
  {\em International Conference on Machine Learning}, pp.~5610--5619, PMLR,
  2018.

\bibitem{halbawi2018improving}
W.~Halbawi, N.~Azizan, F.~Salehi, and B.~Hassibi, ``Improving distributed
  gradient descent using reed-solomon codes,'' in {\em 2018 IEEE International
  Symposium on Information Theory (ISIT)}, pp.~2027--2031, IEEE, 2018.

\bibitem{dutta2016short}
S.~Dutta, V.~Cadambe, and P.~Grover, ``Short-dot: Computing large linear
  transforms distributedly using coded short dot products,'' {\em Advances In
  Neural Information Processing Systems}, vol.~29, 2016.

\bibitem{ramamoorthy2019universally}
A.~Ramamoorthy, L.~Tang, and P.~O. Vontobel, ``Universally decodable matrices
  for distributed matrix-vector multiplication,'' in {\em 2019 IEEE
  International Symposium on Information Theory (ISIT)}, pp.~1777--1781, IEEE,
  2019.

\bibitem{das2019distributed}
A.~B. Das and A.~Ramamoorthy, ``Distributed matrix-vector multiplication: A
  convolutional coding approach,'' in {\em 2019 IEEE International Symposium on
  Information Theory (ISIT)}, pp.~3022--3026, IEEE, 2019.

\bibitem{haddadpour2018codes}
F.~Haddadpour and V.~R. Cadambe, ``Codes for distributed finite alphabet
  matrix-vector multiplication,'' in {\em 2018 IEEE International Symposium on
  Information Theory (ISIT)}, pp.~1625--1629, IEEE, 2018.

\bibitem{wang2018coded}
S.~Wang, J.~Liu, and N.~Shroff, ``Coded sparse matrix multiplication,'' in {\em
  International Conference on Machine Learning}, pp.~5152--5160, PMLR, 2018.

\bibitem{ramamoorthy2020straggler}
A.~Ramamoorthy, A.~B. Das, and L.~Tang, ``Straggler-resistant distributed
  matrix computation via coding theory: Removing a bottleneck in large-scale
  data processing,'' {\em IEEE Signal Processing Magazine}, vol.~37, no.~3,
  pp.~136--145, 2020.

\bibitem{blinovskii1987lower}
V.~M. Blinovskii, ``Lower asymptotic bound on the number of linear code words
  in a sphere of given radius in (f\_q)\^{}n,'' {\em Problemy Peredachi
  Informatsii}, vol.~23, no.~2, pp.~50--53, 1987.

\bibitem{cohen1997covering}
G.~Cohen, I.~Honkala, S.~Litsyn, and A.~Lobstein, {\em Covering codes}.
\newblock Elsevier, 1997.

\bibitem{roth2006introduction}
R.~M. Roth, ``Introduction to coding theory,'' {\em IET Communications},
  vol.~47, 2006.

\bibitem{cohen1985good}
G.~Cohen and P.~Frankl, ``Good coverings of hamming spaces with spheres,'' {\em
  Discrete Mathematics}, vol.~56, no.~2-3, pp.~125--131, 1985.

\bibitem{kai2}
K.~Wan, H.~Sun, M.~Ji, and G.~Caire, ``On the tradeoff between computation and
  communication costs for distributed linearly separable computation,'' {\em
  IEEE Transactions on Communications}, vol.~69, no.~11, pp.~7390--7405, 2021.

\bibitem{kai3}
K.~Wan, H.~Sun, M.~Ji, and G.~Caire, ``On secure distributed linearly separable
  computation,'' {\em IEEE Journal on Selected Areas in Communications},
  vol.~40, no.~3, pp.~912--926, 2022.

\bibitem{das2013finite}
A.~K. Das and S.~Vishwanath, ``On finite alphabet compressive sensing,'' in
  {\em 2013 IEEE International Conference on Acoustics, Speech and Signal
  Processing}, pp.~5890--5894, IEEE, 2013.

\bibitem{dimakisCompressed}
A.~G. Dimakis, R.~Smarandache, and P.~O. Vontobel, ``Ldpc codes for compressed
  sensing,'' {\em IEEE Transactions on Information Theory}, vol.~58, no.~5,
  pp.~3093--3114, 2012.

\bibitem{reisizadeh2019tree}
A.~Reisizadeh, S.~Prakash, R.~Pedarsani, and A.~S. Avestimehr, ``Tree gradient
  coding,'' in {\em 2019 IEEE International Symposium on Information Theory
  (ISIT)}, pp.~2808--2812, IEEE, 2019.

\bibitem{cohen1983nonconstructive}
G.~Cohen, ``A nonconstructive upper bound on covering radius,'' {\em IEEE
  Transactions on Information Theory}, vol.~29, no.~3, pp.~352--353, 1983.

\bibitem{kolmogoroff1956foundations}
A.~N. Kolmogoroff, {\em Foundations of the Theory of Probability}.
\newblock Chelsea Publishing Company, 1956.

\bibitem{danoyan2013some}
H.~E. Danoyan, ``On some properties of intersection and union of spheres in
  hamming metric,'' {\em Mathematical Problems of Computer Science}, vol.~39,
  pp.~119--124, 2013.

\end{thebibliography}
\end{document}